\newcommand{\Rr}{{\mathbb R}}
\newcommand{\Cc}{{\mathbb C}}
\newcommand{\half}{{\textstyle \frac{1}{2}}}
\def\ci{\mathrm{i}}
\newcommand{\sgn}{\,\text{sgn}\,}
\renewcommand{\exp}{\mathrm{e}}
\newcommand{\evec}{{\bf e}}
\newcommand{\eps}{{\epsilon}}
\newcommand{\svec}{{\bf s}}
\newcommand{\wbar}{{\bar w}}
\newcommand{\tbar}{{\bar t}}
\newcommand{\Ecal}{{\cal E}}
\newcommand{\Hcal}{{\cal H}}
\newcommand{\jhat}{{\hat \jmath}}
\newcommand{\Ccal}{{\cal C}}
\newcommand{\Pcal}{{\cal P}}
\newcommand{\Qcal}{{\cal Q}}
\newcommand{\Vcal}{{\cal V}}
\newcommand{\Vcalb}{{\bf \cal V}}
\newcommand{\Wcal}{{\cal W}}
\newcommand{\Rcal}{{\cal R}}
\newcommand{{\ad}}{\,\text{ad}}
\newcommand{{\adj}}{\,\text{adj}\,}
\newcommand{\nuvec}{{\boldsymbol \nu}}
\newcommand{\degree}{\,{\text {deg}}}
\newcommand{\What}{{S}}
\newcommand{\Whatb}{{\bf S}}
\newcommand{\Pcalb}{{\bf \cal P}}
\newcommand{\Ub}{{\bf U}}
\newcommand{\Ubp}{{\bf U'}}
\newcommand{\hb}{{\bf h}}
\newcommand{\hbi}{{\bf h^{-1}}}
\newcommand{\hbsi}{{\bf h_i}}
\newcommand{\Vb}{{\bf V}}
\newcommand{\Vbp}{{\bf V'}}
\newcommand{\Wb}{{\bf W}}
\newcommand{\Ical}{{\cal I}}
\newcommand{\Lcal}{{\cal L}}
\newcommand\xhat{{\hat { \mathbf{x}}}}
\newcommand\yhat{{\hat { \mathbf{y}}}}
\newcommand\zhat{{\hat { \mathbf{z}}}}
\newtheorem{thm}{Theorem}%[section]
\newtheorem{lem}{Lemma}[subsection]
\newtheorem{prop}{Proposition}[subsection]
\newtheorem{cor}{Corollary}%[section]
\theoremstyle{definition}
\theoremstyle{remark}
\def\comment#1{}
\def\withcomments{
\addtolength{\oddsidemargin}{-0.5 in}
\addtolength{\evensidemargin}{-0.5 in}
\newcounter{mycommentcounter}
\def\comment##1{\refstepcounter{mycommentcounter}%
  \ifhmode%
  \unskip%
  {\dimen1=\baselineskip \divide\dimen1 by 2 %
    \raise\dimen1\llap{\tiny -\themycommentcounter-}}\fi%
  \marginpar{\renewcommand{\baselinestretch}{0.8}%
    \footnotesize [\themycommentcounter]: \raggedright ##1}}
%\date{\framebox{Draft of \today}}
}
\begin{document}

\title{Tangent unit-vector fields: nonabelian homotopy invariants and the Dirichlet energy}
\author{A Majumdar$^{\dag}$,
JM Robbins$^\ddag$ \& M Zyskin$^\sharp$
\thanks{ majumdar@maths.ox.ac.uk, j.robbins@bristol.ac.uk, zyskin@yahoo.com}\\
$\dag$ Mathematical Institute\\ University of Oxford, 24 -- 29
St.Giles,
Oxford OX1 3LB, UK\\
and\\
$\ddag$ School of Mathematics,\\
 University of Bristol, University Walk, Bristol BS8 1TW, UK \\
 $^\sharp$ Department of Mathematics,  SETB 2.454 - 80 Fort Brown, Brownsville,
TX 78520, USA}

\thispagestyle{empty} \maketitle

\begin{abstract}
Let $O$ be a closed geodesic polygon in $S^2$.  Maps from $O$ into
$S^2$ are said to satisfy tangent boundary conditions if the edges
of $O$ are mapped into the geodesics which contain them.  Taking $O$
to be an octant of $S^2$, we compute the infimum Dirichlet energy,
$\Ecal(H)$, for continuous maps satisfying tangent boundary
conditions of arbitrary homotopy type $H$. The expression for
$\Ecal(H)$ involves a topological invariant -- the spelling length
-- associated with the (nonabelian) fundamental group of the
$n$-times punctured two-sphere, $\pi_1(S^2 - \{s_1,\ldots, s_n\},
*)$. The lower bound for $\Ecal(H)$ is obtained from combinatorial
group theory arguments, while the upper bound is obtained by
constructing explicit representatives which, on all but an
arbitrarily small subset of $O$, are alternatively locally conformal
or anticonformal. For conformal and anticonformal classes (classes
containing wholly conformal and anticonformal representatives
respectively),  the expression for $\Ecal(H)$ reduces to a previous
result involving the degrees of a set of regular values $s_1,
\ldots, s_n$ in the target $S^2$ space.  These degrees may be viewed
as invariants associated with the abelianization of $\pi_1(S^2 -
\{s_1,\ldots, s_n\}, *)$. For nonconformal classes, however,
$\Ecal(H)$ may be strictly greater than the abelian bound.  This
stems from the fact that, for nonconformal maps, the number of
preimages of certain regular values may necessarily be strictly
greater than the absolute value of their degrees.

This work is motivated by the theoretical modelling of nematic
liquid crystals in confined polyhedral geometries. The results imply
new lower and upper bounds for the Dirichlet energy (one-constant
Oseen-Frank energy) of reflection-symmetric tangent unit-vector
fields in a rectangular prism.
\end{abstract}

\section{Statement of Results}\label{sec: statement}

Let $\Ccal(S^2,S^2)$ denote the space of continuous maps of the
two-sphere into itself. As is well known,  maps in $\Ccal(S^2,S^2)$
are classified up to homotopy by their degree, and the infimum of
the Dirichlet energy, $\int_{S^2} |\phi'|^2 dA$, for maps $\phi$ of
given degree $d$ is equal to $8\pi d$ (the area element $dA$ is
normalised so that $S^2$ has area $4\pi$). Critical points of the
Dirichlet energy include conformal or anticonformal maps, and the
infimum energy may be realised by conformal maps (for $d \ge 0$) and
anticonformal maps (for $d \le 0)$ on $S^2$.

In this paper we study an elaboration of this problem, motivated, as
explained in Section~\ref{sec:LC} below, by models of nematic liquid
crystals in confined polyhedral geometries. Consider a set of $f$
geodesics (great circles) on  $S^2$.  The geodesics divide $S^2$
into a collection of closed spherical polygons (Euler's theorem
implies that, generically, there are $f^2 - f + 2$ polygons). Let
$O$ denote one such polygon.  A map $\nu: O \rightarrow S^2$ is said
to satisfy {\it tangent boundary conditions} if $\nu$ maps each edge
of $O$ into the geodesic which contains it.  Let $\Ccal_T(O,S^2)$
denote the set of continuous maps $\nu: O \rightarrow S^2$ which
satisfy tangent boundary conditions.  We may then ask, for a given
homotopy class $H$ in $\Ccal_T(O,S^2)$, what is the infimum of the
Dirichlet energy?

In this paper we will take $O$ to be the positive coordinate octant,
\begin{equation}
O = \{ r \in S^2 \subset \Rr^3\, | r
_j \ge 0\}, \label{eq:O}
\end{equation}
whose edges lie on the three coordinate geodesics, ie the unit
circles about the origin in the $xy$-, $yz$- and $zx$-planes.  The
vertices of $O$ are the coordinate unit vectors $\xhat$, $\yhat$ and
$\zhat$.

The homotopy classification of $\Ccal_T(O,S^2)$
%is more involved than for $\Ccal(S^2,S^2)$, and
is described in \cite{rz, mrz3,mrz4}.  Let us summarize the relevant
results.  First, given $\nu \in \Ccal_T(O,S^2)$, consider the values
of $\nu$ at the vertices of $O$. Tangent boundary conditions imply
that $\nu(\jhat) = e_j \jhat$, where $e_j = \pm 1$. The $e_j$'s,
called \emph{edge signs}, are homotopy invariants. Consider next the
values of $\nu$ on the edges of $O$. The image of the $yz$-edge (for
example) under $\nu$ is a curve on the $yz$-coordinate circle with
endpoints $e_y \yhat$ and $e_z \zhat$. The integer-valued winding
number of this curve relative to the shortest geodesic between its
endpoints is another invariant, called the \emph{kink number}, which
we denote by $k_x$. (To be consistent with our previous conventions,
we take $k_x \le 0$ if $\nu$ preserves orientation on the
$yz$-edge.)  The kink numbers $k_y$ and $k_z$ are defined similarly.
Finally, the oriented area of the image of the interior of $O$ under
$\nu$, denoted $\Omega$, is also an invariant, called the
\emph{trapped area}. For $\nu$ differentiable, the trapped area is
given by
\begin{equation}
\Omega = -\int_O \nu^* \omega, \label{eq:Omega}
\end{equation}
where $\omega$ is the area two-form on $S^2$ ($\omega$ is normalised
so that $\int_{S^2}\omega = 4\pi$).  (To be consistent with our
previous conventions, we have taken $\Omega < 0$ for $\nu$
orientation-preserving.)  For a given set of edge signs $e=\left(e_x, e_y, e_z\right)$ and kink numbers $k=\left(k_x, k_y, k_z\right)$, the allowed values of
$\Omega$ differ by integer multiples of $4\pi$. The invariants $(e,k,\Omega)$ collectively classify the
homotopy classes of $\Ccal_T(O,S^2)$, and all allowed values can be
realised.

A second classification scheme, described in \cite{mrz3,mrz4}, is
based on generalised degrees. %, and will be used extensively here.
Observe that the coordinate geodesics which define the domain $O$
also partition the target $S^2$ space into open coordinate octants,
here called {\it sectors}.  We label the sectors by fixing the signs
of the coordinates. Thus, we let $\sigma = (\sigma_x\, \sigma_y\,
\sigma_z)$ denote a triple of signs, and define the sector
$\Sigma_\sigma$ by
\begin{equation}
\Sigma_\sigma = \{ s \in S^2 \subset \Rr^3\, | \, \sigma_j  s_j >
0\}. \label{eq:Sigma}
\end{equation}
(Thus, $O$ is the closure of $\Sigma_{\scriptscriptstyle +++}$,
although we shall regard $O$ and $cl(\Sigma_{\scriptscriptstyle
+++})$ as distinct, with $O$ constituting the domain of $\nu$ and
$cl(\Sigma_{\scriptscriptstyle +++})$ constituting a subset of the
target space.) It can be shown that the degree of a regular value
$s_\sigma \in \Sigma_\sigma$ of $\nu$, ie, the number of preimages of
$s_\sigma$ counted with a sign according to orientation, is a
homotopy invariant called the \emph{wrapping number}, which we
denote by $w_\sigma$. For $\nu$ differentiable,
\begin{equation}
w_\sigma  = -\sum_{x \in \nu^{-1}(s_\sigma)} \sgn \det \nu'(x)
\label{eq:w_sigma}
\end{equation}
(To be consistent with our previous conventions, we have taken
$w_\sigma \le 0$ for $\nu$ orientation-preserving.) Using  Stokes'
theorem, one can express the wrapping numbers in terms of
$(e,k,\Omega)$,
\begin{equation}
\label{eq:wrap in terms of e,k,omega}
 w_\sigma = \frac{1}{4\pi}\Omega +
\frac{1}{2}\sum_j \sigma_j k_j + e_x e_y e_z\left(\frac{1}{8} -\delta_{\sigma,e}\right).
\end{equation}
(\ref{eq:wrap in terms of e,k,omega}) can be inverted to obtain
$(e,k,\Omega)$ in terms of the $w_\sigma$'s.  Thus the wrapping
numbers $w_\sigma$ are a (constrained) set  of classifying
invariants for $\Ccal_T(O,S^2)$.

We say that a homotopy class in $\Ccal_T(O,S^2)$ is {\it conformal}
if $w_\sigma \le 0$ for all $\sigma$, {\it anticonformal} if $w_\sigma \ge 0$ for
all $\sigma$, and {\it nonconformal} otherwise. In \cite{mrz3} it is
shown that every conformal homotopy class has a conformal
representative. In terms of the complex coordinate $w =
(s_x+i s_y)/(1+s_z)$ on $S^2$ (for $\svec = \left(s_x, s_y, s_z\right)\in S^2$), the conformal representatives are rational
functions whose zeros and poles satisfy constraints dictated by
tangent boundary conditions. Likewise, every anticonformal homotopy
class has an anticonformal representative which is a rational
function of $\wbar$.  Representatives for nonconformal topologies
are also discussed in \cite{mrz3}.

We say that the sectors labeled by $\sigma$ and $\sigma'$ are {\it
adjacent}, denoted $\sigma \sim \sigma'$, if $\Sigma_\sigma$ and
$\Sigma_{\sigma'}$ share a common edge, or equivalently, if $\sigma$
and $\sigma'$ have precisely two components the same.

Given $\nu \in \Ccal_T(O,S^2) \cap W^{1,2}(O,S^2)$, let
\begin{equation}\label{eq:E_2}
E(\nu) = \int_O  |\nu'|^2 \, dA
\end{equation}
denote the Dirichlet energy of $\nu$. Given a homotopy class $H
\subset \Ccal_T(O,S^2)$, let
\begin{equation}\label{eq:Ecal_2}
\Ecal(H) = \inf_{\nu \in H} E(\nu)
\end{equation}
denote the infimum Dirichlet energy over $H$. Our main result,
contained in Theorems~\ref{thm:1} and \ref{thm:2} below, is an
explicit formula for $\Ecal(H)$.  The formula consists of two
contributions. The first, $\sum_\sigma |w_\sigma| \pi$,
%$\Ecal(H) = (\\pi $ in terms of a homotopy invariant
%\begin{equation}\label{eq: Lambda def}
%    \Lambda(H) = \sum_\sigma |w_\sigma| + \Delta(H).
%\end{equation}
which on its own constitutes a lower bound for $\Ecal(H)$,
follows from considerations of the algebraic degree. This is analogous to what one
has for maps in $\Ccal(S^2,S^2)$.

The additional contribution involves a new homotopy invariant,
$\Delta(H)$, which we now define.  First, we have that
\begin{equation}\label{eq: Delta conformal}
    \Delta(H) = 0, \ \  H \ \text{conformal or anticonformal}.
\end{equation}
For $H$ nonconformal,  let $\sigma_{+}$ label a sector with the
largest positive wrapping number (in cases where $\sigma_+$ is not
unique, the definition (\ref{eq:Delta(H) neq 0}) below does not
depend on the choice of $\sigma_+$.) Similarly,
 let $\sigma_{-}$ denote the
sector with the smallest negative wrapping number (ie, negative
wrapping number of largest magnitude). Let
\begin{equation}\label{eq: chi}
\chi =
\begin{cases}
1, & k_x k_y k_z < 0,\\
0, & \text{otherwise}.
\end{cases}
\end{equation}
%(For ease of  notation, we leave the $H$-dependence in $\sigma_+$, $\sigma_-$ and $\chi$
%implicit.)
Then we define
  \begin{equation}
    \label{eq:Delta(H) neq 0}
    \Delta(H) =
    2\max\left(0,\ w_{\sigma_+}  -
    \sum_{\sigma \sim \sigma_+}
    \Phi(w_\sigma) - \chi,\ |w_{\sigma_{-}}| -
    \sum_{\sigma \sim \sigma_{-}}
    \Phi(-w_\sigma) - \chi \right), \ \ H \
    \text{nonconformal},
  \end{equation}
  where
 \begin{equation}\label{eq:Phi}
\Phi(x) = \half(x + |x|).
%  \Phi(x) = \begin{cases} x,& x \ge 0,\\ 0,& x < 0 \end{cases}
%\end{equation}
%\begin{equation}
%    \label{eq:chi}
%    \chi(h)  =
%    \begin{cases}
%    1,& k_x k_y k_z < 0,\\
%    0,& \text{otherwise}.
%    \end{cases}
   \end{equation}

We now state our main results.
\begin{thm} \label{thm:1}
Let $H$ be a homotopy class in $\Ccal_T(O,S^2)$.  Then
\begin{equation}
  \label{eq: lower bound}
 \Ecal(H)  \ge \left(\sum_\sigma |w_\sigma|  + \Delta(H)  \right)\pi.
\end{equation}
\end{thm}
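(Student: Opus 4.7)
The plan is to combine the pointwise conformal inequality $|\nu'|^2 \ge 2|\det \nu'|$ with a nonabelian spelling-length argument in the fundamental group of the sphere punctured at one point per sector.

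By density it suffices to bound $E(\nu)$ below for smooth $\nu \in \Ccal_T(O, S^2)$. The pointwise inequality (with equality precisely at conformal or anticonformal points) together with the coarea formula gives
\[
E(\nu) \ge 2\int_{S^2} N(s,\nu)\, d\omega(s), \qquad N(s,\nu) := \#\nu^{-1}(s).
\]
Each sector $\Sigma_\sigma$ has area $\pi/2$, and for almost every regular value $s_\sigma \in \Sigma_\sigma$ the unsigned multiplicity dominates the signed degree, $N(s_\sigma, \nu) \ge |w_\sigma|$. Summing over sectors yields the abelian lower bound $E(\nu) \ge \sum_\sigma |w_\sigma|\, \pi$.

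To obtain the extra $\Delta(H)\,\pi$ contribution for nonconformal $H$, I would sharpen the bound $N(s_{\sigma_+}, \nu) \ge |w_{\sigma_+}|$ (and the analogous one at $\sigma_-$). Pick one regular value $s_\sigma$ in each sector and let $F = \pi_1(S^2 \setminus \{s_\sigma\}_\sigma, *)$, the free group generated by small positively oriented loops $g_\sigma$ around $s_\sigma$. Since $\nu(\partial O)$ lies on the three coordinate great circles, $\nu|_{\partial O}$ represents a conjugacy class $[W]$ in $F$ that is explicitly encoded by $(e,k,\Omega)$: each edge of $O$ traces an arc of a coordinate circle whose passage through the four quadrants gives a subword, and concatenation around $\partial O$ yields $W$. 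A standard lifting/transversality argument then shows that any continuous extension of $\nu|_{\partial O}$ to $O$ satisfies $N(s_\sigma, \nu) \ge n_\sigma(W)$, where $n_\sigma(W)$ is the number of occurrences of $g_\sigma^{\pm 1}$ in the cyclically reduced form of $[W]$. Abelianizing $F$ gives back $n_\sigma(W) \ge |w_\sigma|$, and the excess $n_\sigma(W) - |w_\sigma|$ equals twice the number of non-cancelling pairs $g_\sigma g_\sigma^{-1}$ surviving cyclic reduction.

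The crux, and the main obstacle, is then a combinatorial calculation in $F$: show that the cyclically reduced form of $[W]$ obeys
\[
n_{\sigma_+}(W) - w_{\sigma_+} \;\ge\; 2\,\Phi\Bigl(w_{\sigma_+} - \textstyle\sum_{\sigma \sim \sigma_+} \Phi(w_\sigma) - \chi\Bigr),
\]
and analogously at $\sigma_-$. Only generators $g_\sigma$ for sectors $\sigma$ adjacent to $\sigma_+$ can sit between a $g_{\sigma_+}$ and a $g_{\sigma_+}^{-1}$ without being freely conjugated past them, so each such $g_\sigma$ with $w_\sigma>0$ supplies one unit of admissible cancellation; this is the origin of the sum $\sum_{\sigma \sim \sigma_+}\Phi(w_\sigma)$. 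The $\chi$ correction should capture a parity effect at the three vertices of $O$ where consecutive edges join: when $k_x k_y k_z < 0$, one additional unit of cyclic cancellation becomes available at a seam. The factor of $2$ in $\Delta(H)$ reflects that each surviving pair in $W$ forces both an orientation-preserving and an orientation-reversing preimage of $s_{\sigma_+}$ (or $s_{\sigma_-}$). Combining the sharpened bound at $\sigma_+$ or $\sigma_-$ with the $|w_\sigma|$ bounds in the remaining sectors yields the full inequality.
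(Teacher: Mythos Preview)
Your opening paragraph is fine and matches the paper's Lemma~2.1.1: the conformal inequality plus coarea gives $E(\nu)\ge \sum_\sigma D_\nu(s_\sigma)\pi$ for any choice of regular values $s_\sigma$, and the abelian bound follows from $D_\nu(s_\sigma)\ge|w_\sigma|$.

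The gap is in the second half. The assertion that a ``standard lifting/transversality argument'' gives
\[
N(s_\sigma,\nu)\;\ge\; n_\sigma(W),
\]
with $n_\sigma(W)$ the number of occurrences of $g_\sigma^{\pm1}$ in the cyclically reduced form of $W$, is false. What transversality actually yields (this is the paper's Proposition~2.2.1) is a bound on the \emph{total} number of preimages by the \emph{spelling length} of $[\partial\nu]$, i.e.\ the minimum number of conjugated generators needed to express it. There is no per-generator bound from the reduced word. A concrete counterexample in $F(c_1,c_2)$: take $W=c_1^2c_2c_1^{-1}c_2^{-1}$, which is already cyclically reduced and contains three letters $c_1^{\pm1}$. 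Yet
\[
W \;=\; c_1\cdot(c_1c_2c_1^{-1})\cdot c_2^{-1}
\]
is a spelling with a single $c_1$-factor, and one can realise a map with exactly one preimage of $s_1$. So occurrence counts in the reduced word do not control $N(s_\sigma,\nu)$.

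Because the per-sector inequality fails, the rest of your outline --- the adjacency heuristic, the $\chi$ parity story, and the factor of~$2$ --- has no foundation. The paper's route is genuinely different: it restricts to \emph{four} sectors (either $\sigma_+$ and its three neighbours, or $\sigma_-$ and its), so the punctured-sphere group is free on three generators $c_1,c_2,c_3$. It then excises the preimages of $s_{\sigma_\pm}$ one at a time (Proposition~2.3.1), which replaces $[\partial\nu]$ by an element of the set product $\{[\partial\nu]\}\langle c_0^{-1}\rangle^P\langle c_0\rangle^N$ with $c_0=(c_2c_1c_3)^{-1}$, and bounds $\sum_{j=1}^3 D_\nu(s_j)$ below by the minimum spelling length over that set. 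The entire $\Delta(H)$ term then comes from a hard combinatorial estimate on that minimum (Propositions~2.4.1 and~2.4.2, proved in the Appendix via an inductive pairing argument), not from inspecting a cyclically reduced word. The heuristic you give for adjacency and for $\chi$ is not wrong in spirit, but turning it into a proof requires exactly this spelling-length machinery, which is the substantive content of the theorem.
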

\begin{thm} \label{thm:2}
Let $H$ be a homotopy class in $\Ccal_T(O,S^2)$.  Then
\begin{equation}
  \label{eq: upper bound}
 \Ecal(H)  \le \left(\sum_\sigma |w_\sigma| + \Delta(H)  \right)\pi.
\end{equation}
\end{thm}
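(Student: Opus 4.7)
The plan is to prove the upper bound constructively, by exhibiting for each homotopy class $H$ an explicit family of representatives $\nu_\eps \in H$ whose Dirichlet energies converge to $(\sum_\sigma |w_\sigma| + \Delta(H))\pi$ as $\eps \to 0$. The guiding observation is that for any conformal or anticonformal map $|\nu'|^2\,dA = 2|\nu^*\omega|$ pointwise; consequently an (anti)conformal diffeomorphism from a region in $O$ onto a single sector $\Sigma_\sigma$ (of area $\pi/2$) contributes exactly $\pi$ to the Dirichlet integral.

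For the special case of $H$ conformal or anticonformal we have $\Delta(H) = 0$, and I would invoke the existence, established in \cite{mrz3}, of a rational (anti)conformal representative of $H$ satisfying tangent boundary conditions. Such a map covers each $\Sigma_\sigma$ exactly $|w_\sigma|$ times with a single orientation, so its Dirichlet energy is $\sum_\sigma |w_\sigma|\pi$, matching the bound with equality.

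For nonconformal $H$ I would partition $O$ into a finite family of small geodesic sub-polygons $\{P_i\}_{i=1}^N$ and assign to each $P_i$ a target sector $\Sigma_{\sigma(i)}$ together with an orientation, placing an (anti)conformal diffeomorphism $P_i \to cl(\Sigma_{\sigma(i)})$ on each tile. Continuity across a shared edge of $P_i$ and $P_j$ forces $\sigma(i)\sim \sigma(j)$, forces the image edge to lie on the common coordinate geodesic of the two sectors, and requires the two edge parameterizations to coincide; the last of these is arranged by an $O(\eps)$-thick collar adjustment contributing $O(\eps)$ to $E(\nu_\eps)$. Writing $a_\sigma$ (resp.\ $b_\sigma$) for the number of tiles mapped conformally (resp.\ anticonformally) onto $\Sigma_\sigma$, we obtain $E(\nu_\eps) = \bigl(\sum_\sigma(a_\sigma+b_\sigma)\bigr)\pi + O(\eps)$, subject to the topological identity $w_\sigma = b_\sigma - a_\sigma$ (which already forces $a_\sigma+b_\sigma \ge |w_\sigma|$) and to the adjacency constraints of a valid tiling. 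The core combinatorial claim is that the minimum admissible value of $\sum_\sigma(a_\sigma+b_\sigma)$ is $\sum_\sigma|w_\sigma| + \Delta(H)$: at the dominant sector $\sigma_+$, each anticonformal tile over $\Sigma_{\sigma_+}$ must share its edges with anticonformal tiles over neighboring sectors, of which $\sum_{\sigma\sim\sigma_+}\Phi(w_\sigma)$ are already forced by positive $w_\sigma$ at neighbors, plus one additional when $\chi = 1$ (a favorable corner configuration at a vertex of $O$ enabled by the condition $k_x k_y k_z < 0$); any shortfall must be compensated by conformal/anticonformal pairs in neighboring sectors, each pair adding $2$ to the count, yielding the first branch of the max in (\ref{eq:Delta(H) neq 0}), with the symmetric argument at $\sigma_-$ yielding the second.

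The main obstacle will be the realization step: producing an explicit tiling of $O$ together with an assignment $\sigma(\cdot)$ and orientations that simultaneously attains this combinatorial minimum and lies in the prescribed homotopy class, i.e.\ reproduces the correct edge signs $e$, kink numbers $k$ and trapped area $\Omega$ (equivalently the full wrapping-number data). This is a global consistency problem, since local sector-adjacency rules must close up around every interior vertex of the tiling and every vertex of $O$; the $\chi$ correction encodes precisely the borderline case in which a vertex of $O$ can serve as a corner of an anticonformal tile and save one sector-count, and is the most delicate point of the construction.
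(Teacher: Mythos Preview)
Your treatment of the conformal and anticonformal cases coincides with the paper's: invoke the rational (anti)analytic representatives of \cite{mrz2,mrz3}, whose energy is exactly $\sum_\sigma |w_\sigma|\pi$.

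For the nonconformal case, however, your proposal diverges from the paper's construction and, more importantly, stops short of the actual work. The paper does \emph{not} tile $O$ into sub-polygons each mapped diffeomorphically onto a single sector. Instead it fixes an auxiliary (anti)conformal topology $H_0$ and takes a rational representative $F_{H_0}$ as a ``bulk'' map on $O$ minus small neighbourhoods $Q_{j,\eps}$ of the three vertices; inside each $Q_{j,\eps}$ it inserts a prescribed number $M_j$ of explicit \emph{quarter-sphere} configurations (each an (anti)conformal map onto a \emph{pair} of adjacent octants, not a single sector), alternating orientation on concentric annuli, with thin interpolation layers of energy $O(\eps)$. The pair $(H_0, M)$ is specified case-by-case (six sub-cases for each of $k_z<k_x+k_y$ and $k_z\ge k_x+k_y$, indexed by the value of $w_{+++}$), and a direct bookkeeping lemma then verifies that $\sum_\sigma |w_{\sigma,0}| + \sum_j 2M_j = \sum_\sigma |w_\sigma| + \Delta(H)$. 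The $\chi=1$ saving arises concretely because in the relevant sign pattern one may take some $M_j$ odd, which shifts an edge sign of $H_0$ and alters the bulk wrapping numbers favourably.

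The gap in your proposal is that the ``realization step'' you flag as the main obstacle is in fact the entire content of Theorem~\ref{thm:2}. Your combinatorial discussion---that anticonformal tiles over $\Sigma_{\sigma_+}$ must be bordered by anticonformal tiles over adjacent sectors, etc.---is an argument that \emph{no} tiling can do better than $\sum_\sigma|w_\sigma|+\Delta(H)$; that is a lower-bound heuristic (essentially a restatement of Theorem~\ref{thm:1}), not what is needed here. For the upper bound you must \emph{exhibit} one admissible configuration achieving the value, and your proposal gives no mechanism for producing the tiling, no verification that the sector-adjacency graph on $S^2$ admits a closed tiling of the disk with the prescribed boundary behaviour (edge signs, kink numbers, trapped area), and no account of how the $\chi$ correction is realised. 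Without this, the proposal is a statement of intent rather than a proof.
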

\begin{cor}\label{cor: equation}
Let $H$ be a homotopy class in $\Ccal_T(O,S^2)$.  Then
\begin{equation}
  \label{eq: equality}
 \Ecal(H)  = \sum_\sigma |w_\sigma|\pi  + \Delta(H)  \pi.
\end{equation}
\end{cor}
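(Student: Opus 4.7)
The corollary is a purely formal consequence of Theorems~\ref{thm:1} and \ref{thm:2}. The plan is simply to chain the two bounds: Theorem~\ref{thm:1} yields
\[
\Ecal(H) \ge \left(\sum_\sigma |w_\sigma| + \Delta(H)\right)\pi,
\]
while Theorem~\ref{thm:2} yields the reverse inequality
\[
\Ecal(H) \le \left(\sum_\sigma |w_\sigma| + \Delta(H)\right)\pi,
\]
and the two together force equality (\ref{eq: equality}). No further argument is needed, and the conclusion covers conformal, anticonformal, and nonconformal classes uniformly because the definition of $\Delta(H)$ and both theorems treat all three cases on the same footing.

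All the substance, therefore, lies in the two theorems rather than in the deduction of the corollary. The lower bound is expected to split into an \emph{abelian} part $\sum_\sigma |w_\sigma|\pi$ coming from the pointwise inequality $|\nu'|^2\,dA\ge 2|\nu^*\omega|$ combined with a sector-by-sector preimage count for regular values $s_\sigma \in \Sigma_\sigma$ (each preimage sheet contributes at least twice the area of a sector, i.e.\ $\pi$), and a \emph{nonabelian} correction $\Delta(H)\pi$ tracking the spelling length of the boundary word of $\nu$ in $\pi_1(S^2\setminus\{s_1,\ldots,s_n\},*)$. The upper bound is then matched by constructing explicit competitors that are piecewise locally conformal or anticonformal off an arbitrarily small subset of $O$, with sheet structure dictated by the wrapping numbers together with the spelling-length obstruction.

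The main obstacle, which Corollary~\ref{cor: equation} itself sidesteps entirely, is calibrating the nonabelian contribution: one must show both that the combinatorial lower bound on the preimage count in a sector of extremal wrapping is exactly $w_{\sigma_+}$ (respectively $|w_{\sigma_-}|$) minus the $\Phi$-contributions from the adjacent sectors and the $\chi$-correction, and that no further energy is wasted in the explicit piecewise (anti)conformal construction. Once Theorems~\ref{thm:1} and \ref{thm:2} have been established, the corollary follows at once by taking the two inequalities simultaneously.
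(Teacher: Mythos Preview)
Your proposal is correct and matches the paper's own treatment: the corollary is stated immediately after Theorems~\ref{thm:1} and \ref{thm:2} without a separate proof, precisely because it follows formally by combining the two inequalities. Your additional commentary on where the real work lies (in the theorems themselves) is accurate but, as you note, not part of the deduction of the corollary.
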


Theorem~\ref{thm:1} is proved in Section~\ref{sec:lb}.  There the
quantity $\sum_\sigma |w_\sigma|  + \Delta(H)$ is related to an
invariant of the (nonabelian) fundamental group $\pi_1(S^2 - S,
*)$, where $S$ is a set of representative points in $S^2$ from
four appropriately chosen sectors (the choice is determined by $H$).
% appropriate chosen set
%of four sectors of $S^2$
%appropriately
%chosen subset of the sectors rom a subset of the sectors of $S^2$.
%, one from the
%interior of each sector.
Theorem \ref{thm:2} is proved in Section~\ref{sec: upper bound} by
constructing a sequence of maps $\nu_\epsilon \in H$ whose energy
approaches the upper bound (\ref{eq: upper bound}) as $\epsilon$
approaches $0$. The maps $\nu_\epsilon$ are alternatively locally
conformal or locally anticonformal except on a set whose area
vanishes with $\epsilon$.

Theorems~\ref{thm:1} and \ref{thm:2} follow from new methods compared to our previous work in \cite{mrz2, mrz3}. The derivation of the bounds (\ref{eq: lower bound}) and (\ref{eq: upper bound}) involve combinatorial-group-theoretic arguments and non-trivial explicit constructions. For conformal and anticonformal topologies, Corollary~\ref{cor: equation}
coincides with  results given in \cite{mrz2}.  For nonconformal
topologies, Corollary~\ref{cor: equation} is a sharp improvement of estimates
obtained in \cite{mrz3}, which are equivalent to $\sum_\sigma
|w_\sigma|\pi \le \Ecal(H) \le 9 \sum_\sigma |w_\sigma|\pi$.

\subsection{Nematic liquid crystal configurations in a rectangular
prism}\label{sec:LC}

In the Oseen-Frank theory \cite{dg, virga, stewart}, the local
orientation of a nematic liquid crystal in a domain $P \subset
\Rr^3$ is described by a director field $n: P\rightarrow RP^2$.
Equilibrium configurations are local minimizers of an energy
functional with energy density quadratic in $\nabla n$ and
parameterised by three material-dependent constants.  In the
so-called one-constant approximation, the Oseen-Frank energy density
reduces to the Dirichlet energy density $(\nabla n)^2$.  For
$\Omega$ simply connected and $n$ continuous in the interior of
$\Omega$, $n$ may be assigned an orientation in a continuous way,
and may be regarded as a continuous unit-vector field  (ie,
$S^2$-valued)  on $P$.  We assume this to be the case in what
follows.

The equilibrium configurations depend crucially on the boundary
conditions.  For certain materials, {\it tangent boundary
conditions} are appropriate, according to which $n$ is required to
be tangent to the boundary $\partial P$, which is assumed to be piecewise
smooth.  Let $\Ccal_T(P, S^2)$ denote the space of continuous
unit-vector fields on $P$ which satisfy tangent boundary conditions.

In a series of papers \cite{mrz1} -- \cite{mrz6} we have studied the
case where $P$ is a polyhedral domain.  One motivation are certain
prototype designs for bistable liquid crystal displays, in which
polygonal and polyhedral geometries  support multiple equilibrium
configurations with different optical properties \cite{kg}.  The
homotopy classification of $\Ccal_T(P,S^2)$ is described in
\cite{rz, mrz4}, and a lower bound for the infimum Dirichlet energy
in terms of generalised minimal connections is obtained in
\cite{mrz4}. For a review, see \cite{mrz6}.

A number of results concern the case where $P$ is a right
rectangular prism,
\begin{equation}\label{eq: prism}
P = \{ r \in \Rr^3 \, | \, 0 \le r_j \le L_j\}.
\end{equation}
For definiteness, we label the sides so that $L_z \le L_y \le L_x$.
Let $L = (L_z^2 + L_y^2 + L_z^2)^{1/2}$ denote the length of the
prism diagonal. We have considered in particular {\it
reflection-symmetric} homotopy classes in $\Ccal_T(P,S^2)$.  We say
that a configuration $n \in \Ccal_T(P,S^2)$ is reflection-symmetric
if it is invariant under reflection through the midplanes of $P$, ie
\begin{equation}\label{eq: reflection-symmetric}
n(x,y,z) = n(L_x-x,y,z) = n(x,L_y-y, z) = n(x,y,L_z-z).
\end{equation}
$n$ is therefore determined by its restriction to a fundamental
domain with respect to reflections, eg
\begin{equation}\label{eq: prism_8}
R = \{ r \in \Rr^3 \, | \, 0 \le r_j \le \half L_j\}.
\end{equation}
A homotopy class $h \subset \Ccal_T(P,S^2)$ is reflection-symmetric
if (and only if) it contains a reflection-symmetric representative.
For reflection-symmetric homotopy classes, the infimum of the
Dirichlet energy is given by
%the infimum Dirichlet energy is determined by the contributionn
%It is easily argued (see, eg, [MRZ2006]) that the infimum Dirichlet energy for reflection-symmetric
%$h \in \Ccal_T(P,S^2)$ is given by
\begin{equation}\label{eq:Ecal_3}
\Ecal_3(h) = \inf_{n \in h} 8 \int_R (\nabla n)^2 \, dV.
\end{equation}

Given $n \in \Ccal_T(P,S^2)$ and $0 < a < L_z$, we define a map
$\nu_{n,a}: O\rightarrow S^2$ by restricting $n$ to the surface $|r|
= a$ in $P$, ie $\nu_{n,a}(s) = n(as)$. It is readily established
that i) $\nu_{n,a} \in \Ccal_T(O,S^2)$, ii) the homotopy class of
$\nu_{n,a}$ is independent of $a$, and iii) reflection-symmetric
homotopy classes in $\Ccal_T(P,S^2)$ are in 1-1 correspondence with
the homotopy classes of $\Ccal_T(O,S^2)$.  Let $H$ denote the
homotopy class of $\Ccal_T(O,S^2)$ corresponding to $h$.

 Theorems \ref{thm:1} and \ref{thm:2}
imply lower and upper bounds on $\Ecal_3(h)$ for reflection-symmetric homotopy classes $h$.  Given $\nu \in H
\subset \Ccal_T(O,S^2)$, we construct a reflection-symmetric $n\in
\Ccal_T(P,S^2)$ via $n(r) = \nu(r/|r|)$ for $r \in R$.  Then $n \in
h$, and $r^2|\nabla n|^2(r) = |\nu'|^2$. The integral over $R$ in
(\ref{eq:Ecal_3}) is bounded above by an integral over $r < L/2$,
leading to the inequality $\Ecal_3(h) \le 4 L \Ecal(H)$. Conversely,
for any $n \in h$, we have that $r^2|\nabla n|^2 \ge |\nu'|^2$.  As the integral over $R$ in (\ref{eq:Ecal_3}) is
bounded below by an integral over $r < L_z/2$, it follows that
$\Ecal_3(h) \ge 4 L_z \Ecal(H)$.  We summarize these results in the
following - 
\begin{cor}\label{cor:2}
Let $P$ be the right rectangular prism (\ref{eq: prism}) with edge-lengths $L_z
\le L_y \le L_x$ and diagonal length $L$. Let $h$ denote a
reflection-symmetric homotopy class in $\Ccal_T(P,S^2)$, and $H$ the
corresponding homotopy class in $\Ccal_T(O,S^2)$. Then
\begin{equation}\label{eq:cor1}
    4L_z \Ecal(H) \le \Ecal_3(h) \le 4L \Ecal(H).
\end{equation}
\end{cor}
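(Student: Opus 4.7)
The plan is to prove the two inequalities in (\ref{eq:cor1}) separately, using the correspondence between reflection-symmetric configurations on $P$ and angular maps on $O$ via radial extension/restriction. The single computational fact I would rely on is the radial/angular decomposition of the Euclidean gradient in spherical coordinates: if $n(\rvec) = \nu(\hat\rvec)$ depends only on the angular direction $\hat\rvec = \rvec/|\rvec|$, then $|\nabla n|^2 = |\nu'|^2/r^2$; in general, discarding the radial derivative gives $|\nabla n|^2 \ge |\nu_{n,r}'|^2/r^2$. Both times the volume element $r^2\,dr\,dA(\hat\rvec)$ cancels the $r^{-2}$, reducing the three-dimensional integral to a product of a radial length and a Dirichlet energy on $O$.

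For the upper bound, I would take any $\nu \in H$ and define $n(\rvec) = \nu(\hat\rvec)$ on $R$. Tangent boundary conditions on $\partial R$ are inherited from those on $\partial O$ (on each coordinate face of $R$, $\hat\rvec$ lies in the corresponding coordinate plane, so $\nu(\hat\rvec)$ lies on the corresponding great circle), $n$ is reflection-symmetric, and its restriction to every concentric sphere is $\nu$, so $n \in h$. Rewriting in spherical coordinates yields
\begin{equation*}
8 \int_R (\nabla n)^2\, dV = 8\int_O |\nu'|^2(\hat\rvec)\,\ell(\hat\rvec)\, dA(\hat\rvec),
\end{equation*}
where $\ell(\hat\rvec)$ is the length of the radial ray in direction $\hat\rvec$ inside $R$. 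Since $R \subset \{|\rvec| \le L/2\}\cap O$ (as $L/2$ is half the diagonal), $\ell(\hat\rvec) \le L/2$, giving $\Ecal_3(h) \le 8(L/2) E(\nu) = 4L\, E(\nu)$, and the infimum over $\nu \in H$ yields the upper bound.

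For the lower bound, I would start with an arbitrary $n \in h$ with finite Dirichlet energy. Restricting the integral to the spherical octant of radius $L_z/2$ (which lies inside $R$ since $L_z/2$ is half the shortest edge) and using the pointwise inequality $|\nabla n|^2 \ge r^{-2}|\nu_{n,r}'|^2$ in spherical coordinates gives
\begin{equation*}
8\int_R (\nabla n)^2\, dV \ge 8\int_0^{L_z/2} E(\nu_{n,r})\, dr \ge 8 (L_z/2)\,\Ecal(H) = 4L_z\,\Ecal(H),
\end{equation*}
where the second step uses that $\nu_{n,r} \in H$ for a.e.\ $r$, which follows from the stated independence of the homotopy class of $\nu_{n,a}$ from $a$. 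Taking the infimum over $n \in h$ completes the lower bound.

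The only items that require care are admissibility: that the radially extended $n$ lies in $W^{1,2}(R,S^2)$ (immediate, since the radial integral $\int_0^{L/2} dr$ is finite once the energy identity is invoked), and that for $n \in W^{1,2}(R,S^2)$ the sphere traces $\nu_{n,r}$ are admissible competitors in $H$ for a.e.\ $r$ (standard Fubini slicing combined with the fact that the homotopy invariants $(e,k,\Omega)$ vary continuously in $r$ and take discrete values). Neither constitutes a real obstacle; the whole content of the corollary reduces to the elementary geometric inclusions $\{|\rvec|\le L_z/2\}\cap O \subset R \subset \{|\rvec|\le L/2\}\cap O$ combined with the gradient decomposition.
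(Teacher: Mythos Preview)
Your proof is correct and follows essentially the same approach as the paper's: radial extension of $\nu$ to obtain the upper bound via $R\subset\{|\rvec|\le L/2\}\cap O$, and the pointwise inequality $r^2|\nabla n|^2\ge|\nu_{n,r}'|^2$ together with $\{|\rvec|\le L_z/2\}\cap O\subset R$ for the lower bound. You have added useful remarks on Sobolev admissibility and a.e.\ slicing that the paper leaves implicit; one small slip is that the tangent boundary conditions that matter are on $\partial P$, not $\partial R$ (the midplane faces of $R$ are interior to $P$ and continuity there is ensured by the reflection-symmetric extension), but this does not affect the argument.
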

For conformal and anticonformal homotopy classes,
Corollary~\ref{cor:2} coincides with the results of \cite{mrz2}, and
for nonconformal homotopy classes  constitutes a sharp improvement of a
result from \cite{mrz3}.

Brezis, Coron and Lieb obtained the infimum Dirichlet energy for
$S^2$-valued maps on $\Rr^3$ with prescribed degrees on a set of
excluded points, or defects (they also considered more general
domains in $\Rr^3$ with holes) \cite{bcl}.  Their result is
expressed in terms of the length of a {\it minimal connection}, ie a
pairing between defects  of opposite sign. The estimates of
\cite{mrz1} -- \cite{mrz4} may be regarded as extensions of this
classical result to the case of polyhedral domains with tangent
boundary conditions, in which there are necessarily singularities at
vertices. Our previous estimates may be expressed as a sum over
minimal connections, one for each sector of the target $S^2$ space,
between the vertices of the polyhedral domain.  The new lower bound in Theorem~\ref{thm:1} contains additional topological information not captured by the minimal connection theory in \cite{bcl}. In particular, the new homotopy invariant, $\Delta(H)$ in (\ref{eq:Delta(H) neq 0}), elucidates the fact that for certain nonconformal homotopy classes, the absolute number of pre-images of a regular value may necessarily be greater than the absolute value of $|w_\sigma|$. In such cases, the infimum energy is necessarily greater than the abelian bound, $\pi\sum_\sigma |w_\sigma|$, predicted by minimal connection theory. It would be
interesting to generalise Corollary~\ref{cor:2} to
non-reflection-symmetric configurations on $P$ as well as to more
general polyhedral domains. Results in this direction may involve a
nonabelian extension of the notion of minimal connection.

\section{Lower bound for $\Ecal(H)$}\label{sec:lb}

%Using a well known estimate for the energy density due to [BCL],
Given $\nu \in \Ccal_T(O,S^2) \cap W^{1,2}(O,S^2)$, we can obtain a
lower bound for the Dirichlet energy $E(\nu)$
%We obtain a lower bound for $E_2(\nu)$
in terms of the number of preimages of a set of regular values of
$\nu$, one from each sector of $S^2$ (Lemma~\ref{lem:1},
Section~\ref{sec: lb and abs degree}).  This leads to the following
problem, which is addressed in Section~\ref{sec: abs degree S2}:
given a smooth unit-vector field $\mu$ on the two-disk $D^2$ for
which the homotopy class of the  boundary map $\partial \mu$ is
prescribed, find a lower bound for the number of preimages of a
finite set $S$ of regular values of $\mu$.  The bound is expressed
as the infimum of a certain function -- the {\it spelling length} --
over a product of conjugacy classes in the fundamental group
$\pi_1(S^2 - S, *)$. The bound is obtained by analysing a simpler
problem in Section \ref{sec: abs degree R2}, in which the target
space is taken to be $\Rr^2$ rather than $S^2$. The estimates of the
spelling lengths relevant to our problem are given in
Section~\ref{sec: Lb for Ecal}, yielding a proof of
Theorem~\ref{thm:1}.

Let us introduce some notation. Let X and Y be two-dimensional manifolds, 
possibly with boundary.  Let $int(X)$ denote the interior of $X$ 
(similarly $int(Y)$).  Let $f : X \to Y$ be piecewise continuously 
differentiable (in Section~\ref{sec: upper bound} it will be convenient to allow for maps with 
piecewise continuous derivatives).  We say that $y \in Y$ is a regular 
value of $f$ if and only if $y \in int(Y)$, $f^{-1}(y) \subset int(X)$, 
and $f'$ is continuous and of full rank at each point of $f^{-1}(y)$.  Let 
$\Rcal_f$ denote the set of regular values of $f$. We recall Sard's theorem
\cite{spivak}, according to which $\Rcal_f$ has full Lebesgue
measure. For $y \in \Rcal_f$, let
\begin{align}
d_f(y) &= \sum_{x \in f^{-1}(y)} \sgn \det f'(x),\label{eq: d_f}\\
D_f(y) &= \sum_{x \in f^{-1}(y)} 1.
\end{align}
$d_f(y)$ is the algebraic degree, or simply the degree, of $y$, ie the number
of preimages of $y$ counted with orientation.   $D_f(y)$, on the other hand, is the number of preimages of y. For convenience, we will refer to $D_f(y)$ as the {\it
absolute degree} of $y$ although it should not be confused with the Hopf absolute degree \cite{hopf} which is used elsewhere in the literature. We remark that $d_f(y)$ is invariant under differentiable deformations of $f$ (provided $y$ remains a regular value), whereas $D_f(y)$ is not.
Clearly
\begin{equation}\label{eq: d and D}
    |d_f(y)| \le D_f(y).
\end{equation}
Wrapping numbers are examples of algebraic degrees.  Indeed,
for $\nu \in \Ccal_T(O,S^2)$ differentiable and $s_\sigma \in
\Sigma_\sigma$ a regular value of $\nu$, we have that
\begin{equation}\label{eq: d and w}
    d_\nu(s_\sigma) = -w_\sigma.
\end{equation}
%where $w_\sigma$ is the wrapping number of $\nu$ in sector
%$\Sigma_\sigma$.

\subsection{Lower bound and absolute degree}\label{sec: lb and abs degree}

\begin{lem} \label{lem:1}
Let $\nu \in \Ccal_T(O,S^2)$ be differentiable.  For each $\sigma$, let
$s_\sigma \in \Sigma_\sigma \cap \Rcal_\nu$.
%Let $S =
%\{s_\sigma\}$, so that $S$ is a set of eight regular values
%of $\nu$, one from each sector.
Then
 \begin{equation}
 \label{eq:r22}
 E(\nu) %:= \int_O (\nu')^2\, dA
 \ge
 \inf_{ \{s_\sigma\}} \sum_{\sigma}
 D_{\nu}(s_\sigma) \pi.
 \end{equation}
 \end{lem}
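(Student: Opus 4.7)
The plan is to combine three standard ingredients: a pointwise singular-value inequality on $|\nu'|^2$, the area formula for differentiable maps between two-manifolds, and the elementary bound of an integral by its essential infimum times the measure of the domain. The only mildly delicate point is that $D_\nu$ is only defined on $\Rcal_\nu$, but Sard's theorem ensures this set has full $\omega$-measure on $S^2$.

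First I would observe that at each $x \in O$ the differential $\nu'(x)$ has singular values $\sigma_1(x), \sigma_2(x) \ge 0$, and by AM--GM
\[
|\nu'(x)|^2 = \sigma_1(x)^2 + \sigma_2(x)^2 \ge 2\sigma_1(x)\sigma_2(x) = 2|\det \nu'(x)|.
\]
Integrating over $O$ gives $E(\nu) \ge 2 \int_O |\det \nu'|\, dA$. Next, the classical area formula (change of variables counted with multiplicity) yields
\[
\int_O |\det \nu'|\, dA = \int_{S^2} D_\nu(y)\, d\omega(y).
\]

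The eight coordinate octants $\Sigma_\sigma$ form an open cover of $S^2$ up to the three coordinate great circles, a set of $\omega$-measure zero, so the right-hand side decomposes as $\sum_\sigma \int_{\Sigma_\sigma} D_\nu\, d\omega$. Each sector has area $|\Sigma_\sigma| = \pi/2$, and since $\Sigma_\sigma \cap \Rcal_\nu$ has full measure in $\Sigma_\sigma$ we obtain the elementary bound
\[
\int_{\Sigma_\sigma} D_\nu\, d\omega \ge \frac{\pi}{2}\inf_{s_\sigma \in \Sigma_\sigma \cap \Rcal_\nu} D_\nu(s_\sigma).
\]
Multiplying by $2$, summing over the eight sectors, and noting that the choices of $s_\sigma$ in distinct sectors are independent (so the sum of infima equals the infimum of the sum), yields precisely (\ref{eq:r22}).

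The main obstacle is purely bookkeeping: one must verify that the sector boundaries make no contribution, which is handled by Sard's theorem, and one must invoke the area formula for a differentiable (not merely $W^{1,2}$) map, which is classical. Tangent boundary conditions play no role in this particular estimate --- only the differentiability of $\nu$ on $O$ enters --- so the argument is robust and extends without change to any partition of $S^2$ into regions of prescribed areas.
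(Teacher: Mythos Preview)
Your proof is correct and follows essentially the same route as the paper: the pointwise bound $|\nu'|^2 \ge 2|\det\nu'|$, the area formula to rewrite $\int_O |\det\nu'|\,dA$ as $\int_{S^2} D_\nu\,d\omega$ (the paper phrases this step via delta functions and Fubini, which is just the area formula in disguise), the sector decomposition with Sard's theorem, and the trivial bound by the infimum times $|\Sigma_\sigma|=\pi/2$. There is no substantive difference.
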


\begin{proof}
%Arguing as in \cite{bcl},
From the inequality $a^2 + b^2 + c^2 + d^2 \ge 2|ad -
bc|$, it follows that $\left|\nu'\right|^2 \ge 2 |\det \nu'|$. Then
\begin{equation}\label{eq: lem 1 eq 1}
E(\nu) = \int_{p\in O} \left|\nu'\right|^2\, dA_p \ge 2 \int_{p\in O} |\det
\nu'| \, dA_p =  2\int_{p \in O} |\det \nu'| \left( \sum_\sigma
\int_{s \in
\Sigma_\sigma}  % |\det \nu'(p)| \,
\delta_{S^2}(s,\nu(p))\, dA_s \right)
\, dA_p,
\end{equation}
where $\delta_{S^2}(s,t)$ is the Dirac delta function on $S^2$
normalised to have unit integral.  We may interchange the $s$- and
$p$-integrals (this can be justified by introducing smoothed delta
functions, appealing to Fubini's theorem, and taking the limit as
the smoothing parameter goes to zero).  For $s \in
\Rcal_\nu$, we have that
\begin{equation}\label{eq: lem 1 eq 2}
\int_{p \in O}  |\det \nu'(p)| \,\delta_{S^2}(s,\nu(p))  \, dA_p  =
D_\nu(s).
\end{equation}
By Sard's theorem,% \cite{spivak},
the set of regular values is of
full measure.  It follows from (\ref{eq: lem 1 eq 1}) and (\ref{eq:
lem 1 eq 2}) that
\begin{equation}\label{eq: lem 1 eq 3}
E(\nu) \ge 2 \sum_\sigma \int_{s \in \Rcal_\nu \cap \Sigma_\sigma}
D_\nu(s)  \, dA_s \ge  \inf_{\{s_\sigma\}} \sum_{\sigma}
D_\nu(s_\sigma)\pi,
\end{equation}
as the sectors $\Sigma_\sigma$ each have area $\pi/2$.
 \end{proof}

\subsection{Absolute degree of $\Rr^2$-valued maps on $D^2$}\label{sec: abs degree R2}

Let $D^2 \subset \Rr^2$ denote the unit disk  with boundary
$\partial D^2 = S^1$. Let $R=\left\{ y_1, \ldots, y_n\right\}$
denote a set of $n$ distinct points in $\Rr^2$.   Let
$\pi_1\left(\Rr^2 - R, q\right)$ denote the fundamental group of the
$n$-times punctured plane, $\Rr^2 - R$,  based at $q \in \Rr^2$, where
$q \notin R$. $\pi_1(\Rr^2-R,q)$ may be identified with the free
group on $n$ generators, $F(c_1,\ldots,c_n)$ (see, eg,
\cite{magnus}). We shall take the generator $c_j$ to be the homotopy
class of a loop $\gamma_j$ based at $q$ which encircles $y_j$ once
anticlockwise but encloses no  other points of $R$. Equivalently,
$\gamma_j$ is  freely homotopic in $\Rr^2 - R$ to an
$\epsilon$-circle about $y_j$ oriented anticlockwise (with
$\epsilon$ small enough so that no other points of $R$ are contained
inside). It is straightforward to show that this condition
determines $c_j$ up to conjugacy. That is, if $\gamma$ and  $\gamma'$
 are two loops in
$\Rr^2 - R$ based at $q$ which are freely homotopic to an
anticlockwise-oriented $\epsilon$-circle about $y_j$, then
\begin{equation}
  \label{eq:conjugate_generators}
  [\gamma'] = h [\gamma] h^{-1}
\end{equation}
for some $h \in \pi_1(\Rr^2-R,q)$.

%JMR
Given  $g \in F(c_1,\ldots,c_n)$ expressed as a product of the
generators, the difference between the number of $c_i$ and
$c_i^{-1}$ factors is well defined, and is called the {\it degree of
$c_i$ in $g$}, and denoted by $\degree_g(c_i)$. Given  $g \in
F(c_1,\ldots,c_n)$, we define a {\it spelling} to be a factorisation
of $g$ into a product of conjugated generators and inverse
generators, eg
\begin{equation}
  \label{eq:spelling}
  g = h_1 c_{i_1}^{\epsilon_1} h_1^{-1} \cdots h_r c_{i_r}^{\epsilon_r} h_r^{-1},
\end{equation}
where $h_j \in  F(c_1,\ldots,c_n)$ and $\epsilon_j = \pm 1$.  It is
clear that
\begin{equation}\label{eq: spelling and degree}
    \sum_{s\, |\, i_s = j} \epsilon_s = \degree_g(c_j).
\end{equation}

%JMR
The number of factors in a spelling of $g$, (i.e.~$r$ in
(\ref{eq:spelling})), is not uniquely determined. We define the {\it
spelling length} of $g$, denoted $\Lambda(g)$, to be the smallest
possible number of factors
%JMR delete(i.e. $r$ in (\ref{eq:spelling}))
amongst all spellings of $g$.
%  The
%length of a spelling is the number of such factors (ie, $r$ in
%(\ref{eq:spelling})). We define the {\it spelling length} of $g$,
%denoted $\Lambda_n(g)$, to be the shortest possible length of a
%spelling of $g$.
From (\ref{eq: spelling and degree}) it follows that the spelling
length is determined modulo 2 by the sum of the degrees of the
generators,
\begin{equation}\label{eq: spelling mod 2}
    \Lambda(g) = \sum_{j=1}^n \degree_{g}(c_j) \mod 2,
\end{equation}
and is bounded from below by the sum of their absolute values,
\begin{equation}\label{eq: spelling length and degree}
    \Lambda(g) \ge \sum_{i=1}^n |\degree_{g}(c_i)|.
\end{equation}
We refer to (\ref{eq: spelling length and degree}) as the abelian
bound on the spelling length.

Let $\phi: D^2 \rightarrow \Rr^2$ be differentiable, and let
$\partial \phi: S^1 \rightarrow \Rr^2-\Rcal_\phi$ denote the
boundary map of $\phi$. Choose the points $y_j$ above to be regular
values of $\phi$, ie $y_j \in \Rcal_\phi$, and take $q$ to lie in
the image of $\partial \phi$. We may regard $\partial \phi$ as a
loop in $\Rr^2 - R$ based at $q$. We denote its homotopy class by
$[\partial \phi] \in \pi_1(\Rr^2 - R, q)$.  As the following shows,
the spelling length of $[\partial \phi]$ gives a lower bound on the
cardinality of $\phi^{-1}(R)$.

\begin{prop}\label{prop:disk}
Given $\phi: D^2\rightarrow \Rr^2$ smooth, $R = \{y_1,\ldots,y_n\}
\subset \Rcal_\phi$, and $\pi_1(\Rr^2-R,q) \backsimeq
F(c_1,\ldots,c_n)$, with generators $c_j$ as above.
 Then
\begin{equation}
  \label{eq:disk}
  \sum_{j=1}^n D_\phi(y_j) \ge \Lambda([\partial \phi]).
\end{equation}
\end{prop}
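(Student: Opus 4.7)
The plan is to exhibit an explicit spelling of $[\partial \phi]$ whose length is exactly $\sum_{j=1}^{n} D_\phi(y_j)$; the bound (\ref{eq:disk}) then follows immediately from the definition of spelling length as a minimum over all spellings.

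First I would set $P := \phi^{-1}(R)$, a finite subset of $\mathrm{int}(D^2)$ since every $y_j$ is a regular value. Pick a basepoint $q_0 \in \partial D^2$ with $\phi(q_0) = q$, available because $q$ lies in the image of $\partial \phi$. For each $p \in P$ choose a small open disk $B_p \subset \mathrm{int}(D^2)$, with the $B_p$'s pairwise disjoint and disjoint from $\partial D^2$, and small enough that $\phi|_{\overline{B_p}}$ is a diffeomorphism onto a neighbourhood of $y_{j(p)}$, where $j(p)$ is defined by $\phi(p) = y_{j(p)}$. Enumerate $P = \{p_1,\ldots,p_k\}$ and select pairwise disjoint arcs $\eta_i$ from $q_0$ to a point on $\partial B_{p_i}$ in $\Omega := D^2 \setminus \bigcup_j \mathrm{int}(B_{p_j})$. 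Let $\alpha_i$ denote the loop based at $q_0$ that runs along $\eta_i$, traverses $\partial B_{p_i}$ once anticlockwise, and returns along $\eta_i^{-1}$.

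The central step — the one I expect to be the main obstacle to pin down cleanly — is the topological identity
\begin{equation*}
[\partial D^2] = [\alpha_{\pi(1)}]\,[\alpha_{\pi(2)}] \cdots [\alpha_{\pi(k)}]
\end{equation*}
in $\pi_1(\Omega, q_0)$, where $\partial D^2$ is traversed anticlockwise and $\pi$ is the permutation determined by the cyclic order of the outgoing tangents of the $\eta_i$'s at $q_0$. This is the classical presentation of a disk with $k$ holes: $\pi_1(\Omega,q_0)$ is free on $[\alpha_1],\ldots,[\alpha_k]$, and the outer boundary class factors as an ordered product of the inner ones. I would verify it directly by cutting $\Omega$ along the $\eta_i$'s, noting that the resulting region is simply connected, and reading off its boundary word, rather than quoting a black-box Seifert--van Kampen statement; the orientations, basepoint matching and cyclic order all need to be tracked carefully here.

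Finally I would push this relation forward by $\phi$. Because $\phi|_{\overline{B_{p_i}}}$ is a diffeomorphism, $\phi\circ \partial B_{p_i}$ encircles $y_{j(p_i)}$ exactly once, anticlockwise if $\det\phi'(p_i) > 0$ and clockwise otherwise. By the conjugacy characterization (\ref{eq:conjugate_generators}) of the generators $c_j$,
\begin{equation*}
[\phi\circ\alpha_i] = h_i\, c_{j(p_i)}^{\epsilon_i}\, h_i^{-1}, \qquad \epsilon_i := \sgn \det\phi'(p_i),
\end{equation*}
for a suitable $h_i \in \pi_1(\Rr^2 - R, q)$ coming from the path $\phi\circ\eta_i$ closed up inside the image of $\overline{B_{p_i}}$. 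Composing with the boundary identity yields
\begin{equation*}
[\partial\phi] = \prod_{i=1}^{k} h_{\pi(i)}\, c_{j(p_{\pi(i)})}^{\epsilon_{\pi(i)}}\, h_{\pi(i)}^{-1},
\end{equation*}
a spelling of $[\partial\phi]$ with exactly $k = |P| = \sum_{j=1}^{n} D_\phi(y_j)$ factors, so $\Lambda([\partial\phi]) \le \sum_{j=1}^{n} D_\phi(y_j)$ as required.
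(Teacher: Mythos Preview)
Your proposal is correct and follows essentially the same approach as the paper: decompose $\partial D^2$, as a loop based at a preimage of $q$, into a concatenation of small loops each encircling a single point of $\phi^{-1}(R)$, then push forward by $\phi$ and invoke (\ref{eq:conjugate_generators}) to recognise each image loop as a conjugate of some $c_j^{\pm 1}$. Your version is in fact somewhat more explicit than the paper's (which relies on a figure for the deformation), spelling out the holed-disk presentation and tracking orientations via $\sgn\det\phi'(p_i)$.
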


\begin{proof}
Let $N = \sum_{j=1}^n D_\phi(y_j)$, so that $N$ is
the number of points in $\phi^{-1}(R)$.
%the preimages under
%$\phi$ of the points in $R$.
Below we argue that
$[\partial \phi]$ can be expressed as a product of $N$ factors,
\begin{equation}
  \label{eq:spelling_of_dphi}
  [\partial \phi] = [\gamma_1] \cdots [\gamma_N],
\end{equation}
in which each factor
%$[\gamma_\alpha]$
is conjugate to a generator or an inverse generator of
$F(c_1,\ldots,c_n)$.
%$\pi_1(\Rr^2-R,q)$.
%ie $\gamma = h c_{i_j} h^{-1}$ or $\gamma = h c^{-1}_{i_j} h^{-1}$.
Then (\ref{eq:spelling_of_dphi}) constitutes a spelling of $[\partial
\phi]$ of length $N$, and (\ref{eq:disk}) follows from the
definition of the spelling length.

To establish the spelling (\ref{eq:spelling_of_dphi}), let $\phi^{-1}(R) = \{x_1,\ldots,
x_N\}$.  We note that $x_j$ is in the interior of $D^2$.
% denote the points in $\phi^{-1}(R)$.%
%
%preimages under $\phi$ of the
%points in $R$, and
Take $p \in \partial D^2$ such that $\phi(p) = q$.
We regard $S^1 = \partial D^2$  as a loop based
at $p$.  As indicated in Figure \ref{fig:1}, while
keeping $p$ fixed, we can continuously deform $\partial D^2$
into a concatenation of $N$ loops based at $p$, each of which
encloses one of the $x_a$'s once (in the anticlockwise sense) and encloses none of the other
$x_a$'s.
The image of this deformation under $\phi$
yields a homotopy from $\partial \phi$ to a concatenation of $N$
 loops $\gamma_a$ in $\Rr^2 - R$ based at $q$,
each of which is freely homotopic
 in $\Rr^2-R$ to an oriented $\epsilon$-circle about  $y_{j_a} =
 \phi(x_a)$.  From (\ref{eq:conjugate_generators}),
$[\gamma_a]$ is conjugate in  $\pi_1(\Rr^2 - R,q)$ to a generator
$c_j$ or an inverse generator $c_j^{-1}$, depending on the orientation of $\gamma_a$.
%to either $c_{j_\alpha}$ or $c_{j_\alpha}^{-1}$, depending on its orientation.

\end{proof}

  \begin{figure}
  \begin{center}
  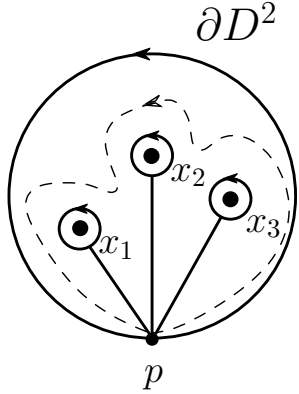
  \caption{The boundary of the two-disk, regarded as a  loop based
    at $p$, can be deformed into a concatenation of loops based
    at $p$ encircling each of the preimages $x_1, \ldots x_N$.}
  \label{fig:1}
  \end{center}
  \end{figure}

For $g = [\partial\phi]$, $\degree_{g}(c_j)$
is equal to  $d_\phi(y_j)$. Combining (\ref{eq: spelling length and degree}) and (\ref{eq:disk}), we have the following sequence of inequalities
$$ \sum_{j=1}^n D_\phi(y_j) \ge \Lambda([\partial \phi]) \ge \sum_{i=1}^n |d_\phi(y_j)|.$$
An example
where the inequality is strict is $g = c_1 c_2 c_1^{-1} c_2^{-1}$;
in this case it is easy to show that $\Lambda(g) = 2$ while
$\degree_{g}(c_1) = \degree_{g}(c_2) = 0$. %%

\subsection{Absolute degree of $S^2$-valued maps on $D^2$}\label{sec: abs degree S2}

Let $\mu: D^2 \rightarrow S^2$ be a differentiable $S^2$-valued map on
$D^2$ with boundary map $\partial \mu: S^1 \rightarrow S^2$.
%Unlike the case of $\Rr^2$-valued
%maps, $\partial \mu$ does not determine $\mu$ up to homotopy; it is necessary to
%specify the degree of a regular value of $\mu$ as well
%
Let $S  =\left\{s_0, s_1,\ldots, s_{n}\right\} \subset \Rcal_\mu$
denote a set of $n+1$ regular values of $\mu$.  By analogy with
Proposition~\ref{prop:disk}, we seek a lower bound on the number of
points in $\mu^{-1}(S)$. In contrast to Proposition~\ref{prop:disk},
the bound we obtain will depend not only on the homotopy class of
$\partial \mu$, but also on the absolute and algebraic degrees of
one of the $s_j$'s, which we fix to be $s_0$. The bound is obtained
by excising a neighbourhood of $\mu^{-1}(s_0)$ from $D^2$ and
defining an $\Rr^2$-valued map on the remainder to which
Proposition~\ref{prop:disk} can be applied.

Let $\Pi$  denote the  projection  from $S^2 -
\{s_0\}$ to $\Rr^2$, with $s_0$ corresponding to the point at
infinity. For $1 \le j \le n$, let $r_j = \Pi(s_j)$,   and let $R =
\{r_1,\ldots, r_n\}$.  Also, take $u \in S^2$ in the image of $\partial \mu$, and let
$q = \Pi(u)$. Then, since $\Pi: S^2 - \{s_0\} \rightarrow \Rr^2$ is a diffeomorphism,
\begin{equation}\label{eq: iso_fundamental groups}
    \pi_1(S^2 - S, u) \cong \pi_1(\Rr^2 - R, q) \cong F(c_1,\ldots,
    c_n),
\end{equation}
where, as in Section~\ref{sec: abs degree R2}, the generator $c_j$
is the homotopy class of an anticlockwise loop $\gamma_j$ in $\Rr^2
- R$ based at $q$ which encloses $r_j$ once and encloses none of the
other $r_k$'s. Equivalently, we may regard $c_j$ as the homotopy
class of a  loop $\delta_j$ in $S^2 - S$ based at $u$ which
separates $s_j$ from the other $s_k$'s and is positively oriented
with respect to $s_j$.   Indeed, we may take $\delta_j =
\Pi^{-1}(\gamma_j)$.
%In view of
%(\ref{eq: iso_fundamental groups}),
In what follows,
we regard $[\partial \mu]$ as an
element of $F(c_1,\ldots,c_n)$.

Let $\delta_0$ be a loop in $S^2 - S$ based at $u$ which separates $s_0$
from the other $s_j$'s and is positively oriented with
respect to $s_0$.
% at $s_0$.
Then $\gamma_0 = \Pi(\delta_0)$ is a loop in $\Rr^2$ based at $q$
which encloses each of the $r_j$'s once in the clockwise sense.
Let $c_0 = [\gamma_0]$ % \in F(c_1,\ldots,c_n)$
    denote its homotopy class.  Then $c_0$ may
be expressed as a product of the $c_j$'s in which the sum of the
exponents of each of the $c_j$'s is equal to $-1$.

We shall use the following notation.  Given subsets $V$ and $W$ of a group
$G$, we define their {\it set product} $VW \subset G$ by
\begin{equation}\label{eq: product of normal sets}
    VW = \{ v w \, | \, v \in V, w \in W\}.
\end{equation}
We denote the $n$-fold product of $V$ with itself by $V^n$. Clearly,
 if $V$ and $W$ are invariant under conjugation,
ie $h V h^{-1} = V$ for all $h \in G$ and similarly for $W$, then
$VW$ is invariant under conjugation, in which case the set product is commutative, ie
\begin{equation}\label{eq:commutative set product}
    VW = WV.
\end{equation}
Given $g \in G$, let $\langle g \rangle$ denote its conjugacy class, ie
\begin{equation}\label{eq: conjugacy class}
    \langle g \rangle = \{ g' \in G \, | \, g' = h g h^{-1} \ \text{for some} \ h \in G\}.
\end{equation}
Clearly $\langle g \rangle$ is invariant under conjugation, so the set product of
conjugacy classes is commutative.
%(In passing we recall a basic
%result for finite groups, namely that multiplicities in the product
%of conjugacy classes appear as coefficients in polynomials whose
%roots are the group characters, a fact which implies that the
%characters are algebraic integers [REF].)

The following gives a lower bound for the number of points in
$\mu^{-1}(S)$, given the absolute and algebraic degrees of $s_0$:
\begin{prop}\label{prop:sphere}
Let $P = \half (D_\mu(s_0) + d_\mu(s_0))$ and $N = \half (D_\mu(s_0)
- d_\mu(s_0))$ denote the number of points in $\mu^{-1}(s_0)$ with
positive and negative orientation respectively. Let $\langle c_0
\rangle $ denote the conjugacy class of $c_0$ in
$F(c_1,\ldots,c_n)$, and let $\Vcal_{P,N} \subset F(c_1,\ldots,c_n)$
be the set product
 given by
\begin{equation}\label{eq: VPN}% fixed sign here
    \Vcal_{P,N} = \{[\partial \mu]\} \langle c_0^{-1} \rangle^{P} \langle c_0 \rangle^{N}.
\end{equation}
 Then
\begin{equation}
  \label{eq:sphere}
\sum_{j=1}^{n}D_{\mu}(s_j) \ge \min_{g \in \Vcal_{P,N}} \Lambda(g).
\end{equation}

\end{prop}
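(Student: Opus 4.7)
The plan is to mirror the argument of Proposition~\ref{prop:disk} on a disk with $P+N$ holes obtained by excising small neighbourhoods of $\mu^{-1}(s_0)$, then cut this domain open along paths to reduce to the disk case already handled. Label $\mu^{-1}(s_0) = \{x_0^1,\ldots,x_0^{P+N}\}$ so that $x_0^1,\ldots,x_0^P$ are the positive-orientation preimages and $x_0^{P+1},\ldots,x_0^{P+N}$ are the negative-orientation ones. Choose disjoint open disks $B_i \subset int(D^2)$ about each $x_0^i$, small enough that $\mu(\overline{B_i})$ lies in a neighbourhood of $s_0$ disjoint from $\{s_1,\ldots,s_n\}$, and let $D' = D^2 \setminus \bigcup_i B_i$. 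Since $\mu(D') \subset S^2 - \{s_0\}$, the composition $\phi := \Pi \circ \mu$ is smooth as a map $D' \to \Rr^2$, and $\phi^{-1}(r_j) = \mu^{-1}(s_j)$ for each $j \ge 1$, with each $r_j$ a regular value of $\phi$.

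Next I will cut $D'$ into a disk. Pick disjoint simple paths $\alpha_i$ in $D'$ from a basepoint $p \in \partial D^2$ (with $\phi(p) = q$) to a point $p_i \in \partial B_i$, meeting only at $p$ and avoiding the finite set $\phi^{-1}(R)$. Slicing $D'$ open along $\bigcup_i \alpha_i$ produces a simply connected region $D''$ homeomorphic to a disk whose positively oriented boundary, with the two sides of each cut identified, is the loop based at $p$
\begin{equation*}
L \;=\; (\partial D^2) \cdot \prod_{i=1}^{P+N} \bigl(\alpha_i \cdot \partial B_i^{\mathrm{cw}} \cdot \alpha_i^{-1}\bigr),
\end{equation*}
where $\partial B_i^{\mathrm{cw}}$ carries the clockwise orientation induced as part of $\partial D'$. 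Proposition~\ref{prop:disk} applied to $\phi|_{D''}$ and the regular values $r_1,\ldots,r_n$ will then give
\begin{equation*}
\sum_{j=1}^n D_\mu(s_j) \;=\; \sum_{j=1}^n D_{\phi|_{D''}}(r_j) \;\ge\; \Lambda([\phi(L)]).
\end{equation*}

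It then remains to identify $[\phi(L)]$ as an element of $\Vcal_{P,N}$, working under the identifications $\pi_1(\Rr^2 - R, q) \cong \pi_1(S^2 - S, u) \cong F(c_1,\ldots,c_n)$ induced by $\Pi$. The contribution of the outer boundary is $[\partial \mu]$. For each $i$, the loop $\mu(\partial B_i)$ (anticlockwise) encircles $s_0$ once with orientation $\sgn \det \mu'(x_0^i)$, so $\mu(\partial B_i^{\mathrm{cw}})$ is a loop around $s_0$ whose class, under $\Pi$ and the convention that a positively oriented loop around $s_0$ represents $c_0$, is conjugate to $c_0^{-1}$ for each of the $P$ positive-orientation preimages and to $c_0$ for each of the $N$ negative-orientation ones. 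Inserting the path conjugations by $\phi_*(\alpha_i)$ yields
\begin{equation*}
[\phi(L)] \;=\; [\partial\mu] \cdot \prod_{i=1}^{P} h_i\, c_0^{-1}\, h_i^{-1} \cdot \prod_{i=P+1}^{P+N} h_i\, c_0\, h_i^{-1} \;\in\; \{[\partial \mu]\} \langle c_0^{-1}\rangle^P \langle c_0\rangle^N \;=\; \Vcal_{P,N},
\end{equation*}
so $\Lambda([\phi(L)]) \ge \min_{g \in \Vcal_{P,N}} \Lambda(g)$, which combined with the previous inequality gives the bound.

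The main obstacle I expect is the orientation bookkeeping in the last step, where three conventions must be reconciled: the clockwise orientation of $\partial B_i$ as induced boundary of $D'$, the local sign of $\det \mu'$ at $x_0^i$, and the convention that $c_0$ corresponds to a positively oriented loop around $s_0$ in $S^2$ (equivalently a clockwise loop in $\Rr^2$ enclosing all $r_j$'s). One must check that they combine so that positive-orientation preimages contribute factors conjugate to $c_0^{-1}$ (and negative-orientation preimages to $c_0$), matching the definition of $\Vcal_{P,N}$. A minor technical point is a genericity argument showing that the cuts $\alpha_i$ can simultaneously be chosen pairwise disjoint and avoid the finite set $\phi^{-1}(R)$.
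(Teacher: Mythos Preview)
Your argument is correct and is essentially the same as the paper's: both excise neighbourhoods of the preimages of $s_0$ connected to the boundary by paths, then apply Proposition~\ref{prop:disk} to the resulting simply connected domain via the stereographic projection $\Pi$. The only difference is organisational---the paper performs the excisions one at a time in an induction that tracks the boundary class through the nested sets $\Vcal_{p_i,n_i}$, whereas you remove all $P+N$ disks at once and cut the slits simultaneously; the paper's inductive bookkeeping makes the ordering of the conjugacy-class factors transparent, while in your version you should note (as the paper does) that the set product of conjugacy classes is commutative, so the a priori interleaved ordering of the $\langle c_0^{-1}\rangle$ and $\langle c_0\rangle$ factors coming from the geometric arrangement of the cuts does not matter. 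Your orientation analysis is correct and agrees with the paper's: a preimage with $\sgn\det\mu'=\sigma$ contributes a factor conjugate to $c_0^{-\sigma}$.
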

\noindent Thus, Proposition~\ref{prop:sphere} implies that
\begin{equation}\label{eq: sphere2}
\sum_{j=0}^{n}D_{\mu}(s_j) \ge D_\mu(s_0) + \min_{g \in \Vcal_{P,N}}
\Lambda(g).
\end{equation}
\begin{proof}

As we show below, by excising a suitable neighbourhood of
$\mu^{-1}(s_0)$, we can construct a differentiable map $\mu_{P+N}: D^2 \rightarrow S^2$
%with boundary map $\partial \phi_0: S^1 \rightarrow \Rr^2 -\Rcal$
such that
\begin{align}\label{eq: mu_P+N props}
i)&\ \  D_{\mu_{P+N}}(s_j) = D_\mu(s_j), \ \ 1 \le j \le n, \nonumber\\
ii)&\ \ % \partial \mu_{P+N}\ \text{takes values in $S^2 - S$, with}\
[\partial \mu_{P+N}] \in \Vcal_{P,N}, \nonumber\\
iii)&\ \  \mu_{P+N}^{-1}(s_0) \ \text{is empty}.%\ \text{does not belong to the image of}\ \mu_{P+N}.
\end{align}
In view of iii), the $\Rr^2$-valued map $\phi = \Pi\circ \mu_{P+N}$
is differentiable on $D^2$, with i) $D_\phi(r_j) = D_{\mu}(s_j)$ for
$1 \le j \le n$ and ii) $ [\partial \phi] \in \Vcal_{P,N}$. Then the
claim (\ref{eq:sphere}) follows directly from
Proposition~\ref{prop:disk}, since
\begin{equation}\label{eq: [partial phi]}
\sum_{j=1}^n D_\mu(s_j) =  \sum_{j = 1}^n D_{\phi}(r_j) \ge
\Lambda([\partial \phi])\ge \min_{g \in \Vcal_{P,N}} \Lambda(g).
\end{equation}

The construction of $\mu_{P+N}$ proceeds inductively.  For $0 \le i \le
P+N$, we construct a differentiable map $\mu_i: D^2 \rightarrow
S^2$ such that
\begin{align}\label{eq: mu_i props}
i)&\ \  D_{\mu_{i}}(s_j) = D_\mu(s_j), \ \ 1 \le j \le n, \nonumber\\
ii)&\ \ [\partial
\mu_{i}] \in \Vcal_{p_i,n_i}, \nonumber\\
iii)&\ \ \text{$\mu_i$ has $P-p_i$ (resp.~$N- n_i$)
pre-images of $s_0$ with positive (resp.~negative) orientation,}\nonumber\\
&\ \ \text{with $0 \le p_i \le P$,
$0 \le n_i \le N$ and $p_i + n_i = i$.}%\ \text{does not belong to the image of}\ \mu_{P+N}.
\end{align}
For $i = P+N$, it is evident that $\mu_{P+N}$ satisfies (\ref{eq: mu_P+N props}).

Here is the construction. For $i = 0$,
we take $\mu_0 = \mu$, with $p_0 = 0$ and $n_0 = 0$.  Then
$\mu_0$ satisfies (\ref{eq: mu_i props}) trivially.
Next, given $\mu_i$ satisfying (\ref{eq: mu_i props}) with $0 \le i < P+N$,
we construct $\mu_{i+1}$ as follows.
Take $x \in \mu_i^{-1}(s_0)$ and let $\sigma = \sgn \det \mu'_i(x)$.
Take $\epsilon
> 0$ and take $\tilde{u}^\epsilon$ to be the point on
$S^1 = \partial D^2$ at a distance $\epsilon$ anticlockwise from
$\tilde{u}$, where $\partial \mu \left(\tilde{u}\right) = u$.
%so that $|p - p^\epsilon| =
%\epsilon$ and the short arc from $p_\epsilon$ to $p$ along $S^1$ is
%positively oriented.
Let $L^\epsilon$ be a non-self-intersecting differentiable curve from $x$ to $\tilde{u}^\epsilon$ which,
apart from its endpoints, lies in the interior of $D^2$ and contains no points in $\mu_i^{-1}(S)$.
Choose $\epsilon$ sufficiently small so
that $U^\epsilon$, the open $\epsilon$-neighbourhood of $L^\epsilon$,
contains no points in $\mu_i^{-1}(S)$ other than
$x$. See Figure~\ref{fig:2}. The boundary of $U^\epsilon$, oriented
%anti Fixed the sign here
clockwise, may be regarded as a loop based at $\tilde{u}$ which encloses a
single point in $\mu^{-1}(s_0)$
%with orientation $-%fixed the sign here
% \sigma$
and encloses no other points in $\mu^{-1}(S)$.   It
follows that $[\mu_i(\partial U^\epsilon)]$, regarded as an element
of $F(c_1, \ldots, c_n)$, is conjugate to $c_0^{-\sigma}$.

 \begin{figure}
 \begin{center}
 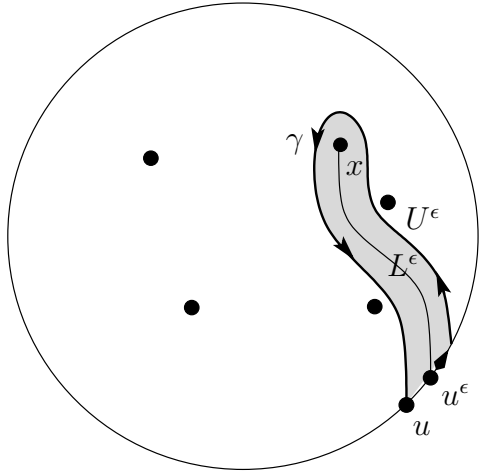
 \caption{$x$ is one of the points in $\mu^{-1}(s_0)$.  $U^\epsilon$
 is the $\epsilon$-neighbourhood of a curve $L^\epsilon$ from
 $u^\epsilon$ to $x$, with $\epsilon$ small enough so that $x$ is the
 only point in $\mu^{-1}(S)$ which lies in $U^\epsilon$. }
 \label{fig:2}
 \end{center}
 \end{figure}

The domain $D^2 - U^\epsilon$ is homeomorphic to $D^2$; let $f: D^2
\rightarrow D^2 - U^\epsilon$ be a homeomorphism.  We may take $f$
to be a diffeomorphism on the interior of $D^2$. Take $\mu_{i+i} =
\mu_i \circ f$. By construction and the induction hypothesis,
$D_{\mu_{i+1}}(s_j) = D_{\mu_i}(s_j) = D_\mu(s_j)$ for $1 \le j \le
n$.  Let $P-p_{i+1}$ and $N-n_{i+1}$ denote the number of points in
$\mu^{-1}_{i+1}(s_0)$ with positive and negative orientation
respectively.  By construction, if $\sigma = 1$, we have that
$p_{i+1} = p_i + 1$, $n_{i+1} = n_i$, while if $\sigma = -1$, we
have that $p_{i+1} = p_i$, $n_{i+1} = n_i  + 1$.  In either case, by
induction, $p_{i+1} + n_{i+1} = i+1$. Also by construction,
$\partial \mu_{i+1}$ is homotopic to the concatenation of $\partial
\mu_i$ and $\mu_i(\partial U^\epsilon)$. Since $\left[\partial \mu_i\right] \in
\Vcal_{p_i,n_i}$ (by induction) and $[\mu_i(\partial U^\epsilon)]
\in \langle c_0 \rangle^{-\sigma}$, it follows that $[\partial
\mu_{i+1}] \in \Vcal_{p_i,n_i} \langle c_0 \rangle^{-\sigma} =
\Vcal_{p_{i+1},n_{i+1}}$. So $\mu_{i+1}$ satisfies (\ref{eq: mu_i
props}).

\end{proof}

There exist efficient algorithms for computing the spelling length \cite{private communication}.
%$g \in F(c_1,\ldots,c_n)$,
%for example cubic in the number of letters in $g$
%[REF].
However, we are not aware of general results for obtaining the
minimum spelling length over a product of conjugacy classes. In the
cases that arise in Section~\ref{sec: Lb for Ecal}, we are
nevertheless able to compute an effective lower bound for the
spelling length on $\Vcal_{P,N}$ (cf Propositions~\ref{prop:
spelling length} and \ref{prop: spelling length 2}).

\subsection{Proof of Theorem~\ref{thm:1}} \label{sec: Lb for Ecal}

\begin{proof}

Let $H$ be a homotopy class in $\Ccal_T(O,S^2)$, with invariants $(e,k,\Omega)$ and
$\{w_\sigma\}$.  Given $\nu \in H$,
we show that
\begin{equation}\label{eq: suffices to show1}
E(\nu) \ge \sum_\sigma |w_\sigma| \pi + \Delta(H) \pi.
\end{equation}
Using arguments from \cite{mrz2}, one can show that differentiable
maps are dense in $\Ccal_T(O,S^2) \cap W^{1,2}(O,S^2)$.  Therefore,
we may assume that $\nu$ is differentiable. For each $\sigma$,
choose $s_\sigma \in \Rcal_\nu \cap \Sigma_\sigma$.
%Recalling (\ref{eq: d and w}), we have that $w_\sigma = d_\nu(s_\sigma)$.
Then from Lemma~\ref{lem:1},
it suffices to show that for all $\svec_\sigma \in \Rcal_{\nuvec}\cap \Sigma_\sigma$, we have the inequality
%, for all $H$,
\begin{equation}\label{eq: suffices to show2}
 \sum_{\sigma} \left(D_{\nu}(s_\sigma) - |w_\sigma|\right)
  \ge  \Delta(H).
\end{equation}
Since $D_{\nu}(s_\sigma) \ge |d_\nu(s_\sigma)|$ (cf (\ref{eq: d and
D})) and $d_\nu(s_\sigma) = -w_\sigma $ (cf (\ref{eq: d and w})),
(\ref{eq: suffices to show2}) follows immediately for $H$ conformal
or anticonformal (cf (\ref{eq: Delta conformal})). For $H$
nonconformal, (\ref{eq: suffices to show2}) is equivalent to (cf
(\ref{eq:Delta(H) neq 0}))
\begin{subequations}\label{eq: suffices to show3}
\begin{align}
 \sum_{\sigma} \left(D_{\nu}(s_\sigma) - |w_\sigma|\right)   &\ge  2w_{\sigma_{+}} -
 2\sum_{\sigma\sim \sigma_{+}}
    \Phi(w_\sigma)  - 2\chi,\label{eq: Delta = +}\\
 \sum_{\sigma}\left(D_{\nu}(s_\sigma) - |w_\sigma|\right) &\ge  2 |w_{\sigma_{-}}| -
2\sum_{\sigma\sim \sigma_{-}}
    \Phi(-w_\sigma) - 2\chi.\label{eq: Delta = -}
\end{align}
\end{subequations}

Without loss of generality, we may assume that the edge signs are all equal to $+1$, ie
\begin{equation}\label{eq: edge signs  = 1}
e_x = e_y = e_z = +1.
\end{equation}
(This follows from noting that the Dirichlet energy is invariant
under reflection in, for example, the $xy$-plane of the target
space. That is, if $\nu = (\nu_x,\nu_y, \nu_z)$ and $\nu' =
(\nu_x,\nu_y,-\nu_z)$, then $E(\nu') = E(\nu)$. Under reflection in
the $xy$-plane, the edge signs transform as $(e_x, e_y, e_z) \mapsto
(e_x,e_y,-e_z)$.
%, $(k_x,k_y,k_z) \mapsto
%-k_x,-k_y,k_z)$, and $\Omega \mapsto -\Omega$.
Similarly, $e_x$ and $e_y$ change sign under reflections in the
$yz$- and $zx$-coordinate planes respectively.) With (\ref{eq: edge
signs = 1}), the expression (\ref{eq:wrap in terms of e,k,omega})
for the wrapping numbers becomes
\begin{equation}\label{eq: wrap in terms of 2}
   w_\sigma =  \frac{1}{4\pi} \Omega\  + \frac18
+ \frac{1}{2}\sum_j \sigma_j k_j  -\delta_{\sigma,\,(+++)}.
\end{equation}

We proceed to prove (\ref{eq: suffices to show3}).
The argument divides into several
cases according to the signs of the $k_j$'s.  We shall consider  one
representative case in detail, namely where  all the $k_j$'s are positive.  The
arguments for the remaining cases are then briefly sketched.  For definiteness, and without loss of generality,
we  assume that $k_x \le k_y \le k_z$.\\

\noindent{\it Case 1. $k_x, k_y, k_z > 0$.}\quad
In view of (\ref{eq: edge signs  = 1}), we have that $\chi = 0$.
%\begin{equation|}\label{eq:  chi = 0}
%\chi = 0.
%\end{equation}
We consider the
bound (\ref{eq: Delta = +}) first. Since $D_\nu(s_\sigma) \ge |w_\sigma|$,
(\ref{eq: Delta = +}) is obviously implied by
\begin{equation}
  \label{eq:Delta_+_restricted}
%D_{\nu}(s_{\sigma_+}) -  |d_\nu(s_{\sigma_+})| +
D_{\nu}(s_{\sigma_+}) -  |w_{\sigma_+}| +
\sum_{ \sigma\sim \sigma_+} \left(D_{\nu}(s_\sigma) -  |w_\sigma|\right)
\ge  2w_{\sigma_{+}} - 2\sum_{\sigma
\sim \sigma_{+}}
    \Phi(w_\sigma),
\end{equation}
in which the sector sum on the left-hand side is restricted to
$\sigma_+$ and the sectors adjacent to $\sigma_+$.  (It turns out that
these are the only sectors in which $D_\nu(s_\sigma)$ is, in certain
cases,
necessarily greater than $|w_\sigma| = |d_\nu(s_\sigma)|$.)
From (\ref{eq: wrap in terms of
2}), we may take $\sigma_+ = (+++)$.
The sectors adjacent to
$\sigma_+$ are  then $(-++)$, $(+-+)$ and $(++-)$.  To simplify the
notation, we replace $(+++)$, $(-++)$, $(+-+)$ and $(++-)$ by
$0$, $1$, $2$ and $3$ respectively.  We let $S = \{s_0, s_1, s_2, s_3\}$, and
 denote a generic point in $S$ by $s_j$.

If we identify $O$ with the unit disk $D^2$, we may identify $\nu$ with an
$S^2$-valued map  $\mu$ on $D^2$.   We proceed to apply
Proposition~\ref{prop:sphere} to obtain a lower bound on $\sum_j
D_\nu(s_j)$.
  For this we  need to calculate
generators for the fundamental group of $S^2 - S$ based at a point
$u$ in the image of the boundary of $\mu$, and to express $[\partial
\mu]$ in terms of them. For definiteness, we take $u = \xhat$
($\xhat$ belongs to the image of $\mu$ since, by assumption, $e_x =
1$).

The loops we consider are sequences of quarter-arcs of great circles
between the coordinate unit vectors $E = \{\pm \xhat$, $\pm \yhat$,
$\pm \zhat\}$.  We will denote these loops as follows.  Given $e, e'
\in E$ with $e \ne -e'$, let $(e, e')$ denote the quarter-arc of the
great circle from $e$ to $e'$ if $e \ne e'$, and the null arc at $e$
if $e = e'$.  Given $e_i \in E$ with $e_i \ne -e_{i+1}$, let $(e_1,
\ldots, e_n)$ denote the curve composed of the sequence $(e_1, e_2),
(e_2,e_3),\ldots, (e_{n-1},e_n)$.  Curves can be concatentated in
the obvious way, ie $(e_1, \ldots, e_m, \ldots, e_n) = (e_1, \ldots,
e_m) (e_m,\ldots, e_n)$.  We let $(e_1,\ldots,e_n)^i$ denote the
curve $(e_1,\ldots, e_n)$ concatenated with itself $i$ times.  In
this notation, the boundary of $\mu$, regarded as a loop in $S^2$
based at $\xhat$, is given by
\begin{equation}\label{eq: boundary of mu as broken arcs}
\partial \mu = C_{z}^{k_z}\, (\xhat, \yhat)\,
C_x^{k_x} \,(\yhat,  \zhat)\,  C_{y}^{k_y} \,(\zhat, \xhat),
\end{equation}
where
%% \begin{align}\label{eq: great circles}
%% C_z &= (\xhat, -\yhat,  -\xhat, \yhat, \xhat), \nonumber\\
%% C_{yz} &= (\yhat, -\zhat,  -\yhat,  \zhat, \yhat), \nonumber\\
%% C_{zx} &= (\zhat, -\xhat,  -\zhat,  \xhat, \zhat).
%% \end{align}
\begin{equation}\label{eq: great circles}
C_{x} = (\yhat,\, -\zhat,\,  -\yhat,\,  \zhat,\, \yhat), \ \ C_{y} =
(\zhat,\, -\xhat,\,  -\zhat,\,  \xhat,\, \zhat), \ \ C_z = (\xhat,\,
-\yhat,\, -\xhat,\, \yhat,\, \xhat)
\end{equation}
describe the great circles about the $x$-, $y$- and $z$-axes.

Let
 \begin{equation}\label{eq:tilde gamma}
 \delta_0 = (\xhat,\, \yhat,\, \zhat,\, \xhat),\ \
 \delta_1 = (\xhat,\, \yhat,\, -\xhat,\, \zhat,\, \yhat,\, \xhat),\ \
 \delta_2 = (\xhat,\, \zhat,\, -\yhat,\, \xhat),\ \
 \delta_3 = (\xhat,\, -\zhat,\, \yhat,\, \xhat).
 \end{equation}
%\begin{equation}
%  \label{eq:gamma_0}
%  \delta_0 = (x\ z\ y\ x),
%\end{equation}
It is easily verified that $\delta_j$ is a loop based at $\xhat$
which traverses the boundary of $\Sigma_j$ once with positive
orientation, and therefore separates $s_j$ from the other $s_k$'s
and encloses $s_j$ with positive orientation. Let $c_j \in \pi_1(S^2
- S,\xhat)$ denote the homotopy class of $\delta_j$.

As discussed in Section~\ref{sec: abs degree S2}, $\pi_1(S^2 -
S,\xhat) \cong F(c_1,c_2,c_3)$. Straightforward calculation yields
the following expressions for $c_0$ and $[\partial \mu]$ in terms of
the generators $c_1$, $c_2$ and $c_3$:
\begin{equation}\label{eq:partial_nu}
   c_0 = c_3^{-1} c_1^{-1} c_2^{-1} = (c_2 c_1 c_3)^{-1},\quad
 [\partial \mu] = c_3^{k_z - 1} c_1^{k_x-1} c_2^{k_y-1}.
\end{equation}
Note that, since $k_j >0$ by assumption, the exponents $k_j - 1$ are nonnegative.

Applying Proposition~\ref{prop:sphere} to $\mu$, we get that
\begin{equation}
  \label{eq:Prop_sphere_applied}
\sum_{j=1}^3 D_\mu(s_j) =   \sum_{j = 1}^3 D_{\nu}(s_j) \ge
\min_{g\in \Vcal_{P,N}} \Lambda(g),
\end{equation}
where
\begin{equation}\label{eq: P and N}
P = \half (D_\nu(s_0)  + d_\nu(s_0)), \quad N = \half (D_\nu(s_0)  -
d_\nu(s_0))
\end{equation}
and the minimum is taken over
\begin{equation}
  \label{eq:S_applied}
 g \in  \{c_3^{k_z - 1} c_1^{k_x-1} c_2^{k_y -1}\}
  \langle c_2 c_1 c_3%^{-1} fixed sign
   \rangle^{P}
   \langle (c_2 c_1 c_3)^{-1}\rangle^{N}.
 % \langle c_2 c_1 c_3\rangle^{N}.
\end{equation}
%JMR added
%We note that $d_\mu(s_0) = -w_{+++} < 0$, implies that $n > p$.

The following combinatorial-group-theoretic result implies a bound
on $\Lambda(g)$ in (\ref{eq:Prop_sphere_applied}):
\begin{prop}\label{prop: spelling length}
Let $\Pcal_{p,n} \subset F(A,B,C)$ be the set product given by
\begin{equation}\label{eq: Pcal}
    \Pcal_{p,n} = \langle A^i B^j C^k\rangle  \langle CBA\rangle^p \langle (CBA)^{-1}
\rangle^n.
\end{equation}
Then for $g \in \Pcal_{p,n}$,
\begin{equation}
  \label{eq:spelling_length_result}
  \Lambda(g) \ge i + j + k - (p+n).
\end{equation}
\end{prop}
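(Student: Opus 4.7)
The plan is to prove the bound by induction on $p+n$, with the base case following from the abelian bound \eqref{eq: spelling length and degree}: when $p=n=0$, $g \in \langle A^i B^j C^k \rangle$ satisfies $\deg_g(A) = i$, $\deg_g(B) = j$, $\deg_g(C) = k$, so $\Lambda(g) \ge i + j + k$, matching the claimed inequality.

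For the inductive step, I use that $\langle CBA \rangle$ and $\langle (CBA)^{-1}\rangle$ are each conjugation invariant, which yields the set-product factorisations $\Pcal_{p+1,n} = \Pcal_{p,n}\langle CBA\rangle$ and $\Pcal_{p,n+1} = \Pcal_{p,n}\langle (CBA)^{-1}\rangle$. Any $g' \in \Pcal_{p+1,n}$ thus decomposes as $g' = g v$ with $g \in \Pcal_{p,n}$ and $v \in \langle CBA\rangle$ (the $\Pcal_{p,n+1}$ case is symmetric, with $v \in \langle (CBA)^{-1}\rangle$). The inductive step then reduces to the single-step refinement $\Lambda(g') \ge \Lambda(g) - 1$: combined with the inductive hypothesis $\Lambda(g) \ge i + j + k - (p+n)$, this gives $\Lambda(g') \ge i + j + k - (p+n+1)$, as required.

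The main obstacle is this single-step bound. A naive application of sub-additivity gives only $\Lambda(g) = \Lambda(g'v^{-1}) \le \Lambda(g') + \Lambda(v^{-1}) = \Lambda(g') + 3$, equivalently $\Lambda(g') \ge \Lambda(g) - 3$, which falls short by $2$. To close this gap I would start from a minimum spelling of $g'$, adjoin an explicit three-letter spelling of $v^{-1}$ of the form $(\alpha C^{-1}\alpha^{-1})(\alpha B^{-1}\alpha^{-1})(\alpha A^{-1}\alpha^{-1})$ for a judiciously chosen $\alpha \in F(A,B,C)$, and then show by a free-group cancellation argument that two of the three adjoined factors can be absorbed into or cancelled against existing factors of the spelling, yielding $\Lambda(g) \le \Lambda(g') + 1$. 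The key structural input is that the word $CBA$ contains exactly one letter from each generator, so the three adjoined conjugated generators involve three distinct letters $A^{-1}$, $B^{-1}$, $C^{-1}$; this should allow matching two of them against appropriate existing factors after commutation moves on conjugated generators. The non-trivial point, which I expect to be the main technical difficulty, is ensuring that such cancellations can always be arranged given only the constraint $g' \in \Pcal_{p+1,n}$ and not some stronger control on the spelling of $g'$.
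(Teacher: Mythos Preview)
Your single-step refinement $\Lambda(g') \ge \Lambda(g) - 1$ is false, and this is not a technicality that can be patched by a cleverer cancellation argument --- it fails even when you allow yourself complete freedom in choosing the decomposition $g' = gv$. Take $i=j=k=0$, so that $\Pcal_{p,n} = \langle CBA\rangle^{p}\langle (CBA)^{-1}\rangle^{n}$. Then $g' = e \in \Pcal_{1,1}$, since $e = CBA\cdot(CBA)^{-1}$. Any factorisation $g' = gv$ with $v\in\langle CBA\rangle$ forces $g = v^{-1}\in\langle(CBA)^{-1}\rangle$, and every element of that conjugacy class has spelling length exactly $3$ (by the abelian bound and the obvious three-factor spelling). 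Thus $\Lambda(g)=3$ while $\Lambda(g')+1=1$, so $\Lambda(g')\ge\Lambda(g)-1$ fails for \emph{every} choice of decomposition. Peeling off a $\langle(CBA)^{-1}\rangle$ factor instead gives the same obstruction by symmetry. The underlying issue is that multiplying by a conjugate of $(CBA)^{\pm1}$ can change the spelling length by $3$, not just $1$; your hoped-for ``absorb two of the three adjoined factors'' phenomenon simply does not hold in general. (Incidentally, your explicit three-letter spelling $(\alpha C^{-1}\alpha^{-1})(\alpha B^{-1}\alpha^{-1})(\alpha A^{-1}\alpha^{-1}) = \alpha(ABC)^{-1}\alpha^{-1}$ represents a conjugate of $(ABC)^{-1}$, not of $(CBA)^{-1}$; these are distinct conjugacy classes in $F(A,B,C)$, though this is a minor slip compared to the structural problem.)

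The paper's proof proceeds quite differently. It sets up a combinatorial model of spellings via \emph{pairings} on words, shows that a minimiser can be taken with an \emph{optimal reduced} pairing, and then argues by contradiction. A reduction lemma does decrease $p$ or $n$ by one, but only under the hypothesis that one of the $(CBA)^{\pm1}$ triples contains an \emph{unpaired} letter in the optimal reduced pairing --- a condition with no analogue in your bare inductive step. Once no such unpaired letters remain, the contradiction is obtained not by counting but by applying the homomorphism $\Phi: F(A,B,C)\to F(A,B)$ defined by $\Phi(A)=A$, $\Phi(B)=B$, $\Phi(C)=A^{-1}B^{-1}$, which kills $CBA$ and reduces the problem to an explicit identity in $F(A,B)$ that visibly fails. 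This homomorphism is the genuinely nonabelian ingredient; an argument that, like yours, uses only sub-additivity and generic cancellation in $F(A,B,C)$ cannot see it.
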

Thus, for example, the minimum spelling length of words of the form
$f_1(ABC) f_1^{-1} f_2(CBA)^{-1}) f_2^{-1}$, where $f_i \in
F(A,B,C)$, is equal to 2 (we can apply Proposition~\ref{prop: spelling length} to this example by taking $i=j=k=1$ and $p=0, n=1$). This can be seen directly as follows: A
spelling of length $2$ is obtained by taking $f_1 = e$ and $f_2 = A$
to get
 $(ABC) A(A^{-1} B^{-1} C^{-1}) A^{-1} = h_1C h_1^{-1}\, h_2 C^{-1} h_2^{-1}$, where
 $h_1 = AB$ and $h_2 = A$.
By (\ref{eq: spelling mod 2}), the minimum spelling length is either
2 or 0, and a spelling of length zero cannot be found as  $ABC$ and
$CBA$ belong to different conjugacy classes in $F(A,B,C)$. Note that
(\ref{eq: spelling length and degree}) gives the abelian lower bound
\begin{equation}\label{eq: crude lower bound}
  \min_{g \in  \Pcal_{p,n}} \Lambda(g) \ge i + j + k + 3(p - n),
\end{equation}
which for $n \le 2p$ already implies Proposition~\ref{prop: spelling
length}.  Thus, Proposition~\ref{prop: spelling length} is stronger
than the abelian bound (\ref{eq: crude lower bound}) for
\begin{equation}\label{eq: degree lower bound}
    n > 2p
\end{equation}  and therefore, requires independent proof in this case.
 The  proof of
Proposition~\ref{prop: spelling length} is given in the Appendix. In
fact, we believe the following result holds,
\begin{equation}\label{eq: better lower bound}
  \min_{g \in  \Pcal_{p,n}} \Lambda(g) \ge i + j + k - n,
\end{equation}
which is stronger than  Proposition~\ref{prop: spelling length} for
$p > 0$. However, for the purposes of Theorem~\ref{thm:1},
Proposition~\ref{prop: spelling length} is sufficient.

From (\ref{eq:Prop_sphere_applied}), (\ref{eq:S_applied}) and
Proposition~\ref{prop:
  spelling length}, it follows that
\begin{equation}
  \label{eq:sum_of_D's}
  D_\nu(s_0) + \sum_{j=1}^3 D_\nu(s_j) \ge k_x + k_y + k_z - 3,
\end{equation}
where we have used $P + N = D_\nu(s_0)$.
Using (\ref{eq: wrap in terms of 2}), we can express the right-hand
side above
% of (\ref{eq:sum_of_D's})
 in terms
of the wrapping numbers,
\begin{equation}
  \label{eq:sum_of_w's}
 k_x + k_y + k_z - 3 = 3w_0 - (w_1 + w_2 + w_3).
\end{equation}
Substituting (\ref{eq:sum_of_w's}) into (\ref{eq:sum_of_D's}) (and
recalling that $w_0 = w_{(+++)} > 0$), we get that
\begin{equation}
  \label{eq:end_of_kpositive}
D_\nu(s_0) - |w_0| +  \sum_{j=1}^3 \left(D_\nu(s_j) - | w_j|\right) \ge 2w_0 -
   \sum_{j=1}^3 (w_j + |w_j|) = 2w_0 -
   2\sum_{j=1}^3 \Phi(w_j),
\end{equation}
which is just the required bound
(\ref{eq: Delta = +}).

The bound (\ref{eq: Delta = -}) is obtained from a similar argument. From
(\ref{eq: wrap in terms of 2}), $\sigma_- = (---)$, so that the sectors adjacent
to $\sigma_-$ are $(+--)$, $(-+-)$, and $(--+)$.
To simplify the
notation, we replace $(---)$, $(+--)$, $(-+-)$, and $(--+)$ by
$0$, $1$, $2$ and $3$ respectively, and let $S = \{s_0, s_1, s_2, s_3\}$.
As above, we argue that
(\ref{eq: Delta = -}) is implied by
\begin{equation}
  \label{eq:Delta_-_restricted}
%D_{\nu}(s_{\sigma_+}) -  |d_\nu(s_{\sigma_+})| +
D_{\nu}(s_{0}) -  |w_{0}| +
\sum_{j = 1}^3 \left(D_{\nu}(s_j) -  |w_j|\right)
\ge  2w_{0} - 2\sum_{j=1}^3
    \Phi(-w_j).
\end{equation}
We proceed to employ Proposition~\ref{prop:sphere} to obtain a lower
bound on $\sum_{j=1}^3 D_\nu(s_j)$.
We introduce loops $\delta_j$ based at $\xhat$,
%\begin{align}\label{eq:tilde gamma2}
%\delta_0 &= (x\ y\ -x\ -z\ -y\ -x\ y\ x),\nonumber\\
%\delta_1 &= (x\ -y\ -z\ x),\nonumber\\
%\delta_2 &= (x\ -z\ -x\ y\ -z\ x),\nonumber\\
%\delta_3&= (x\ z\ -x\ -y\ z\ x).
%\end{align}
\begin{equation}\label{eq:tilde gamma2}
\delta_0 = (x\ y\ -x\ -z\ -y\ -x\ y\ x),\ \
\delta_1 = (x\ -y\ -z\ x),\ \
\delta_2 = (x\ -z\ -x\ y\ -z\ x),\ \
\delta_3 = (x\ z\ -x\ -y\ z\ x).
\end{equation}
%\begin{equation}
%  \label{eq:gamma_0}
%  \delta_0 = (x\ z\ y\ x),
%\end{equation}
One can verify that $\delta_j$ separates $s_j$ from the other
$s_k$'s and encloses $s_j$ with positive orientation. Let $c_j \in
\pi_1(S^2 - S,\xhat)$ denote the homotopy classes of $\delta_j$. As
above, we identify $\nu$ with an $S^2$-valued map  $\mu$ on $D^2$.
Calculation gives
\begin{equation}
   c_0 =c_2^{-1} c_1^{-1} c_3^{-1}, \ \
   [\partial \mu] = c_3^{-k_z} c_1^{-k_x} c_2^{-k_y}.
\end{equation}
Then Proposition~\ref{prop:sphere} together with Proposition~
\ref{prop: spelling length}  yield (\ref{eq: Delta = -}).\\

\noindent{Case 2. $k_x, k_y > 0$, $k_z < 0$.}\quad From (\ref{eq:
wrap in terms of 2}), $\sigma_+$ and $\sigma_-$ are given by $(++-)$
and $(--+)$, respectively, and the calculation proceeds as in Case 1
with one key difference.  In the expressions analogous to
(\ref{eq:Prop_sphere_applied}), the spelling length $\Lambda(g)$
is to be minimised over words $g$ of a slightly different form to
that of Proposition~\ref{prop: spelling length} (ie, it is
the conjugacy classes of $ABC$ and its inverse which appear, rather
than those of $CBA$).  By analogy with Proposition~\ref{prop:
spelling length}, we have the following:
\begin{prop}\label{prop: spelling length 2}
Let $\Qcal_{p,n} \subset F(A,B,C)$ be the set product given by
\begin{equation}\label{eq: Qcal}
    \Qcal_{p,n} = \langle A^i B^j C^k\rangle  \langle ABC\rangle^p \langle (ABC)^{-1}
\rangle^n.
\end{equation}
Then for $g \in \Qcal_{p,n}$,
\begin{equation}
  \label{eq:spelling_length_result_2}
  \Lambda(g) \ge i + j + k - (p+n +2).
\end{equation}
\end{prop}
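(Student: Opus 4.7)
The plan is to prove Proposition~\ref{prop: spelling length 2} by adapting the combinatorial argument used to establish Proposition~\ref{prop: spelling length} in the Appendix, accounting for the replacement of the conjugacy classes $\langle CBA\rangle^{\pm 1}$ by $\langle ABC\rangle^{\pm 1}$.  The extra $-2$ in the bound has a clean group-theoretic origin, namely the identity
\[
ABC \;=\; CBA \cdot \kappa, \qquad \kappa \;:=\; (CBA)^{-1}(ABC) \;=\; A^{-1}B^{-1}C^{-1}ABC,
\]
which expresses the ``twist'' between the two orderings of generators. One verifies directly that
\[
\kappa \;=\; [A^{-1}B^{-1},\,C^{-1}] \cdot C^{-1}[A^{-1},\,B^{-1}]\,C,
\]
a product of two commutators each of spelling length $2$ (each of the form $hc^{\epsilon}h^{-1}\cdot c^{-\epsilon}$ for a single generator $c$); combined with the observation that $\kappa$ lies outside the normal closure of any one generator (the quotient of $F(A,B,C)$ killing any single generator sends $\kappa$ to a nontrivial commutator of spelling length $2$), this yields $\Lambda(\kappa)=4$.

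The strategy is to carry out the Appendix argument for $g \in \Qcal_{p,n}$ step by step, tracking the effect of the substitution $CBA \leftrightarrow ABC$ at each stage.  Concretely, given $g = u \cdot v_1\cdots v_p \cdot w_1\cdots w_n$ with $u \in \langle A^iB^jC^k\rangle$, $v_s = x_s(ABC)x_s^{-1}$ and $w_t = y_t(ABC)^{-1}y_t^{-1}$, one would compare $g$ with the companion element $\tilde g \in \Pcal_{p,n}$ obtained by replacing each $v_s$ with $x_s(CBA)x_s^{-1}$ and each $w_t$ with $y_t(CBA)^{-1}y_t^{-1}$.  Proposition~\ref{prop: spelling length} applied to $\tilde g$ gives $\Lambda(\tilde g) \ge i+j+k-(p+n)$, so Proposition~\ref{prop: spelling length 2} would follow from the bound
\[
\Lambda\bigl(\tilde g g^{-1}\bigr) \;\le\; 2,
\]
uniformly in $p$ and $n$.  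The difference $\tilde g g^{-1}$ can indeed be rewritten, using the conjugation-invariance of $\langle CBA\rangle^{\pm 1}$, as a single conjugate of a product of $p$ conjugates of $\kappa$ and $n$ conjugates of $\kappa^{-1}$.

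The main obstacle --- and the technical heart of the proof --- is precisely establishing this $\Lambda\bigl(\tilde g g^{-1}\bigr)\le 2$ bound.  A naive per-factor substitution contributes $\Lambda(\kappa)=4$ per factor, giving only the much weaker estimate $\Lambda(\tilde g g^{-1}) \le 4(p+n)$, and arbitrary products of $\kappa$-conjugates in $F(A,B,C)$ have spelling length growing linearly in the number of factors (since the stable commutator length of $\kappa$ is positive).  A sharp bound must therefore exploit specific cancellations between the $p$ positive $\kappa$-corrections and the $n$ negative $\kappa^{-1}$-corrections so that, after telescoping, only a single residual commutator of spelling length at most $2$ survives.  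Making this telescoping rigorous will require a careful combinatorial analysis in $F(A,B,C)$ --- most likely paralleling the Appendix argument for Proposition~\ref{prop: spelling length} applied directly to $\Qcal_{p,n}$, so that the $-2$ emerges as a single boundary correction rather than through an explicit reduction.
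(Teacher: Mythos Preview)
Your reduction strategy has a genuine gap that you yourself identify but do not resolve.  The inequality $\Lambda(\tilde g g^{-1}) \le 2$ simply does not hold uniformly in $p$ and $n$: as you note, the stable commutator length of $\kappa$ is positive, so a product of $p$ conjugates of $\kappa$ and $n$ conjugates of $\kappa^{-1}$ with generic conjugators has spelling length growing linearly in $p+n$.  There is no telescoping mechanism that collapses this to a single commutator --- the conjugators $x_s, y_t$ are arbitrary, and nothing forces adjacent $\kappa$-factors to cancel.  Your final paragraph concedes this and says the real work ``will require a careful combinatorial analysis \ldots\ most likely paralleling the Appendix argument for Proposition~\ref{prop: spelling length} applied directly to $\Qcal_{p,n}$''.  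But that is exactly what was to be proved; the comparison with $\tilde g$ has bought nothing.

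The paper's proof is much more direct, and the extra $-2$ has a different (and cleaner) origin than a residual commutator.  One reruns the argument of Proposition~\ref{prop: spelling length} verbatim on $\Qcal_{p,n}$: assume a counterexample, apply Lemma~\ref{lem:untrue_1} repeatedly to strip unpaired triples, and arrive at an identity in $F(A,B,C)$ of the form
\[
f_0\, A^{m-q} B^{m-q} C^{m-q}\, f_0^{-1}\ \prod_{j=1}^{m} f_j (ABC)^{-1} f_j^{-1}\ \prod_{k=1}^{q} g_k (ABC)\, g_k^{-1} \;=\; e,
\]
with $m > 2q+1$ (from the abelian bound).  Now one applies the homomorphism $\Psi:F(A,B,C)\to F(A,B)$ determined by $\Psi(A)=A$, $\Psi(B)=B$, $\Psi(C)=B^{-1}A^{-1}$, which kills $ABC$ rather than $CBA$.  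This yields
\[
A^{m-q} B^{m-q} (B^{-1}A^{-1})^{m-q} \;=\; e
\]
in $F(A,B)$, and this equation has the two solutions $m-q \in \{0,1\}$ rather than only $m-q=0$ as in the $CBA$ case.  Both contradict $m > 2q+1$, and the extra solution $m-q=1$ is precisely the source of the $-2$ in the bound.
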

\noindent Thus, for example, the minimum spelling length of words of
the form $ABC h C^{-1} B^{-1} A^{-1} h^{-1}$ for $h \in F(A,B,C)$,
is obviously zero. The contribution  $-2$ in
(\ref{eq:spelling_length_result_2})
 accounts for
the term $\chi = 1$ in $\Delta(H)$ in this case (cf (\ref{eq: chi}) and
(\ref{eq:Delta(H) neq 0})) .\\

\noindent{Case 3. $k_x > 0$, $k_y, k_z < 0$.}\quad From (\ref{eq:
wrap in terms of 2}), $\sigma_+$ and $\sigma_-$ are given by $(+--)$
and $(-++)$
respectively, and the calculation proceeds as in Case 1.\\

\noindent{Case 4. $k_x, k_y, k_z < 0$.}\quad From (\ref{eq: wrap in
terms of 2}), $\sigma_+$ and $\sigma_-$ are given by $(---)$ and
$(+++)$
respectively, and the calculation proceeds as in Case 2.\\

\noindent{Case 5. $k_x k_y k_z = 0$}.\quad  If one of the $k_j$'s
vanishes, then in most cases, $\sigma_+$ and one of its adjacent
sectors will share the same largest wrapping number, and similarly
for $\sigma_-$.  Then $\Delta(H) = 0$, and (\ref{eq: suffices to
show2}) follows automatically.  Exceptions can arise when $k_x =
k_y$ and $k_z = 0$, but it is straightforward to verify (\ref{eq:
suffices to show3}) in this special case \cite{thesis}; the argument
is omitted.

\end{proof}

\section{Upper bound for $\Ecal(H)$} \label{sec: upper bound}

Given a homotopy class $H\subset \Ccal_T\left(O,S^2\right)$, we can construct an explicit representative $\nuvec\in H$ that realizes the upper bound in Theorem~\ref{thm:2}.
As stated in Section~\ref{sec: statement}, homotopy classes in $\Ccal_T\left(O,S^2\right)$ are classified as being either conformal, anticonformal or nonconformal. Conformal and anticonformal homotopy classes are studied in detail in \cite{mrz2, mrz3}. For these homotopy classes, $\Delta(H)=0$ by definition (see (\ref{eq:Delta(H) neq 0}))
and consequently, the infimum Dirichlet energy is bounded from below by $\Ecal(H) \geq \sum_\sigma |w_\sigma|$ i.e. the lower bound is simply the abelian bound in 
(\ref{eq: spelling length and degree}). In \cite{mrz2}, we construct explicit conformal (anticonformal) representatives $\nuvec$ for conformal (anticonformal) homotopy classes such that for a regular value $\svec_\sigma \in \Sigma_\sigma \cap \Rcal_\nuvec$, the absolute degree coincides with the absolute value of the wrapping number i.e. $D_\nuvec (\svec_\sigma) = |w_\sigma|$ for all $\sigma$ and the corresponding Dirichlet energy is 
\begin{equation}
\label{eq:ub1}
E(\nuvec) = \pi \sum_\sigma D_\nuvec \left(\svec_\sigma \right) = \pi \sum_\sigma |w_\sigma|.
\end{equation}
Thus, these representatives achieve the upper bound in Theorem~\ref{thm:2}.

Nonconformal homotopy classes have also been studied in some detail in \cite{mrz3}. In \cite{mrz3}, we construct explicit representatives $\nuvec$ in nonconformal homotopy classes from a juxtaposition of conformal and anticonformal configurations. The representative $\nuvec$ is taken to be either conformal or anticonformal almost everywhere in $O$ except for a small interior disc. We insert a certain number, $N\geq 1$, of full coverings of $S^2$ with either positive or negative orientation within this interior disc. The choice of $N$ and the orientation of these full coverings (positive or negative) clearly depends on the nonconformal homotopy class in question. One can show that the representative $\nuvec$, thus defined, has Dirichlet energy strictly greater than the upper bound in Theorem~\ref{thm:2}.

In this section, we return to the upper bound problem for nonconformal homotopy classes. We construct alternative explicit representatives by introducing quarter-sphere configurations. We take the representative $\nuvec$ to be either conformal or anticonformal everywhere away from the vertices of $O$. Near the vertices of $O$, we modify $\nuvec$ and insert quarter-sphere configurations. The quarter-sphere configurations are either conformal or anticonformal configurations that cover a pair of adjacent octants with either negative or positive orientation and preserve the tangent boundary conditions. The quarter-sphere configurations allow us to realize the minimal number of pre-images consistent with (\ref{eq: lower bound}) and hence, saturate the lower bound in Theorem~\ref{thm:1} and realize the upper bound in Theorem~\ref{thm:2}. In Section~\ref{sec:seg}, we consider a simple illustrative example.
In Section~\ref{sec:qsphere}, we review the main results
for conformal and anticonformal topologies from \cite{mrz2} and
formally define quarter-sphere configurations. In
Section~\ref{sec:rep}, we explicitly define the representatives
for nonconformal homotopy classes and in Section~\ref{sec:ub}, we
carry out the relevant energy estimates that yield a proof for Theorem~\ref{thm:2}. 

\subsection{An Example} \label{sec:seg}

Let $H$ be the nonconformal homotopy class defined by the invariants
\begin{eqnarray}
\label{eq:ub2}
&& e_j = +1, \quad k_j = +1, \quad \forall j \nonumber \\
&&  \Omega = \frac{3\pi}{2} .
\end{eqnarray}
The corresponding wrapping numbers are shown below (\ref{eq:wrap in terms of e,k,omega}) -
\begin{eqnarray}
\label{eq:ub3}
&& w_{+++} = w_{-++} = w_{+-+} = w_{++-} = 1 \nonumber \\
&& w_{+--} = w_{-+-} = w_{--+} = 0 \nonumber \\
&& w_{---} = -1.
\end{eqnarray} From Theorem~\ref{thm:1}, $\Ecal(H) \geq 7\pi $ and $\Delta(H) = 2$ so that there necessarily exists an octant $\Sigma_\sigma$, with $\sigma \sim \left(---\right),$ such that a regular value $\svec_\sigma \in \Sigma_\sigma$ has at least two pre-images in spite of the fact that $w_\sigma = 0$ (or equivalently $d_\nuvec(\svec_\sigma) = 0$).

We construct a representative $\nuvec\in H$ on the following lines. Let $O_z$ denote a small neighbourhood of the vertex $\zhat$ in $O$.
We take $\nuvec$ to be an anticonformal configuration on $O\setminus O_z$ with wrapping numbers as shown below
\begin{eqnarray}
\label{eq:ub7}
&& w_{+++}= w_{-++} = w_{+-+} = w_{++-}  = w_{--+} = 1 \nonumber \\
&& w_{\sigma} = 0 \quad \textrm{otherwise.}
\end{eqnarray} In $O_z$, we modify $\nuvec$ to insert a quarter-sphere configuration. This quarter-sphere configuration is a conformal configuration 
by construction and its image covers the pair of adjacent octants,
 $\Sigma_{_{--+}}$ and $\Sigma_{_{---}}$, exactly once with negative orientation. The representative, $\nuvec$, thus defined from the juxtaposition of the anticonformal configuration in (\ref{eq:ub7}) and the quarter-sphere configuration, has the correct topology $H$ in (\ref{eq:ub2}), satisfies the tangent boundary conditions and is continuous everywhere in $O$. The corresponding absolute degrees for regular values $\svec_\sigma \in \Rcal_\nuvec\cap \Sigma_\sigma$ are
\begin{eqnarray}
\label{eq:ub8}
&& D_\nuvec(\svec_\sigma) = 1 \quad \textrm{ for $\sigma=\left\{ (+++), (-++), (+-+), (++-)\right\}$} \nonumber \\
&& D_\nuvec(\svec_\sigma) = 2 \quad \textrm{for $\sigma =  (--+)$} \nonumber \\
&&  D_\nuvec(\svec_\sigma) = 1 \quad \textrm{ for $\sigma = (- - -)$.}
\end{eqnarray} Since $\nuvec$ is either conformal or anticonformal almost everywhere by construction, we can explicitly estimate its Dirichlet energy from (\ref{eq:ub1})
\begin{equation}
\label{eq:ub9}
\Ecal(H) \leq E(\nuvec) = \pi \sum_\sigma D_\nuvec(\svec_\sigma) = 7\pi,
\end{equation} consistent with the upper bound in Theorem~\ref{thm:2}.

For the sake of comparison, we briefly outline the construction of a representative $\nuvec\in H$, following the methods in \cite{mrz3}.
Let $D_\eps$ denote a small interior disc of radius $0<\eps <\frac{1}{8}$. In \cite{mrz3}, we take $\nuvec$ to be an anticonformal configuration on $O\setminus D_\eps$ with wrapping numbers
-
\begin{eqnarray}
\label{eq:ub4}
&& w_{+++} = w_{-++} =  w_{+-+} =  w_{++-} = 2 \nonumber \\
&& w_{+--} = w_{-+-} = w_{--+} = 1 \nonumber \\
&& w_{---} = 0
\end{eqnarray} and in $D_\eps$, we insert a conformal configuration that covers $S^2$ exactly once with negative orientation. The absolute degrees in this case are given by 
\begin{eqnarray}
\label{eq:ub5}
&& D_\nuvec(\svec_\sigma) = 3 \quad \sigma=\left\{ (+++), (-++), (+-+), (++-)\right\} \nonumber \\
&& D_\nuvec(\svec_\sigma) = 2 \quad \sigma = \left\{ (+--), (-+-), (--+)\right\} \nonumber \\
&&  D_\nuvec(\svec_\sigma) = 1 \quad \sigma = (- - -) \quad \textrm{where $\svec_\sigma \in \Sigma_\sigma \cap \Rcal_{\nuvec}$.}
\end{eqnarray}
The corresponding Dirichlet energy is
\begin{equation}
\label{eq:ub6}
\Ecal(H) \leq E(\nuvec) = \pi \sum_\sigma D_\nuvec(\svec_\sigma) = 19\pi,
\end{equation} which is more than twice the upper bound in Theorem~\ref{thm:2}.

This example demonstrates how a quarter-sphere configuration can enable us to realize the upper bound in Theorem~\ref{thm:2} and realize the minimum number of pre-images consistent with Theorem~\ref{thm:1}. More generally, we need a sequence of alternating conformal and anticonformal quarter-sphere configurations localized near the vertices of $O$ and these quarter-sphere configurations are chosen carefully so as to preserve the topology and the tangent boundary conditions. Explicit details are given in the subsequent sections.

\subsection{Complex representation}
\label{sec:qsphere} 

For a given homotopy class $H\subset \Ccal_T(O,S^2)$, we represent the representative $\nuvec\in H$ by a complex-valued function using stereographic projection.
Let $\mathbb{C}^*
= \Cc \cup \left\{ \infty \right\}$ denote the extended complex
plane. Let $P:S^2 \to \Cc^*$ denote the stereographic projection
of the unit sphere into the extended complex plane with $-\zhat$
being projected to $\infty$ i.e. for $\evec = (e_x, e_y, e_z) \in
S^2$, $P(\evec)$ is given by
\begin{equation}
\label{eq:ub10}
P(\evec) = \frac{e_x + \ci e_y}{1 + e_z}.
\end{equation}

We let $Q = P(O)\subset \Cc^*$ denote the projected domain; then $Q$ is the quarter-disc given by
\begin{equation}
\label{eq:ub11} Q = \left\{  w = \rho \exp^{\ci \phi} | 0\leq \rho
\leq 1,~ 0\leq \phi \leq \frac{\pi}{2} \right\}.
\end{equation} The boundary of $Q$ consists of three segments, $\partial Q = C_1 \cup C_2 \cup C_3$  - (i) the real segment $C_1 =\left\{ w \in \Rr; 0\leq w\leq 1 \right\}$, which is the projection of the $zx$-edge of $O$, (ii) the imaginary segment $C_2 = \left\{ w = \ci t; 0\leq t \leq 1\right\}$, which is the projection of the $yz$-edge of $O$ and (iii) the quarter-circle $C_3 = \left\{ w = \exp^{\ci \phi}; 0\leq \phi \leq \frac{\pi}{2}\right\}$, which is the projection of the $xy$-edge of $O$. The vertices of $Q$ are at the points $P(\xhat) = 1$, $P(\yhat) = \ci$ and $P(\zhat) = 0$ respectively.

Given $\nuvec:O\to S^2$, we define the corresponding projected map $K:Q \to \Cc^*$ by
\begin{equation}
\label{eq:ub12}
K = P \circ \nuvec \circ P^{-1}.
\end{equation}
Then if $\nuvec = \left(\nu_x, \nu_y, \nu_z \right)$, we have that
\begin{equation}
K(w) = \frac{\nu_x(\svec) + \ci \nu_y(\svec)}{1 + \nu_z(\svec)} \quad \textrm{where $\svec = P^{-1}(w)$.}
\label{eq:ub13}
\end{equation} Let $\Ccal_T\left(Q, \Cc^*\right)$ denote the space of maps $K:Q\to \Cc^*$ for which $\nuvec \in \Ccal_T(O,S^2)$. The tangent boundary conditions require that (i) $K(w)$ is real if $w$ is real (i.e. $w\in C_1$) (ii) $K(w)$ is imaginary if $w$ is imaginary (for $w \in C_2$) and (iii) $|K(w)|=1$ if $|w|=1$ (for $w\in C_3$). Finally, if $\nuvec$ is differentiable, then so is $K$ and the Dirichlet energy of $\nuvec$ is given in terms of $K$ as shown below -
\begin{equation}
\label{eq:ub14}
E(\nuvec) = E(K) = \int_{Q}  \mathcal{H} (K)~d^2 w
\end{equation} where
\begin{equation}
\label{eq:ub15}
\mathcal{H} (K) =  4 \left( \frac{|\partial _w K|^2 + |\partial_{\wbar} K |^2}{\left(1 + |K|^2\right)^2} \right)
\end{equation}
is the Dirichlet energy density in complex coordinates.

\subsubsection{Conformal and anticonformal representatives}
\label{sec:conf}

We briefly review the main results in \cite{mrz2, mrz3} for conformal and anticonformal homotopy classes. In \cite{mrz3}, we show that a homotopy class $H\subset \Ccal_T(O,S^2)$ is conformal (anticonformal) if and only if it admits a conformal (anticonformal) representative. Under the stereographic projection $P:S^2 \to \Cc^*$ defined in (\ref{eq:ub10}), we represent a conformal representative in a conformal homotopy class by an analytic function $f:Q \to \Cc^*$.  The tangent boundary conditions means that if $w$ is a zero of $f$, then so are $\pm \wbar$ and $-w$ and $\frac{1}{w}$ is a pole. These constraints along with the conditions that $f(0) = 0$ or $f(0) = \infty$ (since $\nuvec(\zhat) = \pm \zhat$) and $f(1) = \pm 1$ (since $\nuvec(\xhat) = \pm \xhat$) imply that $f(w)$ is a rational function of the form \cite{mrz2}
\begin{multline}
    \label{eq:f}
    f(w) = \pm w^{2m+1}
  \prod_{j=1}^a\left(
  \frac{w^2 - r_j^2}{r_j^2 w^2 - 1}
  \right)^{\rho_j}
  \prod_{k=1}^b\left(
  \frac{w^2 + s_k^2}{s_k^2 w^2 + 1}
  \right)^{\sigma_k}\times\\
\times  \prod_{l=1}^c\left(
  \frac{(w^2 - t_l^2)(w^2 -  \tbar_l^2)}
  {(t^2_l w^2 - 1)({\tbar}_l^2w^2  - 1)}
  \right)^{\tau_l}.
\end{multline}
The $r_j$'s denote the real zeros ($\rho_j = 1$) and poles ($\rho_j =
-1)$ of $f$ between $0$ and $1$; the $s_k$'s, the imaginary zeros
and poles of $f$ (according to whether $\sigma_k = \pm 1$) between $0$ and $i$
; and the $t_l$'s, the complex zeros and poles of $f$ (according to whether
$\tau_l = \pm 1$) with modulus less than one and argument between 0 and
$\pi/2$, $m$ is any integer and $a$, $b$ and $c$ are non-negative integers.The homotopy invariants $(e, k,\Omega)$ can be explicitly computed in terms of the parameters $\left\{m, a, b, c, r_j, s_k, \tau_l, \rho_j, \sigma_k, \tau_l\right\}$.
Analogous remarks apply to anticonformal representatives, with complex analytic functions being replaced by complex antianalytic functions $f(\wbar)$, where $f$ is rational and of the form (\ref{eq:f}).

For $H = \left(e, k,\Omega \right)$ conformal or anticonformal, let $F_H$ denote the corresponding conformal/anticonformal representative of the form (\ref{eq:f}). For such representatives $F_H$, $D_{F_H}(\svec_\sigma)$ is independent of the choice of the regular value $\svec_\sigma \in \Sigma_\sigma$ and $D_{F_H}(\svec_\sigma) = |w_\sigma|$ for all $\sigma$. The corresponding Dirichlet energy can be explicitly computed as in (\ref{eq:ub1}) and  
\begin{equation}
\label{eq:ub17} E\left(F_H\right) = \pi \sum_\sigma
D_{F_H}\left(\sigma \right) = \pi \sum_\sigma |w_\sigma |.
\end{equation}
Therefore, $E(F_H)$ realizes the lower bound of Theorem~\ref{thm:1} and the upper bound in Theorem~\ref{thm:2} i.e.
\begin{equation}
\label{eq:ub18}
E\left(F_H\right) = \Ecal (H)
\end{equation}
as required.

\subsubsection{Quarter-sphere configurations}
\label{sec:qsconf}

Quarter-sphere configurations are defined on small neighbourhoods
of the vertices of $Q$. For concreteness, let $$Q_\eps = \left\{ w \in Q; \rho = |w|\leq \eps
\right\}\subset Q$$ denote the closed $\eps$-neighbourhood of the vertex $w = 0$
where $\eps>0$. Let
\begin{equation}
\label{eq:ub19}
\rho_0 = 0~ \textrm{and} ~\rho_m = \eps^{L+1 - m}~\textrm{with $1 \leq m \leq L$,}
\end{equation} where $L$ is a positive integer. We note that
$$\frac{\rho_{m+1}}{\rho_m} = \frac{1}{\eps}. $$

We partition $Q_\eps$ into $(2L-1)$ concentric quarter-annuli -
(i) the quarter-sphere configurations are defined on the annuli
\begin{equation}
\label{eq:ub20}
2\rho_{m-1} \leq \rho  \leq \rho_m \quad 1\leq m \leq L
\end{equation} and (ii) we interpolate between the different quarter-sphere
configurations on the intervening annuli
\begin{equation}
\label{eq:ub21}
\rho_{n} \leq \rho  \leq 2\rho_n \quad 1\leq n \leq L-1.
\end{equation}

Consider the quarter-annuli for $1\leq m \leq L$. We define the quarter-sphere configurations
$g_{m,\eps}$ as shown below -
\begin{equation}
  \label{eq:ub22}
  g_{m,\eps}(w) =
  \begin{cases}
    -\frac{w}{\sqrt{\eps} \rho_m}, & \textrm{$2\rho_{m-1} \leq \rho \leq \rho_m$, $m$ ~  odd},\\
%    s(w,\wbar)
 \frac{\rho_{m-1}}{\sqrt{\eps} \wbar},&
   \textrm{$2\rho_{m-1} \leq \rho \leq \rho_m$, $m$ ~ even}.
  \end{cases}
\end{equation}
We consider the case of $m$ odd first. For $m$
odd, $g_{m,\eps}$ is a rational analytic function (conformal configuration) which is
real on the real axis and imaginary on the
imaginary axis i.e. $g_{m,\eps}$ satisfies the tangent boundary
conditions on the real and imaginary axes as required. One can directly verify that the image of
$g_{m,\eps}$ covers the pair of adjacent octants
$\Sigma_{_{--\pm}}$ exactly once with
negative orientation, except for a small neighbourhood of $\pm \zhat$ on $S^2$ \begin{footnote}{Strictly speaking, the image of $g_{m,\eps}$ covers the projected octants $P(\Sigma_{--\pm})$ on $\Cc^*$ except for a $\sqrt{\eps}$-neighbourhood of $w=0$ and $\infty$ but here and in what follows, we do not explicitly distinguish between the octants and their projection on the complex plane.}\end{footnote}. For regular values $\xi_\sigma$ not contained in these excluded neighbourhoods i.e. for $\xi_\sigma \in \Rcal_{g_{m,\eps}}$ satisfying
\begin{equation}
\label{eq:regular}
\left|\xi_\sigma \right|, \frac{1}{\left|\xi_\sigma \right|} < \frac{1}{\sqrt{\eps}},
\end{equation}
we have that
\begin{equation}
  \label{eq:ub23}
  d_{g_{m,\eps}}(\xi_\sigma) =
  \begin{cases}
    -1, & \sigma = \left(- - \pm \right),\\
%    s(w,\wbar)
 0,&
    otherwise
  \end{cases}
\end{equation} where $d_{g_{m,\eps}}(\xi_\sigma)$ is the algebraic degree defined in (\ref{eq: d_f}). The corresponding Dirichlet energy is easily estimated using (\ref{eq:ub14}) and (\ref{eq:ub15}) and we have the following - 
\begin{equation}
\label{eq:ub24} E\left(g_{m,\eps}\right) = \int_{2\rho_{m-1} \leq
\rho \leq \rho_m} \mathcal{H}\left(g_{m,\eps}\right)~d^2 w  =
2\pi\left( \frac{1 - 4\eps^2}{(1+\eps)(1+4\eps) } \right)
\end{equation}
and
\begin{equation}
\label{eq:ub25}
E\left(g_{m,\eps}\right) \leq 2\pi + C\eps
\end{equation} where $C$ is a positive constant independent of $\eps$.

The case of $m$ even can be treated in an analogous manner. Here, $g_{m,\eps}$ is a rational antianalytic function (anticonformal configuration) that is real on the real axis and imaginary on the imaginary axis. Again, one can directly verify that the image of $g_{m,\eps}$ covers the pair of adjacent octants $\Sigma_{_{++\pm}}$ exactly once with positive orientation except for a small neighbourhood of $\pm\zhat$ on $S^2$ and for regular values $\xi_\sigma$ satisfying (\ref{eq:regular}),  the algebraic degrees are given by
\begin{equation}
  \label{eq:ub26}
  d_{g_{m,\eps}}(\xi_\sigma) =
  \begin{cases}
    +1, & \sigma = \left(+ + \pm \right),\\
%    s(w,\wbar)
 0,&
    otherwise.
  \end{cases}
\end{equation}The corresponding Dirichlet energy is estimated as in (\ref{eq:ub24}) and we have that
\begin{equation}
\label{eq:ub27}
E\left(g_{m,\eps}\right) = \int_{2\rho_{m-1} \leq \rho \leq \rho_m} \mathcal{H}\left(g_{m,\eps}\right)~d^2 w \leq  2\pi + C\eps
\end{equation} for a positive constant $C$ independent of $\eps$.

On the annuli $\rho_n \leq \rho \leq 2\rho_n$ with $1\leq n \leq L-1$, we define the interpolatory functions $h_{n,\eps}$ according to
\begin{equation}
  \label{eq:ub28}
  h_{n,\eps}(w) =
  \begin{cases}
   \left((1 - s_n(\rho)) / g_{n,\eps} (w) + s_n(\rho) /g_{n+1,\eps}(w)\right)^{-1}, & \textrm{$n$ ~  odd},\\
%    s(w,\wbar)
 (1 - s_n(\rho)) g_{n,\eps} + s_n(\rho) g_{n+1,\eps}(w),&
    \textrm{$n$ ~ even}
  \end{cases}
\end{equation} where $s_n$ is the switching function
\begin{equation}
\label{eq:ub29}
s_n(\rho) = \frac{\rho - \rho_n}{\rho_n}, \quad \rho_n \leq \rho \leq 2\rho_n.
\end{equation} It is easy to verify that $h_{n,\eps}$, thus defined, is real on the real axis and imaginary on the imaginary axis since the $g_{n,\eps}$'s satisfy tangent boundary conditions on the real and imaginary axes. The energy estimates for $h_{n,\eps}$ can be easily carried out.
% on the basis of two preliminary observations. Firstly, for any two complex functions $F$ and $G$ defined on some region $A\subset Q$ and a differentiable switching function $s:A \to \Rr$ with $0\leq s\leq 1$ and $|\partial_w s|^2 + |\partial_{\wbar} s|^2 < a$ for a real positive constant $a$, we have that
%\begin{equation}
%\Hcal \left( s F + (1 - s)G  \right) \leq \Hcal (F) + \Hcal(G) + a\left| F - G\right|^2.
%\label{eq:prelim1}
%\end{equation}
%Secondly, $\Hcal(F) = \Hcal\left(\frac{1}{F}\right)$ for any complex function $F$ and therefore,
%\begin{equation}
%\label{eq:prelim2}\Hcal \left( \left(s/ F + (1 - s)/G \right)^{-1} \right) \leq \Hcal\left(\frac{1}{F}\right) + \Hcal\left(\frac{1}{G}\right) + a\left| \frac{1}{F} - \frac{1}{G}\right|^2.
%\end{equation}
We consider the case of $n$ odd in (\ref{eq:ub28}) first  with $g_{n,\eps}=-\frac{w}{\sqrt{\eps} \rho_n}$, $g_{n+1,\eps} = \frac{\rho_n}{\sqrt{\eps} \wbar} $. For $s_n$ in (\ref{eq:ub29}), we have that
\begin{equation}
\label{eq:new1}|\partial_w s_n|^2 + |\partial_{\wbar} s_n|^2 < \frac{1}{\rho_n^2}.\end{equation}
%so that $a = \frac{1}{\rho_n^2}$ in (\ref{eq:prelim2}).
Similarly, we note that
\begin{equation}
\label{eq:new2}\left|\partial_w \frac{1}{h_{n,\eps}}\right|^2 \leq C \left(\left|\partial_w \left(\frac{1}{g_{n,\eps}}\right)\right|^2 + |\partial_w s_n|^2\left(\left|\frac{1}{g_{n,\eps}}\right|^2 + \left|\frac{1}{g_{n+1,\eps}}\right|^2 \right)\right) \leq C^{'}\frac{\eps}{\rho_n^2}\end{equation}
and likewise
\begin{equation}
\label{eq:new3}\left|\partial_{\wbar} \frac{1}{h_{n,\eps}}\right|^2 \leq C \left(\left|\partial_{\wbar} \left(\frac{1}{g_{n+1,\eps}}\right)\right|^2 + |\partial_{\wbar} s_n|^2\left(\left|\frac{1}{g_{n,\eps}}\right|^2 + \left|\frac{1}{g_{n+1,\eps}}\right|^2 \right)\right) \leq C^{''}\frac{\eps}{\rho_n^2}\end{equation}
where $C$, $C^{'}$ and $C^{''}$ are positive constants independent of $\eps$. On the other hand
\begin{equation}
\label{eq:new4}\left(1 + \left|\frac{1}{h_{n,\eps}}\right|^2\right) \geq 1\end{equation}
so that
\begin{equation}
\label{eq:new5}\Hcal\left(\frac{1}{h_{n,\eps}}\right) \leq D \frac{\eps}{\rho_n^2}\end{equation}
from (\ref{eq:ub15}), for a positive constant $D$ independent of $\eps$. Substituting the above into (\ref{eq:ub14}), we obtain the following
- 
\begin{equation}
\label{eq:ub32new}
E\left(\frac{1}{h_{n,\eps}}\right) \leq C_1 ~ \eps, \quad \textrm{$n$~odd}
\end{equation}
for a positive constant $C_1$ independent of $\eps$ and since $\Hcal\left(\frac{1}{h_{n,\eps}}\right) = \Hcal\left(h_{n,\eps}\right)$, we have that
\begin{equation}\label{eq:ub32}
E\left(h_{n,\eps}\right) \leq C_1 ~ \eps.
\end{equation}
We repeat the same calculations for the case $n$ even and it can be shown that the energy estimate (\ref{eq:ub32}) holds for all $n$ i.e.
\begin{equation}
\label{eq:ub34}
E\left(h_{n,\eps}\right) \leq C_2 ~ \eps, \quad 1\leq n \leq L-1
\end{equation} where $C_2$ is a positive constant independent of $\eps$.

From Lemma~\ref{lem:1}, we have that the Dirichlet energy of $h_{n,\eps}$ is bounded from below by
\begin{equation}
\label{eq:ub35}
E\left(h_{n,\eps}\right) \geq
 2 \sum_\sigma \int_{\xi_\sigma \in \Rcal_{h_{n,\eps}} \cap P(\Sigma_\sigma)}
D_{h_{n,\eps}}(\xi_\sigma)  \, d^2 w.
\end{equation}
Since $E\left(h_{n,\eps}\right) \to 0$ as $\eps \to 0$ and $D_{h_{n,\eps}}(\xi_\sigma) \geq |d_{h_{n,\eps}}(\xi_\sigma)|$, we deduce that
\begin{equation}
\label{eq:ub36}
d_{h_{n,\eps}}(\xi_\sigma) = 0 \quad  ~1\leq n \leq L-1.
\end{equation}
for all regular values $\xi_\sigma$ satisfying (\ref{eq:regular}), apart from a subset whose measure vanishes with $\epsilon$ and for all $\sigma$.

We define the configuration $\Gamma_{L,\eps}:Q_\eps \to \Cc^*$ as follows -
\begin{equation}
  \label{eq:ub37}
  \Gamma_{L,\eps}(w) =
  \begin{cases}
  g_{m,\eps}(w), & 2\rho_{m-1} \leq \rho \leq \rho_m,\\
%    s(w,\wbar)
 h_{n,\eps}(w),& \rho_{n} \leq \rho \leq 2 \rho_n
  \end{cases}
\end{equation} where $1\leq m \leq L$ and $1\leq n \leq L-1$. 

\begin{prop}
\label{prop:ub1}
The function $\Gamma_{L,\eps}:Q_\eps \to \Cc^*$ defined in (\ref{eq:ub37}) has the following properties -

(i) $\Gamma_{L,\eps}$ is real (imaginary) on the real (imaginary) axis i.e. $\Gamma_{L,\eps}$ satisfies the tangent boundary conditions on the real and imaginary axes,

(ii) the Dirichlet energy is bounded from above by
\begin{equation}
E(\Gamma_{L,\eps})  \leq 2\pi L + C_3 \eps
\label{eq:ub39}
\end{equation} where $C_3$ is a positive constant independent of $\eps$, 

(iii) the algebraic degrees are given by
\begin{equation}
\label{eq:ub40}
d_{\Gamma_{L,\eps}}\left(\xi_\sigma \right) = W_\sigma (L)
\end{equation} for regular values $\xi_\sigma$ satisfying (\ref{eq:regular}) and $W_\sigma(L)$ is defined as shown below
\begin{equation}
  \label{eq:ub38}
 W_\sigma (L) =
  \begin{cases}
 -\left[\frac{L+1}{2}\right] ,& \sigma=\left(- - \pm \right),\\
%    s(w,\wbar)
 \left[\frac{L}{2}\right],& \sigma=\left(+ + \pm \right),\\
0, & otherwise.
  \end{cases}
\end{equation}
\end{prop}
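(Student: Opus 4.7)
The plan is to verify the three properties separately, in each case by patching together the pointwise information already established for the building blocks $g_{m,\eps}$ and $h_{n,\eps}$. A preliminary observation is that $\Gamma_{L,\eps}$ is well-defined and continuous on $Q_\eps$: at an interface circle $\rho = \rho_n$ the switching function satisfies $s_n(\rho_n) = 0$, making $h_{n,\eps} = g_{n,\eps}$ there, while at $\rho = 2\rho_n$ we have $s_n(2\rho_n) = 1$, making $h_{n,\eps} = g_{n+1,\eps}$, so the piecewise definition matches on all overlaps.

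For (i), the reality properties are inherited piece by piece. Each $g_{m,\eps}$ is by direct inspection real on the real axis and imaginary on the imaginary axis: for $m$ odd this is the map $w\mapsto -w/(\sqrt{\eps}\rho_m)$, and for $m$ even this is $w\mapsto\rho_{m-1}/(\sqrt{\eps}\wbar)$, both of which are symmetric under $w\mapsto\wbar$ (up to the obvious real scaling). Since $s_n(\rho)$ depends only on $|w|$ and is real-valued, the interpolation formula (\ref{eq:ub28}) preserves these symmetries, so $h_{n,\eps}$ is real on $C_1$ and imaginary on $C_2$ as well. Hence $\Gamma_{L,\eps}$ satisfies the tangent boundary conditions on the real and imaginary axes.

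For (ii), the energy is simply additive over the partition of $Q_\eps$. The $L$ quarter-sphere annuli contribute at most $2\pi + C\eps$ each by (\ref{eq:ub25}) and (\ref{eq:ub27}), while the $L-1$ transition annuli contribute at most $C_2\eps$ each by (\ref{eq:ub34}). Summing gives
\begin{equation*}
E(\Gamma_{L,\eps}) \;\le\; 2\pi L + \bigl(LC + (L-1)C_2\bigr)\eps \;\le\; 2\pi L + C_3\eps,
\end{equation*}
where $C_3$ absorbs the $L$-dependent prefactor (and is, as required, independent of $\eps$).

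For (iii), the algebraic degree is likewise additive: for any regular value $\xi_\sigma$ satisfying (\ref{eq:regular}) which remains regular on each piece (the exceptional set has measure zero by Sard), $d_{\Gamma_{L,\eps}}(\xi_\sigma)$ is the sum of the degrees of the restrictions to the individual annuli. By (\ref{eq:ub23}) each odd-$m$ piece contributes $-1$ in sectors $(--\pm)$ and $0$ elsewhere; by (\ref{eq:ub26}) each even-$m$ piece contributes $+1$ in $(++\pm)$ and $0$ elsewhere; and by (\ref{eq:ub36}) each transition piece contributes $0$. Counting gives $\lfloor(L+1)/2\rfloor$ odd indices and $\lfloor L/2\rfloor$ even indices in $\{1,\ldots,L\}$, reproducing (\ref{eq:ub38}). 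I expect no serious obstacle here: the analytic content is already packaged in the pointwise estimates of Section~\ref{sec:qsconf}, and the proof of the proposition is bookkeeping --- checking interface continuity, summing energies, and counting parities of $m$.
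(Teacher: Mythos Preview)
Your proposal is correct and follows essentially the same approach as the paper's proof: both verify (i)--(iii) by inheriting the boundary behaviour from the building blocks, summing the energy estimates (\ref{eq:ub25}), (\ref{eq:ub27}), (\ref{eq:ub34}) over the annular partition, and summing the degree contributions (\ref{eq:ub23}), (\ref{eq:ub26}), (\ref{eq:ub36}) with the parity count of odd/even $m$. Your explicit check of interface continuity is a helpful addition that the paper leaves implicit.
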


\begin{proof}
Property (i) is immediate from the definition of $\Gamma_{L,\eps}$ in (\ref{eq:ub37}), since $g_{m,\eps}$ and $h_{n,\eps}$ satisfy the tangent boundary conditions on the real and imaginary axes. To show (\ref{eq:ub39}), we note that
\begin{equation}
\label{eq:ub41} \int_{Q_\eps}
\mathcal{H}\left(\Gamma_{L,\eps}\right)~d^2 w = \sum_{m=1}^{L}
\int_{2\rho_{m-1}\leq \rho \leq \rho_m}
\mathcal{H}\left(g_{m,\eps}\right)~d^2 w +  \sum_{n=1}^{L-1}
\int_{\rho_{n}\leq \rho \leq 2\rho_n}
\mathcal{H}\left(h_{n,\eps}\right)~d^2 w,
\end{equation} and substitute the energy estimates (\ref{eq:ub25}), (\ref{eq:ub27}) and (\ref{eq:ub34}) into (\ref{eq:ub41}), yielding the upper bound (\ref{eq:ub39}).

To show (\ref{eq:ub40}), we observe that for $L$ even, we have an equal number of conformal and anticonformal quarter-sphere configurations whereas for $L$ odd, we have $(L+1)/2$ conformal quarter-sphere configurations and $(L-1)/2$ anticonformal quarter-sphere configurations (refer to (\ref{eq:ub22})). The algebraic degree, $d_{\Gamma_{L,\eps}} \left(\xi_\sigma \right)$, of a regular value  $\xi_\sigma$ is given by 
\begin{equation}
\label{eq:ub42}
d_{\Gamma_{L,\eps}}\left(\xi_\sigma \right) =
\sum_{m=1}^{L}d_{g_{m,\eps}}\left(\xi_\sigma \right) + \sum_{n=1}^{L-1}d_{h_{n,\eps}}\left(\xi_\sigma \right).
\end{equation}
It suffices to substitute
(\ref{eq:ub23}), (\ref{eq:ub26}) and (\ref{eq:ub36}) into (\ref{eq:ub42}) and (\ref{eq:ub40}) immediately follows.
\end{proof}

\textit{Note:} We note that $d_{\Gamma_{L,\eps}}\left(\xi_\sigma \right)$ is independent of the choice of regular value $\xi_\sigma$ and only depends on $\sigma$; therefore, these algebraic degrees are referred to as $d_{\Gamma_{L,\eps}}(\sigma)$ in the subsequent sections. 

\subsubsection{Symmetries and translations of quarter-spheres}

 Let $\gamma: \Cc^* \rightarrow \Cc^*$ denote the M\"obius
transformation
\begin{equation}\label{eq:ub45}
    \gamma(w) = \frac{i-w}{i+w}.
\end{equation}
It is easy to verify that $\gamma(1) = i$, $\gamma(i)
= 0$, $\gamma(0) = 1$ and $\gamma(Q)=Q$.  We define
\begin{equation}\label{eq:ub46}
    \gamma_x = \gamma, \quad \gamma_y = \gamma^2, \quad \gamma_z = \gamma^3(\text {$=$ id}).
\end{equation}
Then
\begin{equation}\label{eq:ub47}
    Q_{j,\epsilon} = \gamma_j(Q_\epsilon)
\end{equation}
is the closed $\epsilon$-neighbourhood of $\gamma_j(0)$ in $Q$; for example, $Q_{x,\eps}$ is a closed $\eps$-neighbourhood of the vertex $w = 1$, $Q_{y,\eps}$ is a closed $\eps$-neighbourhood of the vertex $w = \ci$ and $Q_{z,\eps} = Q_{\eps}$.

Let $M_j\geq 1$ be a positive integer. We define $G_{j,M_j,\epsilon}: Q_{j,\epsilon} \rightarrow \Cc^*$ by
\begin{equation}\label{eq:ub48}
G_{j,M_j,\epsilon} = \gamma_j \circ \Gamma_{M_j,\epsilon} \circ
\gamma_j^{-1}
\end{equation} For $\sigma = (\sigma_x\, \sigma_y\, \sigma_z)$, we define the following permutations
 \begin{equation}\label{eq:ub44}
p_x(\sigma) = (\sigma_z\, \sigma_x\, \sigma_y) , ~ p_y(\sigma) = (\sigma_y\, \sigma_z\, \sigma_x)~ \textrm{and $p_z(\sigma) = (\sigma_x\, \sigma_y\, \sigma_z)$.}
\end{equation} Then for $G_{x,M_x,\eps}$, the octants $\Sigma_{_{--\pm}}$ and $\Sigma_{_{+ + \pm}}$ are mapped onto $\Sigma_{_{p_x\left(--\pm\right)}} = \Sigma_{_{\pm - -}}$ and $\Sigma_{_{p_x\left(+ + \pm\right)}} = \Sigma_{_{\pm + +}}$ respectively. Therefore, the conformal quarter-sphere configurations in $G_{x,M_x,\eps}$ ($m$ odd in (\ref{eq:ub22})) cover the octant pair $\Sigma_{_{\pm - -}}$ exactly once with negative orientation and the anticonformal quarter-sphere configurations ($m$ even in (\ref{eq:ub22})) cover the octant pair $\Sigma_{_{\pm + +}}$ exactly once with positive orientation. Similarly, for $G_{y,M_y,\eps}$, the octants $\Sigma_{_{--\pm}}$ and $\Sigma_{_{+ + \pm}}$ are mapped onto $\Sigma_{_{p_y\left(- - \pm\right)}} = \Sigma_{_{- \pm -}}$ and $\Sigma_{_{p_y\left(+ + \pm\right)}} = \Sigma_{_{+ \pm +}}$ respectively. Therefore, the conformal quarter-sphere configurations cover the octant pair $\Sigma_{_{- \pm -}}$ exactly once with negative orientation and the anticonformal quarter-sphere configurations cover the octant pair $\Sigma_{_{+ \pm +}}$ exactly once with positive orientation.

It then follows directly from Proposition~\ref{prop:ub1} that
 \begin{eqnarray}
 && \textrm{ $G_{j,M_j,\epsilon}(w)$ satisfies the tangent boundary conditions on $Q_{j,\epsilon}\cap \partial Q$}, \label{eq:ub49}\\
 && \int_{Q_{j,\epsilon}} \Hcal(G_{j,M_j,\epsilon}) \,d^2 w \leq  2 M_j\pi + C \epsilon ,\label{eq:ub50}\\
&& d_{G_{j,M_j,\epsilon}}(\xi_\sigma^j) = W_{p_j(\sigma)}(M_j),
\label{eq:ub51}
  \end{eqnarray} where $C$ is a positive constant independent of $\epsilon$, $\xi_\sigma^j$ is a regular value, $\xi_\sigma = \gamma_j^{-1}\left(\xi_\sigma^j\right)$ satisfies (\ref{eq:regular}) and $W_{p_j(\sigma)}(M_j)$ has been defined in (\ref{eq:ub38}).

\subsection{Explicit representatives}
\label{sec:rep}

In this section, we construct explicit representatives $\nuvec$
for all nonconformal homotopy classes. The general construction procedure is as follows.
Let $0<\eps < \frac{1}{8}$. We partition the domain $Q$ into
four subdomains - (i) $Q_{x,2\eps}$, (ii) $Q_{y,2\eps}$, (iii)
$Q_{z,2\eps}$ and (iv) $Q_0 = Q\setminus \cup_{j} Q_{j,2\eps}$ which
we refer to as the \emph{bulk} domain. Given a nonconformal homotopy class $H$, we specify a set of three non-negative integers $M=\left(M_x, M_y, M_z \right)$ and a conformal or anticonformal homotopy class $H_0 = \left(e_0, k_0, \Omega_0 \right)$ with edge signs $e_0 = \left(e_{0x}, e_{0y}, e_{0z}\right)$ given by 
\begin{equation}
\label{eq:ub54}
e_{0j} = (-1)^{M_j}.
\end{equation} Given $H_0$, there
exists a complex rational representative $F_{H_0}$
of the form (\ref{eq:f}) with this topology(see \cite{mrz2, mrz3}). On each of the
subdomains $Q_{j,2\eps}$, we insert $M_j$ quarter-sphere
configurations. The quarter-sphere configurations are explicitly
given by (\ref{eq:ub22}) and (\ref{eq:ub48}) and we interpolate
between the different quarter-sphere configurations with
negligible energy as in (\ref{eq:ub34}). Given $F_{H_0}$ and the different quarter-sphere configurations on the sub-domains $Q_{j,2\eps}$, we define an overall configuration $K_{H_0,M,\eps}: Q \to \Cc^*$ as shown below -
\begin{equation}\label{eq:ub55}
    K_{H_0,M,\eps}(w) =
\begin{cases}
    F_{H_0}(w), & w  \in Q_0,\\
    G_{j,M_j,\epsilon}(w),& w \in Q_{j,\epsilon}, M_j > 0,\\
   \left((1-s)/G_{j,M_j,\epsilon} + s /F_{H_0}\right)^{-1}(w), &
    w \in Q_{j,2\epsilon}- Q_{j,\epsilon}, M_j > 0, M_j \
    \text{odd},\\
     \left((1-s)G_{j,M_j,\epsilon} + s F_{H_0}\right)(w), &
    w \in Q_{j,2\epsilon}- Q_{j,\epsilon}, M_j > 0, M_j \ \text{even}.
    \end{cases}
\end{equation}
Here $s$ is the switching function on $Q_{2\epsilon} - Q_{\epsilon}$ given by
\begin{equation}\label{eq:ub56}
    s(w) = \frac{\rho - \epsilon}{\epsilon}
\end{equation} and $G_{j,M_j,\epsilon}$ has been defined in (\ref{eq:ub48}). If $M_j=0$ for some $j$, then $K_{H_0,M,\eps} = F_{H_0}$ on $Q_0 \cup Q_{j,2\eps}$. 
We point out that the functions
\newline
$  \left((1-s)/G_{j,M_j,\epsilon} + s /F_{H_0}\right)^{-1}$ and $
\left((1-s)G_{j,M_j,\epsilon} + s
F_{H_0}\right)$ interpolate between
$G_{j,M_j,\epsilon}(w)$ and $F_{H_0}$ on the annular strip
$Q_{j,2\eps} - Q_{j,\eps}$. The representative $\nuvec$ is
then taken to be the inverse projection of $K_{H_0,M,\eps}$ i.e.
$\nuvec = P^{-1}\left(K_{H_0,M,\eps}\right)$.

Let $H = (e, k,\Omega)$ denote an arbitrary nonconformal homotopy class. As discussed in Section~\ref{sec:lb}, we can, without loss of generality, take the edge signs to be
\begin{equation}
\label{eq:ub63}
e_j = +1 \quad \forall j.
\end{equation}
For concreteness, we also assume that the $k_j$'s are ordered as follows - $ 0 < |k_x|\leq |k_y|\leq |k_z|$. As in Section~\ref{sec:lb}, we focus on one representative case $k_j > 0$ for all $j$; the details for the remaining cases are sketched briefly. With $e_j = +1$ and $k_j >0$ for all $j$, the corresponding wrapping numbers are given by (\ref{eq:wrap in terms of e,k,omega}) i.e.
\begin{equation}
\label{eq:ub65}
 w_\sigma = \frac{1}{4\pi}\Omega +
\frac{1}{2}\sum_j \sigma_j k_j + \left(\frac{1}{8} -\delta_{\sigma,+++}\right).
\end{equation} It is easily verified from (\ref{eq:ub65}) that the $w_\sigma$'s are ordered as follows -
\begin{equation}\label{eq:ub66a}
    w_{+++} \ge w_{-++}  \ge w_{+-+} \ge w_{++-}, w_{--+} \ge w_{-+-} \ge w_{+--} \ge w_{---}.
\end{equation} $H$ is nonconformal for $1 \leq w_{+++} \leq  k_x + k_y + k_z - 2$. In this case, $w_{+++} > 0$ is the largest positive wrapping number and $w_{---}<0$ is the smallest negative wrapping number. We consider two different cases according to whether $w_{--+} - w_{++-} = k_z - \left(k_x + k_y \right) < 0$ or $w_{--+} - w_{++-} = k_z - (k_x + k_y) \geq 0$. For convenience, we let $n=w_{+++}$, where $n\in \left[1,k_x + k_y + k_z - 2\right]$ for $H$ nonconformal. Each case above is further divided into
sub-cases according to the value of $n$ and for each
sub-case, we explicitly specify the representative $\nuvec$ in terms of $M=\left(M_x, M_y, M_z\right)$ and a conformal or anticonformal topology $H_0$.

\vspace{.5 cm}

\textbf{Case 1: $k_z - (k_x + k_y) < 0$}

\vspace{.25 cm}

\textbf{Case 1a: $1\leq n \leq k_y - 1$:}

The bulk topology $H_0$:
\begin{eqnarray}
\label{eq:ub66} && e_0 = \left(e_{0x}, e_{0y}, e_{0z}\right)~ \textrm{where} ~
e_{0x} = e_{0y} = e_{0z} = 1 \nonumber\\ && k_0 = \left(k_{0x}, k_{0y}, k_{0z}\right)~ \textrm{where} ~
 k_{0x} = k_x,~ k_{0y} = k_y - n,~ k_{0z} = k_z - n \nonumber \\ && \Omega_0 = -2\pi \sum_j k_j + \frac{7 \pi}{2} + 4n \pi.
\end{eqnarray} The number of quarter-sphere configurations are given by
\begin{equation}
\label{eq:ub67} M_x = 2n,~ M_y = M_z = 0.
\end{equation}

\textbf{Case 1b: $k_y\leq n \leq \frac{\sum_j k_j - 2}{2}$:}

The bulk topology $H_0$:
\begin{eqnarray}
\label{eq:ub68} && e_{0x} = e_{0y} = e_{0z} = 1 \nonumber \\
&& k_{0x} = 1 ,~ k_{0y} = 1 ,~ k_{0z} = \sum_j k_j - 2n - 2 \nonumber\\ && \Omega_0 = -2\pi \sum_j k_j + \frac{7 \pi}{2} + 4n \pi.
\end{eqnarray}
The number of quarter-sphere configurations are given by
\begin{equation} \label{eq:ub68b} M_x = 2\left(n -
k_x + 1\right),~ M_y = 2\left(n - k_y + 1 \right),~ M_z =
2\left(k_x + k_y - n - 2\right).
\end{equation}

\textbf{Case 1c: $\frac{\sum_j k_j - 1}{2}\leq n \leq k_x + k_y -
2$:}

The bulk topology $H_0$:
\begin{eqnarray}
\label{eq:ub69} && e_{0x} = e_{0y} = e_{0z} = 1 \nonumber \\ &&
k_{0x} = 0,~ k_{0y} = 0,~ k_{0z} = 2n + 2 - \sum_j k_j \\ && \Omega_0 = -2\pi\sum_j k_j + \frac{7\pi}{2} + 4n\pi
\end{eqnarray}The number of quarter-sphere configurations are given by
\begin{equation}
\label{eq:69b}
M_x = 2\left(k_y + k_z - n - 1\right),~ M_y =
2\left(k_x + k_z - n - 1\right),~ M_z = 2\left(n - k_z + 1\right).
\end{equation}

\textbf{Case 1d: $k_x + k_y -1 \leq n \leq k_x + k_z - 2$:}

The bulk topology $H_0$:
\begin{eqnarray}
\label{eq:ub70} && e_{0x} = e_{0y} = -1 \quad e_{0z} = 1 \nonumber \\ &&
k_{0x} = 0,~ k_{0y} = 0,~ k_{0z} = 2n + 3 - \sum_j k_j \\ && \Omega_0 = -2\pi\sum_j k_j + \frac{11\pi}{2} + 4n\pi
\end{eqnarray}The number of quarter-sphere configurations are given by
\begin{equation}
\label{eq:ub70b}
M_x = 2\left(k_y + k_z - n - 2\right) + 1,~
 M_y = 2\left(k_x + k_z - n - 2\right) + 1,~
M_z = 2\left(n - k_z + 1\right).
\end{equation}

\textbf{Case 1e: $k_x + k_z-1 \leq n \leq k_y + k_z - 2$:}

The bulk topology $H_0$:
\begin{eqnarray}
\label{eq:ub71} && e_{0x} = e_{0y} = -1 \quad e_{0z} = 1 \nonumber \\&&  k_{0x} = 0,~ k_{0y} = n - k_z + 1,~ k_{0z} = n - k_x - k_y + 2 
\\ && \Omega_0 = -2\pi\sum_j k_j + \frac{11\pi}{2} + 4n\pi.
\end{eqnarray}The number of quarter-sphere configurations are given by
\begin{equation}
\label{eq:ub71b}
M_x = 2\left(k_y + k_z - n - 2\right) + 1,~
 M_y = 2k_x - 1,~ M_z = 0.
\end{equation}

\textbf{Case 1f: $k_y + k_z - 1 \leq n \leq \sum_j k_j - 2$:}

The bulk topology $H_0$:
\begin{eqnarray}
\label{eq:ub72}  && e_{0x} = -1 \quad e_{0y} = e_{0z} = 1 \nonumber \\
&& k_{0x} = k_x,~ k_{0y} = n - k_x - k_z + 1,~ k_{0z} = n - k_x - k_y + 1 
\\ && \Omega_0 = -2\pi\sum_j k_j + \frac{9\pi}{2} + 4n\pi.\end{eqnarray}
The number of quarter-sphere configurations are given by
\begin{equation}\label{eq:ub72b}
 M_x = 2\left(\sum_j k_j - n - 2\right) + 1,~ M_y =  M_z = 0.
\end{equation}

\vspace{.5 cm}

\textbf{Case $2$: $k_z - (k_x + k_y) \geq 0$} \vspace{.25cm}

\textbf{Case $2a$: $1\leq n \leq k_y -1$:} Take $H_0$ and $M=\left(M_x, M_y, M_z\right)$ as in Case $1a$.

\textbf{Case $2b$: $k_y\leq n \leq k_x + k_y -2$:} Take $H_0$ and $M=\left(M_x, M_y, M_z\right)$ as in Case $1b$.

\textbf{Case $2c$: $k_x + k_y -1\leq n \leq k_z - 1$:} This case
is slightly different to the remaining cases discussed in this
section. Firstly, we note that there are precisely four
non-negative wrapping numbers i.e. $w_\sigma \geq 0$ for $\sigma =
(\pm \pm +)$ and four non-positive wrapping numbers i.e. $w_\sigma
\leq 0$ for $\sigma = (\pm \pm -)$ and $\Delta(H) = 0$ for
nonconformal topologies in this range (see (\ref{eq:Delta(H) neq
0})).

As in the preceding cases, we specify the representative $\nuvec$ in terms of a bulk topology $H_0$ and 
three non-negative integers $M=\left(M_x, M_y, M_z\right)$. The bulk topology $H_0$ is anticonformal with invariants
\begin{eqnarray}
\label{eq:ub74} && e_{0x} = 1,~e_{0y} = -1,~ e_{0z} = 1 \nonumber \\ && k_{0x} = k_{0y} = k_{0z} = 0 \nonumber \\ && \Omega_0 = \frac{\pi}{2}.
\end{eqnarray}

We take
\begin{eqnarray}
\label{eq:ub75} && M_x = 2k_y + 2\left( k_z - n - 1\right)
\nonumber \\ && M_y = 2\left(n - k_y \right) + 1
\nonumber \\ && M_z = 0.
\end{eqnarray}
For the sub-domains $Q_{x,\eps}$ and $Q_{y,\eps}$, we define
modified configurations $G^{'}_{x, M_x, \eps}$ and $G^{''}_{y, M_y,
\eps}$ as follows. We consider $G^{'}_{x, M_x, \eps}$ first. Let $1\leq m \leq
M_x$. On the quarter annuli, $2\rho_{m-1} \leq \rho \leq \rho_m$,
we define
\begin{equation}
  \label{eq:ub76}
  g^{'}_{m,\eps}(w) =
  \begin{cases}
    -\frac{w}{\sqrt{\eps} \rho_m}, & 1\leq m \leq 2\left( k_z - n - 1\right),  \text{$m$ odd}, \\
%    s(w,\wbar)
 \frac{\rho_{m-1}}{\sqrt{\eps} w},& 1\leq m \leq 2 \left( k_z - n - 1\right),  \text{$m$ even},  \\
    g_{m,\eps}(w),&  2 \left( k_z - n - 1\right)< m \leq M_x
  \end{cases}
\end{equation}
where $\rho_m$ has been defined in (\ref{eq:ub19}) and $g_{m,\eps}$ in (\ref{eq:ub22}). For $1\leq
m \leq 2\left( k_z - n - 1\right)$ and $m$ odd, $g^{'}_{m,\eps}$
covers the pair of adjacent octants, $\Sigma_{--\pm}$, exactly once with negative orientation whereas for $m$
even, $g^{'}_{m,\eps}$ covers the pair of adjacent octants,
$\Sigma_{+ - \pm}$, exactly once with negative
orientation. For $m > 2 \left( k_z - n - 1\right)$,
$g^{'}_{m,\eps}$ coincides with $g_{m,\eps}$. $\Gamma^{'}_{M_x,\eps}$ and $G^{'}_{x, M_x,
\eps}$ are defined in terms of $g^{'}_{m,\eps}$ by analogy with (\ref{eq:ub37}) and
(\ref{eq:ub48}) respectively.

By analogy with Proposition~\ref{prop:ub1}, we can show that
$G^{'}_{x, M_x, \eps}$ satisfies the tangent boundary conditions
on $Q_{x,\eps}\cap \partial Q$ and the corresponding Dirichlet energy
is bounded from above by
\begin{equation}
\label{eq:ub77} E\left(G^{'}_{x, M_x, \eps}\right) =
\int_{Q_{x,\eps}} \mathcal{H}\left(G^{'}_{x, M_x, \eps}\right)~d^2
w \leq 2\pi M_x + C \eps
\end{equation} where $C$ is a positive constant independent of $\eps$. 
Let
\begin{equation}
  \label{eq:ub78}
 W^{'}_{\sigma} (M_x) =
  \begin{cases}
  - k_y + \left( n - k_z + 1\right) ,& \sigma=\left(\pm - - \right),\\
%    s(w,\wbar)
 \left( n - k_z + 1\right),& \sigma=\left(\pm + - \right),\\
 k_y, & \sigma = (\pm + +), \\
0, & otherwise.
  \end{cases}
\end{equation} Then using arguments similar to
Proposition~\ref{prop:ub1}, one can show that
\begin{equation}
\label{eq:ub79}
 d_{G^{'}_{x,M_x,\epsilon}}(\xi_\sigma^x) =
 W^{'}_{\sigma} (M_x)
\end{equation} where $\xi_\sigma^x$ is a regular value of $G^{'}_{x, M_x, \eps}$ and $\xi_\sigma = \gamma_x^{-1}\left(\xi_\sigma^x\right)$ satisfies (\ref{eq:regular}).

Similarly for $G^{''}_{y, M_y, \eps}$, we define the function
$g^{''}_{m,\eps}$ on the quarter-annuli $2\rho_{m-1} \leq \rho
\leq \rho_m$ for $1\leq m \leq M_y$ as follows -
\begin{equation}
  \label{eq:ub80}
  g^{''}_{m,\eps}(w) =
  \begin{cases}
    \frac{ \wbar}{\sqrt{\eps} \rho_{m}}, & 1\leq m \leq 2\left( n - k_x - k_y +
1\right),  \text{$m$ odd}, \\
%    s(w,\wbar)
 \frac{\rho_{m-1}}{\sqrt{\eps} \wbar},& 1\leq m \leq 2\left( n - k_x - k_y +
1\right),  \text{$m$ even},  \\
    g_{m,\eps}(w),& 2\left( n - k_x - k_y +
1\right)< m \leq M_y.
  \end{cases}
\end{equation}
Then $\Gamma^{''}_{M_y,\eps}$ and $G^{''}_{y, M_y, \eps}$ are
defined by analogy with (\ref{eq:ub37}) and (\ref{eq:ub48})
respectively. For $ 1\leq m \leq 2\left( n - k_x - k_y + 1\right)$
and $m$ odd, $g^{''}_{m,\eps}$ covers the pair of adjacent octants
$\Sigma_\sigma$, $\sigma = ( + - \pm )$ exactly once with positive
orientation and for $m$ even, $g^{''}_{m,\eps}$ covers the pair of
adjacent octants $\Sigma_\sigma$, $\sigma = ( + + \pm )$ exactly
once with positive orientation. For $2\left( n - k_x - k_y +
1\right) < m\leq M_y$, $g^{''}_{m,\eps}$ coincides with $g_{m,\eps}$
defined in (\ref{eq:ub22}). One can directly check that $G^{''}_{y,
M_y, \eps}$ satisfies the tangent boundary conditions on
$Q_{y,\eps}\cap \partial Q$ and has Dirichlet energy 
\begin{equation}
\label{eq:ub81}E\left(G^{''}_{y, M_y, \eps}\right) =
\int_{Q_{y,\eps}} \mathcal{H}\left(G^{''}_{y, M_y, \eps}\right)~d^2
w \leq 2\pi M_y + D \eps
\end{equation} where $D$ is a positive constant independent of $\eps$. The algebraic degrees are readily computed to be 
\begin{equation}
\label{eq:ub82}
 d_{G^{''}_{y,M_y,\epsilon}}(\xi_\sigma^y) =
W^{''}_{\sigma}(M_y)
\end{equation} where $\xi_\sigma^y$ is a regular value of
$G^{''}_{y,M_y,\epsilon}$ and $\xi_\sigma = \gamma_y^{-1}\left(\xi_\sigma^y\right)$ satisfies (\ref{eq:regular}) and 
\begin{equation}
  \label{eq:ub83}
W^{''}_{\sigma}(M_y) =
  \begin{cases}
 k_x - 1 + \left( n - k_x - k_y + 1\right) ,& \sigma=\left(+ \pm + \right),\\
%    s(w,\wbar)
 \left( n - k_x - k_y + 1\right),& \sigma=\left(- \pm + \right),\\ 
 - k_x, & \sigma = (- \pm -), \\
0, & otherwise.
  \end{cases}
\end{equation} 

Given $F_{H_0}$, $G^{'}_{x, M_x, \eps}$ and $G^{''}_{y,M_y,\eps}$,
the function $K_{H_0,M,\eps}$ is defined as in (\ref{eq:ub55}).

\textbf{Case $2d$: $k_z \leq n \leq k_x + k_z-2:$} Take $H_0$ and $M=\left(M_x, M_y, M_z\right)$ as in Case $1d$.

\textbf{Case $2e$: $k_x + k_z-1\leq n \leq k_y + k_z - 2$:} Take $H_0$ and $M=\left(M_x, M_y, M_z\right)$ as in Case $1e$.

\textbf{Case $2f$: $k_y + k_z-1\leq n \leq \sum_j k_j - 2$:} Take $H_0$ and $M=\left(M_x, M_y, M_z\right)$ as in Case $1f$.

\vspace{.5 cm}

\textbf{Remaining cases}

This deals with cases where one or more of the $k_j$'s is either zero or negative. Let $H$ be an arbitrary nonconformal homotopy class with $e_j=+1$ for all $j$ and $k_j \leq 0$ for some $j$. As in cases $1$ and $2$, we can explicitly specify $M=\left(M_x,M_y,M_z\right)$ and $H_0$ in these cases so that the overall representative $\nuvec$ is defined as in (\ref{eq:ub55}). We briefly outline the details here for completeness. We denote the set of wrapping numbers by $\left\{w_\sigma \right\}$. Then the octant $\Sigma_{\sigma_+}$ with $\sigma_+ = \left(\sgn k_x, \sgn k_y, \sgn k_z \right)$ has the largest positive wrapping number and the octant $\Sigma_{\sigma_-}$ with $\sigma_- = \left(-\sgn k_x, -\sgn k_y, -\sgn k_z \right)$ has the smallest negative wrapping number. (There may be more than one octant with largest positive wrapping number or smallest negative wrapping number but we adhere to these choices for definiteness.) 

As before, we look at the triad of octants adjacent to $\Sigma_{\sigma_+}$ and $\Sigma_{\sigma_-}$. We define $M_j$ quarter-sphere configurations on each sub-domain $Q_{j,\eps}$. The conformal quarter-sphere configurations in $Q_{x,\eps}$ cover $\Sigma_{\sigma_-}$ and $\Sigma_{\left(\sgn k_x, -\sgn k_y, -\sgn k_z\right)}$ once with negative orientation (examples of which are the $m$ odd case in (\ref{eq:ub22}))  whereas the anticonformal quarter-sphere configurations in $Q_{x,\eps}$ cover $\Sigma_{\sigma_+}$ and $\Sigma_{(-\sgn k_x, \sgn k_y, \sgn k_z)}$ with positive orientation (examples of which are the $m$ even case in (\ref{eq:ub22})). Similarly, the conformal quarter-sphere configurations in $Q_{y,\eps}$ cover $\Sigma_{\sigma_-}$ and $\Sigma_{\left(-\sgn k_x, \sgn k_y, -\sgn k_z\right)}$ once with negative orientation whereas the anticonformal quarter-sphere configurations in $Q_{y,\eps}$ cover $\Sigma_{\sigma_+}$ and $\Sigma_{(\sgn k_x, -\sgn k_y, \sgn k_z)}$ with positive orientation. Finally, the conformal quarter-sphere configurations in $Q_{z,\eps}$ cover $\Sigma_{\sigma_-}$ and $\Sigma_{\left(-\sgn k_x, -\sgn k_y, \sgn k_z\right)}$ once with negative orientation and the anticonformal quarter-sphere configurations cover $\Sigma_{\sigma_+}$ and $\Sigma_{\left(\sgn k_x, \sgn k_y, -\sgn k_z\right)}$ once with positive orientation.

In each case, the algebraic degrees $d_{G_{x,M_x,\epsilon}}(\sigma), d_{G_{y,M_y,\epsilon}}(\sigma)$ and $d_{\Gamma_{M_z,\epsilon}}(\sigma)$ can be computed as in (\ref{eq:ub51}) and (\ref{eq:ub40}). Once the $M_j$'s are specified, we define the set of numbers $\left\{w_{\sigma,0}\right\}$ as shown below - 
\begin{equation}
\label{eq:case3a}
w_{\sigma,0} = w_\sigma - d_{G_{x,M_x,\epsilon}}(\sigma) -  d_{G_{y,M_y,\epsilon}}(\sigma) - d_{\Gamma_{M_z,\epsilon}}(\sigma).
\end{equation}
The $\left\{w_{\sigma,0}\right\}$'s constitute the set of wrapping numbers for a conformal or anticonformal bulk topology $H_0$. Given $H_0$ and $M = \left(M_x, M_y, M_z\right)$, the representative $\nuvec$ is defined as in (\ref{eq:ub55}).

\begin{lem} \label{lem:nonconf}
 For every nonconformal homotopy class H with wrapping numbers $\left\{w_\sigma\right\}$,
we define the representative $K_{H_0, M,\eps}$ in (\ref{eq:ub55}) where $H_0$ and $M$ are explicitly specified. Let $\left\{w_{\sigma, 0}\right\}$ denote the wrapping numbers of the homotopy class $H_0$. Then
\begin{equation}
\label{eq:nonconf1}
\sum_\sigma |w_{\sigma, 0}| + \left|d_{\Gamma_{M_z,\epsilon}}(\sigma)\right| + \left|d_{G_{x,M_x,\epsilon}}(\sigma)\right| + \left|d_{G_{y,M_y,\epsilon}}(\sigma)\right| = \sum_\sigma |w_\sigma| + \Delta(H)
\end{equation}
where $\Delta(H)$ has been defined in (\ref{eq:Delta(H) neq 0}).
\end{lem}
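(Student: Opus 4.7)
The strategy is to reduce the identity to a per-sector sign accounting. First, I would establish that for any sector $\sigma$ the wrapping numbers decompose as
\begin{equation}\label{eq:decomp_plan}
-w_\sigma = -w_{\sigma,0} + d_{G_{x,M_x,\epsilon}}(\sigma) + d_{G_{y,M_y,\epsilon}}(\sigma) + d_{\Gamma_{M_z,\epsilon}}(\sigma),
\end{equation}
which follows from the additivity of algebraic degree over the partition $Q = Q_0 \cup \bigcup_j Q_{j,2\epsilon}$: away from the small interpolating annuli (whose pre-image contributions vanish in the sense of \eqref{eq:ub36}), each regular value has pre-images only in the bulk or in one quarter-sphere piece, and the contributions add. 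Since $d_\nu(s_\sigma)=-w_\sigma$ and $d_{F_{H_0}}(\xi_\sigma)=-w_{\sigma,0}$, combining this with the explicit formulas \eqref{eq:ub51}, \eqref{eq:ub79}, \eqref{eq:ub82} for the quarter-sphere degrees verifies that $K_{H_0,M,\epsilon}\in H$ and pins down $\{w_{\sigma,0}\}$ in each case.

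Second, I would verify case by case that the chosen $H_0$ is conformal or anticonformal, so that all $w_{\sigma,0}$ share a common sign. For instance, in Case~1a with $H_0$ given by \eqref{eq:ub66} and $n\le k_y-1$, a direct substitution of $e_0,k_0,\Omega_0$ into \eqref{eq:wrap in terms of e,k,omega} gives $w_{+++,0}=0$ and $w_{\sigma,0}\le 0$ in every other sector (since $k_x,k_y,k_z\ge 1$ and the hypothesis bounds $n$). Analogous checks handle the other sub-cases (the choice of $e_0$ flipping when needed in Cases~1d,\,1e,\,1f and their Case~2 analogues, and the anticonformal choice in Case~2c). In each case the quarter-sphere contributions on the right of \eqref{eq:decomp_plan} are nonzero only in the sectors $\sigma$ adjacent to the dominant vertex directions $\sigma_\pm$, and the construction is designed so that on sectors where $w_\sigma$ has a definite sign, both $w_{\sigma,0}$ and the nonzero quarter-sphere contributions share that sign.

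Third, I would combine these facts to evaluate the LHS of \eqref{eq:nonconf1}. On every sector where \emph{all} summands on the right of \eqref{eq:decomp_plan} have a common sign, the triangle inequality is an equality:
\begin{equation*}
|w_\sigma| = |w_{\sigma,0}|+|d_{G_{x,M_x,\epsilon}}(\sigma)|+|d_{G_{y,M_y,\epsilon}}(\sigma)|+|d_{\Gamma_{M_z,\epsilon}}(\sigma)|,
\end{equation*}
contributing $0$ to $\mathrm{LHS}-\sum_\sigma|w_\sigma|$. The only sectors producing an excess are those in which some quarter-sphere contribution is opposite in sign to $w_{\sigma,0}$; by inspecting \eqref{eq:ub38}, \eqref{eq:ub78}, \eqref{eq:ub83} one sees that these are precisely the sectors adjacent to $\sigma_+$ or $\sigma_-$ whose wrapping number $w_\sigma$ has sign opposite to that of $w_{\sigma_\pm}$, together with the correction sector accounted for by $\chi$ in Cases~1d--1f and 2d--2f. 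Summing the excesses sector by sector and comparing with \eqref{eq:Delta(H) neq 0} gives exactly $\Delta(H)$ in each case. For instance, in Case~1a, the excesses occur in the pair of sectors where the anticonformal quarter-sphere contribution $+n$ on $\Sigma_{-++}$ (say) partially cancels a negative $w_{\sigma,0}$, yielding the excess $2\max\left(0,w_{\sigma_+}-\sum_{\sigma\sim\sigma_+}\Phi(w_\sigma)\right)$; an identical analysis on the $\sigma_-$ side gives the symmetric term, and $\chi=0$ in this case.

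The main obstacle is purely bookkeeping: each of the twelve sub-cases (1a--1f, 2a--2f) plus the ``remaining'' cases with $k_j\le 0$ requires an independent check because the choices of $H_0$, $M$, and (in Case~2c) the modified quarter-sphere maps $G'_{x,M_x,\epsilon}$ and $G''_{y,M_y,\epsilon}$ differ. However, in every sub-case the computation is mechanical once \eqref{eq:decomp_plan} is in hand: one substitutes the specified $e_0,k_0,\Omega_0$ into \eqref{eq:wrap in terms of e,k,omega} to obtain $\{w_{\sigma,0}\}$, reads off the quarter-sphere degrees from \eqref{eq:ub38} (or \eqref{eq:ub78}, \eqref{eq:ub83} for Case~2c) using the prescribed $M_j$, takes absolute values, and rearranges to match $\sum_\sigma|w_\sigma|+\Delta(H)$ as given by \eqref{eq:Delta(H) neq 0}. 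The ``remaining cases'' reduce to Cases~1--2 by the same edge-sign reflections used in Section~\ref{sec:lb} together with the symmetry under $k_j\mapsto -k_j$ implemented by swapping conformal and anticonformal quarter-spheres, so no essentially new computations are needed.
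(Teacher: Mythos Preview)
Your proposal is correct and takes essentially the same approach as the paper: both proofs proceed by case-by-case substitution of the explicitly specified $H_0$ and $M$ into \eqref{eq:wrap in terms of e,k,omega} and \eqref{eq:ub38}/\eqref{eq:ub51}, then verify the identity directly (the paper works out Case~1a in detail and asserts the remaining cases are analogous, exactly as you do). Your framing in terms of the triangle inequality and ``excess sectors'' is a mild conceptual refinement, but the substance is identical; note only that in Case~1a the excess actually arises from a single sector $(-++)$ rather than a pair, giving $2\min(n,k_x-1)=\Delta(H)$ directly.
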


\begin{proof} For each of the cases in Section~\ref{sec:rep}, we explicitly specify $H_0 = \left(e_0, k_0,\Omega_0\right)$ and a set of three non-negative integers $M=\left(M_x, M_y, M_z\right)$ as shown above. Given $H_0 = \left(e_0, k_0,\Omega_0\right)$, the corresponding wrapping numbers $\left\{w_{\sigma,0}\right\}$ can be computed using formula (\ref{eq:wrap in terms of e,k,omega}). Similarly, given $M_j$, the algebraic degrees $d_{G_{x,M_x,\epsilon}}(\sigma), d_{G_{y,M_y,\epsilon}}(\sigma)$ and $d_{\Gamma_{M_z,\epsilon}}(\sigma)$ are given in (\ref{eq:ub51}) and (\ref{eq:ub40}) respectively where we have dropped explicit reference to regular values, since these algebraic degrees only depend on the octant $\Sigma_\sigma$ in question.  One can directly substitute the values of $w_{\sigma, 0}, d_{G_{x,M_x,\epsilon}}(\sigma), d_{G_{y,M_y,\epsilon}}(\sigma)$ and $d_{\Gamma_{M_z,\epsilon}}(\sigma)$ and check that
\begin{equation}
\label{eq:ub60c}
\sum_\sigma |w_{\sigma, 0}| + \left|d_{\Gamma_{M_z,\epsilon}}(\sigma)\right| + \left|d_{G_{x,M_x,\epsilon}}(\sigma)\right| + \left|d_{G_{y,M_y,\epsilon}}(\sigma)\right| 
= \sum_\sigma |w_\sigma| + \Delta(H)
\end{equation}
in all cases, where $\Delta(H)$ is defined in (\ref{eq:Delta(H) neq 0}). 

We outline the calculations for case $1a$ as an illustration. For case $1a$, the bulk topology $H_0$ is conformal with invariants as in (\ref{eq:ub66}). The corresponding wrapping numbers $\left\{w_{\sigma, 0}\right\}$ are
\begin{eqnarray}
\label{eq:ub66new} && w_{+++,0} = 0, ~ w_{-++,0} = 1- k_x,~ w_{--+,0}
= n - k_x - k_y + 1,~ w_{+-+,0} = n - k_y + 1 \nonumber \\&&
w_{++-,0} = n - k_z +1,~ w_{-+-,0} = n - k_x - k_z + 1,~ w_{---,0}
= 2n - \sum_j k_j + 1,~ w_{+--,0} = 2n - k_y - k_z +
1.
\end{eqnarray} The algebraic degrees $d_{\Gamma_{M_z,\epsilon}}(\sigma)$, $d_{G_{x,M_x,\epsilon}}(\sigma)$ and $d_{G_{y,M_y,\epsilon}}(\sigma)$ are given by (\ref{eq:ub40}) and (\ref{eq:ub51}) respectively i.e.
\begin{equation}
  \label{eq:lem4new}
  d_{G_{x,M_x,\epsilon}}(\sigma) =
  \begin{cases}
    n, & \sigma = \left(\pm + + \right), \\
%    s(w,\wbar)
-n ,& \sigma = \left( \pm - - \right),  \\
    0,& otherwise
  \end{cases}
\end{equation} and $d_{\Gamma_{M_z,\epsilon}}(\sigma) = 0$ and $d_{G_{y,M_y,\epsilon}}(\sigma)=0$ for all $\sigma$ since $M_z = M_y = 0$.

 We substitute these values into the left-hand side of (\ref{eq:nonconf1}) and obtain
\begin{equation}
\label{eq:lem1new}
\sum_\sigma |w_{\sigma, 0}| + \left|d_{\Gamma_{M_z,\epsilon}}(\sigma)\right| + \left|d_{G_{x,M_x,\epsilon}}(\sigma)\right| + \left|d_{G_{y,M_y,\epsilon}}(\sigma)\right| = 
\sum_\sigma |w_\sigma| + 2\min \left(n, k_x -1 \right).
\end{equation}

Next, we compute $\Delta(H)$ for all nonconformal homotopy classes within Case $1a$ i.e. with $1\leq n \leq k_y- 1$. This case can be partitioned into two sub-cases according to whether $n \leq k_x - 1$ or $n\geq k_x - 1$. If $n\leq k_x -1$, then $w_{+++}$ is the only positive wrapping number whereas if $n\geq k_x -1 $, then $w_{+++}$ and $w_{-++}$ are the only two non-negative wrapping numbers. The factor $\chi$ in (\ref{eq:Delta(H) neq 0}) vanishes by definition (see (\ref{eq: chi})) since $k_j > 0$ for all $j$. Here $\sigma_+ = (+++)$ and $\sigma_- = (---)$  in the definition of $\Delta(H)$ in (\ref{eq:Delta(H) neq 0}). We note that
$$ |w_{---}| -     \sum_{\sigma \sim (---)} \Phi(-w_\sigma)  =  \sum_j k_j - n - 1  - \left(2\sum_j k_j  - 3n - 3\right) = 2n + 2 -\sum_j k_j < 0 $$ for $1\leq n \leq k_y - 1$, where $\Phi(x) = \frac{1}{2}\left(x + |x|\right)$. Therefore, we need only compute
$$ |w_{+++}| -     \sum_{\sigma \sim (+++)} \Phi(w_\sigma)   $$ in (\ref{eq:Delta(H) neq 0}).

For $1\leq n \leq k_x - 1$, 
\begin{equation}
\label{eq:nonconfnew2} w_{+++} -  \sum_{\sigma \sim (+++)} \Phi(w_\sigma) = n \end{equation}
whereas for $k_x-1 \leq n \leq k_y - 1$, \begin{equation}
\label{eq:nonconfnew3} w_{+++} -  \sum_{\sigma \sim (+++)} \Phi(w_\sigma) = k_x - 1.\end{equation}
 Combining (\ref{eq:nonconfnew2}) and (\ref{eq:nonconfnew3}), we obtain
\begin{equation}
\label{eq:eq:nonconfnew4}
\Delta(H) = 2\max\left(0,\ w_{+++}  -
    \sum_{\sigma \sim (+++)}
    \Phi(w_\sigma) ,\ |w_{---}| -
    \sum_{\sigma \sim (---)}
    \Phi(-w_\sigma) \right)  = 2\min\left(n, k_x - 1\right)
\end{equation}
and a direct comparison with (\ref{eq:lem1new}) establishes the required result.
\end{proof}

\subsection{Proof of Theorem~\ref{thm:2}}
\label{sec:ub}
\begin{proof}
Let $H$ denote an arbitrary nonconformal homotopy class in $\Ccal_T(O;S^2)$ with associated wrapping numbers $\left\{w_\sigma \right\}$.
For $H$ conformal or anticonformal, $\Delta(H) = 0$ by definition and we can construct a rational representative $F_H$ of the form (\ref{eq:f}). As demonstrated in \cite{mrz2}, $E\left(F_H\right) = \pi \sum_\sigma |w_\sigma|$ (refer to (\ref{eq:ub1})), consistent with the upper bound in Theorem~\ref{thm:2}. 

For $H$ nonconformal, we specify a conformal or anticonformal bulk topology $H_0$ and a set of three non-negative integers $M = \left(M_x, M_y, M_z\right)$. The function $K_{H_0,M,\eps}$ is defined as in (\ref{eq:ub55}) and the representative $\nuvec = P^{-1}\left(K_{H_0,M,\eps}\right)$. 
One can readily verify that
the function $K_{H_0,M,\eps}$ belongs to the space $\Ccal_T(Q,\Cc^*)$. To see why, it suffices to note that $K_{H_0,M,\eps}$
is real on the real axis, imaginary on the imaginary axis and of unit modulus on
the unit circle. This is immediate from the definitions of
$F_{H_0}$ and $G_{j,M_j,\epsilon}$. The interpolatory
functions,
$\left((1-s)/G_{j,M_j,\epsilon} + s/
F_{H_0}\right)^{-1}$ and $
 \left((1-s)G_{j,M_j,\epsilon} + s
F_{H_0}\right)$, on $Q_{j,2\eps} - Q_{j,\eps}$, satisfy the tangent
boundary conditions from the definition of $G_{j,M_j,\epsilon}$ in (\ref{eq:ub48}).  Further, we continuously interpolate between the different quarter-sphere configurations and between $F_{H_0}$ and $G_{j,M_j,\eps}$, and this ensures the continuity of the overall configuration $K_{H_0,M,\eps}$.

Let $\left\{w_{\sigma,K}\right\}$ and $\left\{w_{\sigma,0}\right\}$ denote the wrapping numbers of $K_{H_0,M,\eps}$ and $F_{H_0}$ respectively. Then $w_{\sigma,K}$ and $w_{\sigma,0}$ are related by
\begin{equation}
\label{eq:ub62}
w_{\sigma,K} = w_{\sigma, 0} + d_{G_{x,M_x,\epsilon}}(\sigma) + d_{G_{y,M_y,\epsilon}}(\sigma)+d_{\Gamma_{M_z,\epsilon}}(\sigma)
\end{equation} where $d_{\Gamma_{M_z,\epsilon}}(\sigma)$, $d_{G_{x,M_x,\epsilon}}(\sigma)$ and $d_{G_{y,M_y,\epsilon}}(\sigma)$ are given by (\ref{eq:ub40}) and (\ref{eq:ub51}) respectively (in the special case $2c$, we replace $d_{G_{x,M_x,\epsilon}}(\sigma)$ and $d_{G_{y,M_y,\epsilon}}(\sigma)$ by $d_{G^{'}_{x,M_x,\epsilon}}$ and $d_{G^{''}_{y,M_y,\epsilon}}$ in (\ref{eq:ub79}) and (\ref{eq:ub82}) respectively). One can directly compute the $w_{\sigma,K}$'s from (\ref{eq:ub62}) and check that
\begin{equation}
\label{eq:ub62a}
w_{\sigma,K} = w_\sigma \quad \forall \sigma
\end{equation} so that $\nuvec= P^{-1}\left(K_{H_0,M,\eps}\right)\in H$ as required.

The Dirichlet energy, $E\left(K_{H_0,M,\eps}\right)$, is the sum of the energy contributions from the different sub-domains and
\begin{equation}
\label{eq:ub62b}
E\left(K_{H_0,M,\eps}\right) \leq \int\int_{Q_0} \Hcal(F_{H_0}) d^2 w + \sum_{j} \int\int_{Q_{j,\eps}}\Hcal\left(G_{j,M_j,\eps}\right)~d^2 w + C\eps
\end{equation}
where $C$ is a positive constant independent of $\eps$ (the energy of the interpolatory functions on $Q_{j,2\eps}\setminus Q_{j,\eps}$ has been absorbed into the $C\eps$-contribution on the right-hand side of (\ref{eq:ub62b}) and $G_{z,M_z,\epsilon} = \Gamma_{M_z,\epsilon}$ in (\ref{eq:ub37}) ). The function $K_{H_0,M,\eps}$ is either conformal or anticonformal everywhere by construction. Therefore, by using arguments similar to (\ref{eq:ub1}) and (\ref{eq:ub17}), we have that
\begin{eqnarray}
\label{eq:ub60a}
&& \int\int_{Q_0} \Hcal(F_{H_0}) d^2 w \leq \pi \sum_{\sigma} \left| w_{\sigma, 0} \right|  + C_1 \eps  \nonumber \\
&& \int\int_{Q_{j,\eps}}\Hcal\left(G_{j,M_j,\eps}\right)~d^2 w \leq \pi \sum_{\sigma} \left| d_{G_{j,M_j,\epsilon}}(\sigma) \right| + C_2 \eps, ~j=x,y,z
\end{eqnarray} where $C_1$ and $C_2$ are positive constants independent of $\eps$. We substitute (\ref{eq:ub60a}) into (\ref{eq:ub62b}) to get the upper bound
\begin{equation}
\label{eq:ub60b}
E\left(K_{H_0,M,\eps}\right) \leq \pi \sum_\sigma \left( |w_{\sigma, 0}| + \left|d_{\Gamma_{M_z,\epsilon}}(\sigma)\right| + \left|d_{G_{x,M_x,\epsilon}}(\sigma)\right| + \left|d_{G_{y,M_y,\epsilon}}(\sigma)\right|\right) + D\eps
\end{equation} for a positive constant $D$ independent of $\eps$. Finally, from Lemma~\ref{lem:nonconf}, we have that
$$\sum_\sigma  \left( |w_{\sigma, 0}| + \left|d_{\Gamma_{M_z,\epsilon}}(\sigma)\right| + \left|d_{G_{x,M_x,\epsilon}}(\sigma)\right| + \left|d_{G_{y,M_y,\epsilon}}(\sigma)\right|\right) = \sum_\sigma |w_\sigma| + \Delta(H) $$
and substituting the above into (\ref{eq:ub60b}) yields
\begin{equation}
\label{eq:ub60d}
E\left(K_{H_0,M,\eps}\right) \leq \pi\left(\sum_\sigma |w_\sigma| + \Delta(H) \right) + D\eps.
\end{equation}
In the limit $\eps \to 0$, we recover the upper bound in Theorem~\ref{thm:2}.
\end{proof}

\section*{Acknowledgment}\label{sec:acknowledgment}
AM was supported by a Royal Commission for the Exhibition of 1851
Research Fellowship between 2006 - 2008. AM is now supported by
Award No. KUK-C1-013-04 , made by King Abdullah University of
Science and Technology (KAUST) to the Oxford Centre for
Collaborative Applied Mathematics. We thank Tim Riley and Ulrike Tillmann for
stimulating discussions.

\appendix

 \renewcommand{\theequation}{A.\arabic{equation}}
  % redefine the command that creates the equation no.
  \setcounter{equation}{0}  % reset counter

\section{Proof of Propositions \ref{prop: spelling length} and \ref{prop: spelling length 2}}

%The free group $F_N$ is the set of words, ie $k$-tuples of letters
%$X_1$, $X_2, \ldots, X_N$ and $X_1^{-1}$, $X_2^{-1},\ldots,
%X_N^{-1}$, under the equivalence relation which regards consecutive
%letters $X_r$ and $X_r^{-1}$ to be null.

%bounds arise from the fact that a simple prove Propositions~\ref{prop:
%spelling length} and \ref{prop: spelling length 2}.
%Throughout the Appendix we shall write
%$\Lambda$ instead of $\Lambda_N$.

\subsection{Spelling length on words.} \label{sec: free group}

We state some basic definitions and notation concerning words and
free groups. Let
\begin{equation}
  \label{eq:alphabet}
  {\cal A}_N = \{ X_1, \ldots, X_N, X_1^{-1}, \ldots, X_N^{-1}\}.
\end{equation}
Elements of ${\cal A}_N$ are called {\it letters}.
%$X_r$ is called a {\it positive} letter, and
%$X_r^{-1}$ a {\it negative} letter.
$X_r^{-1}$ is called the {\it
  inverse} of $X_r$, and vice versa.
  %In examples, for the sake of
%clarity, we may denote letters by $A,B,C,\ldots,
%A^{-1},B^{-1},C^{-1},\ldots$ rather than $X_r$ and $X_r^{-1}$.
Sometimes we denote letters  by $A$, $B$, $C$, etc.
%and write $F(A,B,C)$ instead of $F_3$.
A {\it word of length $k$}
on ${\cal A}_N$ is a $k$-tuple of letters ${\bf U} =
(U(1),\ldots,U(k))$, where $U(j) \in {\cal A}_N$. The set of words
of length $k$ is denoted ${\cal
  L}_N^k$.  The word of zero length is denoted $\bf e$, and we write
  ${\cal L}_N^0 = \{ {\bf e} \}$.  The set of all words is given by
\begin{equation}
    \label{eq:L_N}
    {\cal L}_N = \cup_{k=0}^\infty {\cal L}_N^k.
\end{equation}
Let $L$ denote the length function on ${\cal L}_N$, so that $L({\bf
  U}) = k$ for $\Ub \in {\cal L}_N^k$.  % and $L({\bf e}) = 0$.
Adjunction defines a product operation on ${\cal L}_N$; given ${\bf
U} \in {\cal L}_N^k$ and ${\bf V} \in {\cal
  L}_N^l$, define $({\bf U}, {\bf V}) \in  {\cal L}_N^{k+l}$ by
\begin{equation}
  \label{eq:word_mult}
  ({\bf U}, {\bf V}) = (U(1),\ldots, U(k), V(1), \ldots, V(l)).
\end{equation}
Then $({\bf
  U},{\bf e}) = ({\bf e},{\bf U}) = {\bf U}$.  Clearly,
\begin{equation}
  \label{eq:length_is_additive}
  L(({\bf U},{\bf V})) = L({\bf U}) + L({\bf V}).
\end{equation}
We use the exponential notation $X_r^j$ to denote the $j$-tuple
$(X_r,\ldots, X_r)$ for $j$ positive; for $j$ negative, $X_r^j$
denotes the $j$-tuple $(X_r^{-1},\ldots, X_r^{-1})$, and for $j =
0$, the identity $\bf e$. Also, for ${\bf U} \in {\cal L}_N^k$ with $k>0$,
define ${\bf U^{-1}} \in {\cal L}_N^k$ by
\begin{equation}
  \label{eq:inverse_words}
  {\bf U^{-1}} = (U(k)^{-1},\ldots,U(1)^{-1})
\end{equation} and define $\bf e^{-1}$ to be $\bf e$.

The {\it free group} $F(X_1,\ldots,X_N)$ is the set of equivalence
classes in ${\cal
  L}_N$ under all relations of the form
\begin{equation}
  \label{eq:free_relation}
  ({\bf U}, X_r, X_r^{-1}, {\bf V}) \sim ({\bf U}, X_r^{-1}, X_r, {\bf V}) \sim
  ({\bf U},{\bf V}).
\end{equation}
Given ${\bf U} \in {\cal L}_N$, we denote its equivalence class in
$F(X_1,\ldots,X_N)$ either by $[{\bf U}]$ or by  $U$. Multiplication
in $F(X_1,\ldots,X_N)$ is defined and denoted by
\begin{equation}
  \label{eq:mult_in_FN}
  UV = [({\bf U}, {\bf V})]
\end{equation}
%and it is easy to check that it is well defined %(ie, independent of
%the choice of representatives ${\bf U}$ and $\bf V$).
Inverses in $F(X_1,\ldots,X_N)$ are given by
\begin{equation}
  \label{eq:inverses_1}
  U^{-1} = [{\bf U^{-1}}].
\end{equation}

%\subsection{Calculation of the spelling length}\label{sec: calc spell length}

%We give an algorithm for calculating the spelling length
%$\Lambda(U)$ for $U \in F_N$ in terms of an arbitrary representative
%word ${\bf U}$ (which need not be freely reduced).
We introduce another length function on ${\cal L}_N$, denoted
$\lambda$. As it will turn out to be equivalent to the spelling
length $\Lambda$ (cf Proposition~\ref{prop:lambda=Lambda} below), we
shall also refer to $\lambda$ as the spelling length. $\lambda$ is
defined inductively as follows: On words of length $0$, ie $\bf e$,
we take
\begin{equation}
  \label{eq:lambda(e)}
  \lambda({\bf e}) = 0.
\end{equation}
Given that $\lambda$ is defined on words of length less than $k$,
for ${\bf U} \in {\cal L}_n^k$ we define
%\lambda({\bf e}) =
%0$.  Then, for $k > 0$, given that $\lambda$ is defined on words of
%length less than $k$, define $\lambda$ on words $\bf U$ of length $k$
%as
%  A}_N^j$ for all $0 \le j \le k$, define $\lambda({\bf U})$ for  $\bf U \in
%{\cal A}_N^{k+1}$ to be
\begin{equation}
  \label{eq:formula_for_l}
  \lambda({\bf U}) = \min\Big(
%1 + \lambda((U(2),\ldots, U(k))),
1 + \lambda(\Ub_{2:k}),
%\min_{1 \le j \le k \, |\, U(j) = U(1)^{-1}} \lambda((U(2),\ldots,
%U(j-1))) + U(2:j-1) \lambda((U(j+1),\ldots, U(k))) \Big),
\min_{  U(j) = U(1)^{-1}} \lambda(\Ub_{2:j-1}) + \lambda(\Ub_{j+1:k}
) \Big),
\end{equation}
where ${\bf U}_{a:b}$ is equal to $(U(a), \ldots, U(b))$ for $b \ge
a$ and to $\bf e$ for $b < a$.
%Thus, for example, $\lambda =
%1$ on all words of length one, and on words of length two, $\lambda
%= 0$ if the word consists of a letter and its inverse; otherwise,
%$\lambda = 2$.
%In general,
%\begin{equation}
%  \label{eq:lambda,L}
%  \lambda = L \mod 2.
%\end{equation}
From the definition (\ref{eq:formula_for_l}), if $\lambda({\bf U}) <
1 + \lambda(\Ub_{2:k})$, then there is an index $j$ such that $U(1)$
and $U(j)$ are inverses, and $ \lambda({\bf U}) =
\lambda(\Ub_{2:j-1}) + \lambda(\Ub_{j+1:k})$. Proceeding recursively,
we see that $\lambda({\bf U})$ is achieved by specifying a set of
inverse-letter pairs
\begin{equation}
  \label{eq:pairing}
  I = \left \{\{a_1,b_1\}, \ldots, \{a_p, b_p\}\right\}
\end{equation}
such that
\begin{subequations} \label{eq:pairing_properties_2}
\begin{gather}
  U(b_i) = U(a_i)^{-1} \ \text{for all}\ \{a_i,b_i\} \in I,\\
  a_i < a_j  \ \text{implies  either} \ b_i < a_j \ \text{or} \
\ b_i > b_j.
\end{gather}
\end{subequations}
The last condition just means that the intervals $[a_i,b_i]$ and
$[a_j, b_j]$ are either disjoint or else one contains  the other as
a proper subset. $I$ is called a {\it pairing}.  Given a word $\bf
U$ and a pairing $I$, $I$ is said to be {\it valid} for $\bf U$ if
(\ref{eq:pairing_properties_2}) is satisfied. If $i$ belongs to some
pair in $I$, we say that $i$ is {\it paired in $I$}.  Paired letters
do not contribute to the spelling length.  Hence, if $I$ is a valid
pairing for $\bf U$,
\begin{equation}
  \label{eq:pairing_and_lambda}
  \lambda({\bf U}) \le L({\bf U}) - 2 |I|,
\end{equation}
where $|I|$ is the number of elements (ie, pairs) in $I$. If
$\lambda({ \bf U}) = L({\bf U}) - 2|I|$, we say that $I$ is an {\it
optimal pairing}. The preceding discussion implies that every word
has an optimal pairing.
%(Apply the formula (\ref{eq:formula_for_l})
%recursively and collect the pairs, with indices appropriately
%adjusted,
% that occur whenever the minimum is
%realised by one of the terms in  $\min_{1 \le j \le k}$.)

%
% Since $\Lambda(U)$ is the smallest possible spelling length of $U$, it
% follows that
% \begin{equation}
%   \label{eq:L_and_lambda}
%   \Lambda(U) \le \lambda({\bf U}).
% \end{equation}
%
%$\lambda$ is well defined on $U$, and

$\lambda$ and $\Lambda$ are equivalent in the following sense:
\begin{prop}\label{prop:lambda=Lambda}
 $\lambda$ descends to a function on
the free group $F(X_1,\ldots,X_N)$ where it coincides with
$\Lambda$.  That is, if $[{\bf U}] = [{\bf U'}] = U$, then
\[ \lambda({\bf U}) = \lambda({\bf U'}) = \Lambda(U).\]
%ie $\lambda$ is well defined on ${F}_N$.
%\end{prop}
%
%\begin{prop}\label{prop:lambda=Lambda}
%  \[ \lambda({\bf U)} = \Lambda(U).\]
\end{prop}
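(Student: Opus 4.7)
The plan is to decompose the proof into two pieces: first show that $\lambda$ is invariant under the defining relation $(\mathbf{U}, X_r, X_r^{-1}, \mathbf{V}) \sim (\mathbf{U}, \mathbf{V})$ of the free group, so that $\lambda$ descends to a function on $F(X_1, \ldots, X_N)$; second, compare the descended $\lambda$ with $\Lambda$ by proving both inequalities.

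For invariance, write $\mathbf{U}' = (\mathbf{U}, X_r, X_r^{-1}, \mathbf{V})$ and let $a, b = a+1$ be the positions of the two inserted letters. The inequality $\lambda(\mathbf{U}') \le \lambda(\mathbf{U}, \mathbf{V})$ is immediate: extend any optimal pairing of $(\mathbf{U}, \mathbf{V})$ by adjoining $\{a, b\}$ (which is automatically compatible with every existing pair, since $b = a+1$). For the reverse inequality, take an optimal pairing $I'$ of $\mathbf{U}'$ and argue that, without loss of generality, $\{a, b\} \in I'$. If at least one of $a, b$ is unpaired, we can simply adjoin $\{a, b\}$, contradicting optimality; if $a$ is paired with some $a' \ne b$ and $b$ with some $b' \ne a$, then the swap $\{a, a'\}, \{b, b'\} \leadsto \{a, b\}, \{a', b'\}$ has the same size, preserves the inverse-pair condition (since $U(a') = X_r^{-1} = U(b)$ and $U(b') = X_r = U(a)$), and preserves the nesting condition — a case analysis based on whether $a', b'$ lie to the left of $a$ or right of $b$ shows that every other pair $\{c, d\}$ remains disjoint from or nested with both new pairs. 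Once $\{a, b\} \in I'$, removing it and reindexing yields a valid pairing of $(\mathbf{U}, \mathbf{V})$ of size $|I'| - 1$, giving $\lambda(\mathbf{U}, \mathbf{V}) \le \lambda(\mathbf{U}')$.

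For $\lambda(g) \le \Lambda(g)$: fix a minimal spelling $g = h_1 c_{i_1}^{\epsilon_1} h_1^{-1} \cdots h_r c_{i_r}^{\epsilon_r} h_r^{-1}$ and word representatives $\mathbf{h}_i$ of $h_i$. Form the concatenated word $\mathbf{W} = (\mathbf{h}_1, C_{i_1}^{\epsilon_1}, \mathbf{h}_1^{-1}, \ldots, \mathbf{h}_r, C_{i_r}^{\epsilon_r}, \mathbf{h}_r^{-1})$. Within each block, pair the $s$-th letter of $\mathbf{h}_i$ with its inverse at the symmetric position in $\mathbf{h}_i^{-1}$; these pairs form a nested onion around the central letter $C_{i_j}^{\epsilon_j}$, and blocks corresponding to different $i$ occupy disjoint intervals, so the whole pairing is valid. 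Counting the unpaired positions, this gives $\lambda(\mathbf{W}) \le r$, and invariance promotes this to $\lambda(g) \le r = \Lambda(g)$.

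For $\Lambda(g) \le \lambda(g)$, pick any word representative $\mathbf{U}$ with $L(\mathbf{U}) = k$ and induct on $k$. The base case $\mathbf{e}$ is trivial. If $\lambda(\mathbf{U}) = 1 + \lambda(\mathbf{U}_{2:k})$, then $[\mathbf{U}] = U(1) \cdot [\mathbf{U}_{2:k}]$ and $U(1)$ itself is (trivially) a conjugated generator, so $\Lambda([\mathbf{U}]) \le 1 + \Lambda([\mathbf{U}_{2:k}]) = \lambda(\mathbf{U})$ by induction. If instead $\lambda(\mathbf{U}) = \lambda(\mathbf{U}_{2:j-1}) + \lambda(\mathbf{U}_{j+1:k})$ with $U(j) = U(1)^{-1}$, then $[\mathbf{U}] = U(1)[\mathbf{U}_{2:j-1}]U(1)^{-1} \cdot [\mathbf{U}_{j+1:k}]$; every factor in a spelling of $[\mathbf{U}_{2:j-1}]$ can be conjugated by $U(1)$ without changing the number of factors, so $\Lambda(U(1)[\mathbf{U}_{2:j-1}]U(1)^{-1}) \le \Lambda([\mathbf{U}_{2:j-1}])$, and subadditivity of $\Lambda$ combined with induction gives $\Lambda([\mathbf{U}]) \le \lambda(\mathbf{U})$. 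The main obstacle in the whole argument is the nesting bookkeeping in the swap step of the invariance proof; the adjacency $b = a+1$ is what makes each case check go through, but the enumeration of relative positions of $a'$ and $b'$ still requires care.
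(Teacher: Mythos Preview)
Your argument is essentially correct and, for the invariance step, takes a genuinely different route from the paper. The paper first establishes auxiliary lemmas (sub-additivity and cyclicity of $\lambda$), then uses cyclicity to reduce $\lambda((\Ub,X^{-1},X,\Vb))=\lambda((\Ub,\Vb))$ to $\lambda((X^{-1},X,\Wb))=\lambda(\Wb)$, and finally unwinds the recursive definition (\ref{eq:formula_for_l}) together with sub-additivity. You instead work directly with the pairing description: given an optimal pairing for the longer word, you manipulate it so that the adjacent inverse pair $\{a,a{+}1\}$ appears, then delete it. Your approach is more combinatorial and self-contained for this step; the paper's approach has the side benefit of producing the cyclicity lemma, which is not otherwise needed here but fits the recursive framework. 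For the two inequalities $\lambda\le\Lambda$ and $\Lambda\le\lambda$, your arguments are the same as the paper's (the explicit ``onion'' pairing is exactly how Lemma~\ref{lem:invconj} is proved, and the induction for $\Lambda\le\lambda$ is identical).

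One small gap to patch: in ``if at least one of $a,b$ is unpaired, we can simply adjoin $\{a,b\}$, contradicting optimality'', the contradiction only arises when \emph{both} are unpaired. If exactly one, say $b$, is unpaired while $\{a,a'\}\in I'$, you cannot adjoin $\{a,b\}$ directly; instead replace $\{a,a'\}$ by $\{a,b\}$ (valid since $[a,a{+}1]$ is a minimal interval), obtaining another optimal pairing that now contains $\{a,b\}$. With that adjustment the case split is complete. The swap in the ``both paired'' case is correct: adjacency $b=a{+}1$ forces $a',b'$ to lie on the same side or on opposite sides of $\{a,b\}$, Case~3 ($a'>b$, $b'<a$) is excluded by validity of $I'$, and in each remaining case any third pair that properly overlapped the new pair $\{a',b'\}$ would already have properly overlapped one of $\{a,a'\}$ or $\{b,b'\}$.
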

The proof of Proposition~\ref{prop:lambda=Lambda} makes use of
several properties of $\lambda$ which are established in the
following lemmas.
\begin{lem}[Sub-additivity]\label{lem:subadd}
For all  $\Ub, \Vb \in \Lcal_N$,
\[\lambda(({\bf U},{\bf V})) \le \lambda({\bf U}) + \lambda({\bf V}).\]
\end{lem}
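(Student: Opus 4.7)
The plan is to exploit the characterization of $\lambda$ in terms of valid pairings, established in the discussion following (\ref{eq:formula_for_l}). Every word $\Ub$ admits an \emph{optimal pairing} $I_{\Ub}$, i.e.\ a set of inverse-letter pairs satisfying (\ref{eq:pairing_properties_2}) for which the bound (\ref{eq:pairing_and_lambda}) is an equality: $\lambda(\Ub) = L(\Ub) - 2|I_{\Ub}|$.

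Given $\Ub \in \Lcal_N^k$ and $\Vb \in \Lcal_N^l$, the idea is to build a valid pairing for the concatenation $(\Ub, \Vb)$ by placing $I_{\Ub}$ and $I_{\Vb}$ side by side. More precisely, pick optimal pairings $I_{\Ub}$ for $\Ub$ and $I_{\Vb}$ for $\Vb$, and set
$$
I \;=\; I_{\Ub} \;\cup\; \{\{a+k,\,b+k\} : \{a,b\} \in I_{\Vb}\},
$$
so that pairs in $I_{\Vb}$ are reindexed to refer to the second half of the concatenation.

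The key check is that $I$ is a valid pairing for $(\Ub, \Vb)$ in the sense of (\ref{eq:pairing_properties_2}). The inverse-letter condition (a) is immediate, since it held within $\Ub$ and $\Vb$ separately and concatenation does not alter individual letters. The nesting/disjointness condition (b) is where one might worry about interaction between the two halves, but it is in fact automatic: every pair inherited from $I_{\Ub}$ has both endpoints in $\{1,\dots,k\}$, while every shifted pair from $I_{\Vb}$ has both endpoints in $\{k+1,\dots,k+l\}$, so any two pairs drawn from different halves are disjoint intervals and cannot interleave. Within each half, the condition is inherited from $I_{\Ub}$ or $I_{\Vb}$.

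With $I$ valid, $|I|=|I_{\Ub}|+|I_{\Vb}|$, so applying (\ref{eq:pairing_and_lambda}) and using the length additivity (\ref{eq:length_is_additive}) together with the optimality of $I_{\Ub}$ and $I_{\Vb}$ yields
$$
\lambda((\Ub,\Vb)) \;\le\; L((\Ub,\Vb)) - 2|I| \;=\; \bigl(L(\Ub)-2|I_{\Ub}|\bigr) + \bigl(L(\Vb)-2|I_{\Vb}|\bigr) \;=\; \lambda(\Ub) + \lambda(\Vb),
$$
which is the desired inequality. The only point requiring any care is the nesting/disjointness verification, but once the pairings are localized to their respective halves this is essentially bookkeeping, so I do not anticipate a genuine obstacle.
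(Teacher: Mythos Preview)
Your proof is correct and takes a genuinely different route from the paper's. The paper proceeds by induction on $L(\Ub)$, unwinding the recursive definition (\ref{eq:formula_for_l}) directly: it bounds $\lambda((\Ub,\Vb))$ by the minimum appearing in the definition, applies the induction hypothesis to the shorter tails $(\Ub_{2:k},\Vb)$ and $(\Ub_{j+1:k},\Vb)$, and recognises the resulting expression as $\lambda(\Ub)+\lambda(\Vb)$. Your argument instead leans on the pairing characterisation already established before the lemma: you juxtapose optimal pairings for $\Ub$ and $\Vb$ and observe that the result is valid for $(\Ub,\Vb)$ since the two blocks of indices are disjoint. Your approach is shorter and more transparent once one has the pairing language in hand; the paper's approach has the minor virtue of working purely from the recursive definition, which makes it a useful warm-up for the more delicate induction in the cyclicity lemma that follows.
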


\begin{proof}
By induction on $L({\bf U})$.  The statement is trivial for $L({\bf
U}) = 0$, ie ${\bf U} = {\bf e}$.  Suppose it is true for all ${\bf
U}$ with $L({\bf U}) < k$, and suppose $L({\bf U}) = k$. From the
definition (\ref{eq:formula_for_l}),
\begin{equation}
  \label{eq:prop_1_1}
   \lambda((\Ub,\Vb)) \le
 \min\Big(
1 + \lambda(\Ub_{2:k}, \Vb), \min_{ U(j) = U(1)^{-1}}
\lambda(\Ub_{2:j-1}) + \lambda(\Ub_{j+1:k}, \Vb ) \Big)
\end{equation}
(we have omitted terms from indices $j$ for which
 $V(j) = U(1)^{-1}$ -- hence we have an inequality rather than an equality in
(\ref{eq:prop_1_1})). By the induction hypothesis,
\begin{equation}
  \label{eq:prop_1_2}
\lambda( \Ub_{2:k}, \Vb) \le \lambda(\Ub_{2:k}) + \lambda(\Vb),
\quad \lambda(\Ub_{j+1:k}, \Vb )\le \lambda(\Ub_{j+1:k}) +
\lambda({\bf V}).
\end{equation}
From (\ref{eq:formula_for_l}), (\ref{eq:prop_1_1}) and
(\ref{eq:prop_1_2}),
  \begin{equation}
  \label{eq:prop_1_3}
   \lambda(({\bf U},{\bf V})) \le
 \min\Big(
1 + \lambda(\Ub_{2:k}),  \min_{ U(j) = U(1)^{-1}}
\lambda(\Ub_{2:j-1}) +
 \lambda(\Ub_{j+1:k})
\Big) + \lambda({\bf V}) =
 \lambda({\bf U})+ \lambda({\bf V}).
\end{equation}
\end{proof}

\begin{lem}[Cyclicity]\label{lem:cyclic}
Let $\Ub = (U(1), \ldots, U(k))$.  Then
\[
\lambda((U(k),U(1)\ldots U(k-1))) =  \lambda({\bf U}).\]
\end{lem}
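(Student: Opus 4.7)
The plan is to prove the weaker statement $\lambda(\phi(\Ub)) \le \lambda(\Ub)$, where $\phi(\Ub) = (U(k), U(1), \ldots, U(k-1))$ is the cyclic left-rotation, and then deduce equality by iteration, since $\phi^k = \mathrm{id}$ implies
\[
\lambda(\Ub) = \lambda(\phi^k(\Ub)) \le \lambda(\phi^{k-1}(\Ub)) \le \cdots \le \lambda(\phi(\Ub)) \le \lambda(\Ub),
\]
forcing equality throughout.

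To establish the one-sided bound I would transplant an optimal pairing $I$ for $\Ub$ (which exists by the recursive construction described after (\ref{eq:formula_for_l})) to a valid pairing $I'$ for $\Wb := \phi(\Ub)$ with $|I'| = |I|$. Writing $W(1) = U(k)$ and $W(i+1) = U(i)$ for $1 \le i < k$, the letter-inverse requirement (\ref{eq:pairing_properties_2}a) transfers automatically under any index relabeling that respects the positions of the letters, so the only real content is the non-crossing condition (\ref{eq:pairing_properties_2}b).

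There are two cases. Suppose first that $k$ is paired in $I$, say $\{j,k\}\in I$ with $j<k$. By (\ref{eq:pairing_properties_2}b) every other pair $\{a,b\}\in I$ lies entirely below $j$ (i.e.\ $b<j$) or strictly between $j$ and $k$ (i.e.\ $j<a<b<k$). I would set
\[
I' = \{\{1,\,j+1\}\} \cup \{\{a+1,\,b+1\} : \{a,b\} \in I \setminus \{\{j,k\}\}\}.
\]
Pairs arising from $\{a,b\}$ with $b<j$ become $\{a+1,b+1\}$ with $b+1\le j$, hence nested inside $[1,j+1]$; pairs arising from $\{a,b\}$ with $j<a<b<k$ become $\{a+1,b+1\}$ with $j+1<a+1<b+1\le k$, hence disjoint from $[1,j+1]$. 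Non-crossing is preserved. If instead $k$ is unpaired in $I$, then every pair $\{a,b\}\in I$ satisfies $a<b<k$, and $I' = \{\{a+1,b+1\}:\{a,b\}\in I\}$ is trivially non-crossing.

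In either case $|I'|=|I|$ and $L(\Wb)=L(\Ub)$, so by (\ref{eq:pairing_and_lambda}) and the optimality of $I$,
\[
\lambda(\Wb) \le L(\Wb) - 2|I'| = L(\Ub) - 2|I| = \lambda(\Ub),
\]
completing the argument after iterating $k$ times as above. The only real subtlety is verifying that the nesting/disjointness structure around the pair $\{j,k\}$ survives the relabeling that sends position $k$ to position $1$; this is what I expect to be the main (though mild) obstacle, and it is handled cleanly by the case split on whether $k$ is paired.
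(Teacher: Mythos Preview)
Your proof is correct and takes a genuinely different route from the paper's. The paper argues by induction on $L(\Ub)$, unfolding the recursive definition (\ref{eq:formula_for_l}) of $\lambda$ twice for each of $\Ub$ and $\Ubp$ (once to peel off $U(1)$, once to peel off $U(k)$, in the two possible orders) and then matching the resulting four-term minima line by line. Your approach instead works directly with the optimal-pairing characterisation: you transport an optimal pairing for $\Ub$ to a valid pairing for the rotated word of the same cardinality, giving $\lambda(\phi(\Ub))\le\lambda(\Ub)$, and close the loop via $\phi^k=\mathrm{id}$.

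Your argument is cleaner and more conceptual; it makes transparent \emph{why} cyclicity holds (non-crossing pairings on a linear word are really pairings on a circle). The case split on whether $k$ is paired is exactly right, and your verification that the transported pairing remains non-crossing is complete. The only external ingredient you use beyond what the paper's proof uses is the existence of optimal pairings, but that is already established in the paragraph following (\ref{eq:pairing_and_lambda}), so there is no circularity. The paper's proof, by contrast, stays entirely within the recursive definition and does not invoke pairings at all; this makes it more self-contained but also more opaque, and the double unfolding with index bookkeeping is heavier than your pairing-transport.
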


\begin{proof}%[Proof of Proposition~\ref{prop:cyclic}]
By induction on $L({\bf U})$.  The statement is trivial for $L({\bf
U}) = 0$ and $L({\bf U}) = 1$.  Given $k > 1$, suppose it is true
for all ${\bf U}$ with $L({\bf U}) < k$ and let $L({\bf U}) =
k$. Let $\Ubp = (U(k),U(1),\ldots,U(k-1)$.

Let us compute $\lambda({\bf U)}$, applying the definition
(\ref{eq:formula_for_l}) twice, as follows: The first application
gives
\begin{equation}
  \label{eq:p2_1}
  \lambda({\bf U}) =
\min\left( 1 + \lambda(\Ub_{2:k}), \min_{U(i) = U(1)^{-1}}
\lambda(\Ub_{2:i-1}) +
 \lambda(\Ub_{i+1:k})
\right).
\end{equation}
The induction hypothesis implies that
\begin{equation}
  \label{eq:p2_2}
%  \lambda(\Ub_{2:k}) &= \lambda((U(2),\ldots, U(k))) = \lambda((U(k),U(2),\ldots, U(k-1))) =\lambda((U(k),\Ub_{2:k-1})),\nonumber\\
    \lambda(\Ub_{2:k}) = \lambda((U(k),\Ub_{2:k-1})), \quad
%\lambda((U(i+1),\ldots, U(k))) &= \lambda((U(k),U(i+1),\ldots,
%  U(k-1))).\\
  \lambda(\Ub_{i+1:k}) = \lambda((U(k),\Ub_{i+1:k-1})).
\end{equation}
Substituting %the preceding
into (\ref{eq:p2_1}) and applying
(\ref{eq:formula_for_l}) again to terms in which $U(k)$ appears as
the first letter, we get that
\begin{multline}
  \label{eq:p2_3}
  \lambda({\bf U}) =
\min\Big( 2 + \lambda(\Ub_{2:k-1}),
1 + \min_{ U(j) = U(k)^{-1}} \lambda(\Ub_{2:j-1}) +
\lambda(\Ub_{i+1:k-1}),\\ 1 + \min_{ U(i) = U(1)^{-1}} \lambda(\Ub_{2:i-1}) +
\lambda(\Ub_{j+1:k-1}),\\
%1 + \min_{ U(i) =
%U(1)^{-1}} \lambda(\Ub_{2:i-1})  + \lambda(\Ub_{i+1:k-1}),\\
 \min_{ U(i) = U(1)^{-1}}\
\min_{ U(j) = U(k)^{-1}\atop j >i + 1} \lambda(\Ub_{2:i-1}) +
\lambda(\Ub_{i+1:j-1})  + \lambda(\Ub_{j+1:k-1})\Big).
\end{multline}

Next we compute $\lambda(\Ubp)$, applying the definition
(\ref{eq:formula_for_l}) twice, as follows:  The first application
gives
\begin{equation}
  \label{eq:p2_1b}
  \lambda(\Ubp) =
\min\left( 1 + \lambda(\Ub_{1:k-1}),\min_{U(i) = U(k)^{-1}}
\lambda(\Ub_{1:i-1})  +
 \lambda(\Ub_{i+1:k-1})
\right).
\end{equation}
Applying (\ref{eq:formula_for_l})  to terms in which the first
letter of the argument of $\lambda$ is $U(1)$, we get
\begin{multline}
  \label{eq:p2_3b}
  \lambda({\bf \Ubp}) =
\min\Big(
2 + \lambda(\Ub_{2:k-1}),
1 + \min_{ U(j) = U(1)^{-1}} \lambda(\Ub_{2:j-1})  +
\lambda(\Ub_{j+1:k-1}), \\ 1 + \min_{ U(i) = U(k)^{-1}}
\lambda(\Ub_{2:i-1})  + \lambda(\Ub_{i+1:k-1}),\\
 \min_{ U(i) = U(k)^{-1}}\
\min_{U(j) = U(1)^{-1}\atop j < i - 1} \lambda(\Ub_{2:j-1})
 + \lambda(\Ub_{j+1:i-1}) +
\lambda(\Ub_{i+1:k-1})\Big).
\end{multline}
Comparison of (\ref{eq:p2_3}) and (\ref{eq:p2_3b}) shows that
$\lambda({\bf U}) = \lambda({\bf U'})$.
\end{proof}

\begin{lem}[Zero length words]\label{lem:zerolen}
\[ \lambda({\bf U}) = 0 \ \text{if and only if} \ U = e.\]
\end{lem}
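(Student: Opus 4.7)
The plan is to prove both implications by induction on the word length $L(\Ub) = k$, invoking the recursive definition (\ref{eq:formula_for_l}) in one direction and the pairing bound (\ref{eq:pairing_and_lambda}) in the other. A preliminary observation, obtained by an easy induction from (\ref{eq:lambda(e)}) and (\ref{eq:formula_for_l}), is that $\lambda(\Ub) \ge 0$ for every $\Ub \in \Lcal_N$; I will use this non-negativity throughout.

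For the implication $\lambda(\Ub) = 0 \Longrightarrow U = e$, I would induct on $k$. The case $k = 0$ is immediate. For $k > 0$, the first alternative inside the minimum in (\ref{eq:formula_for_l}) is $1 + \lambda(\Ub_{2:k}) \ge 1$, so $\lambda(\Ub) = 0$ forces the minimum to be attained by the second alternative: there exists $j$ with $U(j) = U(1)^{-1}$ and $\lambda(\Ub_{2:j-1}) + \lambda(\Ub_{j+1:k}) = 0$. Non-negativity then forces both summands to vanish, and each of the two subwords has length strictly less than $k$, so the induction hypothesis yields $[\Ub_{2:j-1}] = e$ and $[\Ub_{j+1:k}] = e$ in $F(X_1,\ldots,X_N)$. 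Hence $U = U(1) \cdot e \cdot U(1)^{-1} \cdot e = e$.

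For the converse $U = e \Longrightarrow \lambda(\Ub) = 0$, I would again induct on $k$, this time exploiting the pairing reformulation. The case $k = 0$ is trivial. For $k > 0$, the classical uniqueness of reduced forms in a free group implies that $\Ub$ itself cannot be reduced (its equivalence class contains $\bf e$ as the unique reduced representative), so there is an index $i$ with $U(i+1) = U(i)^{-1}$. Let $\Ub'$ be the word of length $k-2$ obtained by deleting $U(i)$ and $U(i+1)$; then $[\Ub'] = U = e$, and the induction hypothesis gives $\lambda(\Ub') = 0$, so $\Ub'$ admits a valid pairing $I'$ with $|I'| = (k-2)/2$. Shifting each index $\ge i$ appearing in $I'$ upward by $2$ (to restore the original positions of $\Ub$) and adjoining the new pair $\{i, i+1\}$ produces a valid pairing $I$ of $\Ub$ with $|I| = k/2$; the pairing bound (\ref{eq:pairing_and_lambda}) then gives $\lambda(\Ub) \le k - 2|I| = 0$, which combined with non-negativity yields $\lambda(\Ub) = 0$.

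The main obstacle lies in the converse direction: the recursive definition of $\lambda$ peels letters only from the left end of a word, so it is not transparent that cancelling an adjacent inverse pair at an arbitrary interior position preserves the value of $\lambda$. The pairing reformulation (\ref{eq:pairing_and_lambda}) sidesteps this difficulty elegantly, since to conclude $\lambda(\Ub) = 0$ it suffices to exhibit a complete non-crossing pairing of the letters of $\Ub$ into inverse pairs, and such a pairing can be produced inductively from one for $\Ub'$ without ever having to compare the values $\lambda(\Ub)$ and $\lambda(\Ub')$ directly.
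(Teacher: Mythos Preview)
Your proof is correct and follows the paper's approach: both directions proceed by induction on $L(\Ub)$, with the converse reduced to exhibiting a complete valid pairing and invoking (\ref{eq:pairing_and_lambda}). Your inductive construction of that pairing (delete an adjacent inverse pair, re-index, and adjoin $\{i,i+1\}$) is in fact more careful than the paper's, which simply asserts that any $\Ub$ with $[\Ub]=e$ has the nested form $(X_{r_1},\ldots,X_{r_m},X_{r_m}^{-1},\ldots,X_{r_1}^{-1})$ --- a claim that is not literally true for arbitrary representatives of $e$ --- and writes down the obvious palindromic pairing for that special form.
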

\begin{proof}%[Proof of Proposition \ref{prop:zerolen}]
  First, we suppose that $\lambda({\bf U}) = 0$.
  We proceed by induction on $L({\bf U})$. The assertion
  is true for $L({\bf U})= 0$, by definition.  Suppose it is true for all ${\bf U}$
  of length less than $k$, and let $\bf U$ be a word of length $k$
  with $\lambda({\bf U}) = 0$.  Then there is some $j$ with $1 < j \le
  k$ such that $U(j) = U(1)^{-1}$ and
\begin{equation}\label{eq:pzero_1}
  0 = \lambda({\bf U}) = \lambda(\Ub_{2:j-1}) + \lambda(\Ub_{j+1:k}).
\end{equation}
Since $\lambda$ is nonnegative, it follows that
$\lambda(\Ub_{2:j-1})= \lambda(\Ub_{j+1:k}) = 0$. By the induction
hypothesis, it follows that $[\Ub_{2:j-1}] =  [\Ub_{j+1:k}]= e$.
Therefore,
\begin{equation}
  \label{eq:pzero_2}
  U= U(1) \, [\Ub_{2:j-1}] \, U(j)\, [\Ub_{j+1:k}] = U(1)
  U(j)= e.
\end{equation}

Next,  suppose that $ U = e$. Then $L({\bf U})$ is even and
\begin{equation}
  \label{eq:pzero_3}
  {\bf U} = (X_{r_1},\ldots, X_{r_m},X_{r_m}^{-1}, \ldots, X_{r_1}^{-1}).
\end{equation}
It follows that
%\begin{equation}
%  \label{eq:valid_pairing}
$  I = \{\{1,2m\}, \{2,2m-1\},\ldots, \{m,m+1\}\}$
%\end{equation} is
is a valid pairing for $\bf U$ and that $L({\bf U}) - 2|I| = 2m - 2m =
0$.  From (\ref{eq:pairing_and_lambda}) and the fact that $\lambda$
is nonnegative, it follows that
%\begin{equation}
%  \label{eq:pzero_4}
 $ \lambda({\bf U}) = 0$.
%\end{equation}
\end{proof}

\begin{lem}\label{lem:invconj}
If $\hb \in {\cal L}_N$ and $X \in {\cal A}_N$, then
\[ \lambda(({\hb},X,{\hbi})) = 1.\]
%where, if ${\bf V} = (X_{r_1}^{\epsilon^{r_1}}, \ldots,
%X_{r_l}^{\epsilon^{r_l}})$, then
%${\bf \bar V} = (X_{r_l}^{-\epsilon^{r_l}}, \ldots,
%X_{r_1}^{-\epsilon^{r_1}})$.
\end{lem}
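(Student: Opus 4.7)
The plan is to establish the two inequalities $\lambda((\hb,X,\hbi)) \le 1$ and $\lambda((\hb,X,\hbi)) \ge 1$ separately. Both follow quickly from machinery already set up.

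For the upper bound, I would exhibit an explicit valid pairing of the word $(\hb,X,\hbi)$ that leaves only the single letter $X$ unpaired. Write $\hb = (U(1),\ldots,U(k))$, so by (\ref{eq:inverse_words}) one has $\hbi = (U(k)^{-1},\ldots,U(1)^{-1})$, and the full word has length $2k+1$ with $X$ sitting at position $k+1$. The pairs $\{i,\, 2k+2-i\}$ for $i=1,\ldots,k$ satisfy $U(2k+2-i) = U(i)^{-1}$ and are nested intervals about the centre, so they meet (\ref{eq:pairing_properties_2}). By (\ref{eq:pairing_and_lambda}) this gives $\lambda((\hb,X,\hbi)) \le (2k+1) - 2k = 1$.

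For the lower bound, I would invoke Lemma~\ref{lem:zerolen}, according to which $\lambda((\hb,X,\hbi)) = 0$ would force $[(\hb,X,\hbi)] = e$ in $F(X_1,\ldots,X_N)$. But by (\ref{eq:mult_in_FN}) and (\ref{eq:inverses_1}), $[(\hb,X,\hbi)] = h X h^{-1}$ where $h = [\hb]$, and since $X \in \mathcal{A}_N$ represents a nontrivial element of the free group, its conjugate $h X h^{-1}$ is also nontrivial. Therefore $\lambda((\hb,X,\hbi)) \ge 1$.

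Combining the two bounds yields the claim. There is essentially no obstacle here once Lemmas~\ref{lem:zerolen} and the general pairing formalism (\ref{eq:pairing_properties_2})--(\ref{eq:pairing_and_lambda}) are in hand; the only mild care needed is in verifying that the nested pairing satisfies the interval-containment condition in (\ref{eq:pairing_properties_2}), which is immediate from the symmetric structure of $(\hb,X,\hbi)$.
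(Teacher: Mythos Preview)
Your proposal is correct and essentially identical to the paper's own proof: the paper exhibits the same nested pairing $I = \{\{1,2k+1\},\{2,2k\},\ldots,\{k,k+2\}\}$ to get $\lambda \le 1$, and then invokes Lemma~\ref{lem:zerolen} together with $[(\hb,X,\hbi)] \ne e$ for the lower bound. Your write-up is slightly more explicit about why the conjugate $hXh^{-1}$ is nontrivial, but the argument is the same.
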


\begin{proof}%[Proof of Proposition \ref{prop:invconj}]
Let $L(\hb) = k$.  Then
%\begin{equation}
%  \label{eq:pinvconj_1}
%  \lambda\left({\bf V}, X, {\bf V}^{-1}\right) \le 1,
%\end{equation}
%since, if $L({\bf V}) = k$, then
%\begin{equation}
%  \label{eq:another_valid}
$  I = \{\{1,2k+1\}, \{2,2k\},\ldots, \{k,k+2\}\}$
%\end{equation}
is a valid pairing for $(\hb, X, \hbi)$, so that, from
(\ref{eq:pairing_and_lambda}),
\begin{equation}
  \label{eq:pinvconj_2}
   \lambda\left(\hb, X, \hbi\right) \le L\left(\hb, X,
     \hbi\right) - 2|I | = 2k+1 - 2k = 1.
\end{equation}
On the other hand, since $[(\hb, X, \hbi)] \ne e$, it follows from
Lemma~\ref{lem:zerolen} that $\lambda(\hb, X, \hbi) > 0$. Therefore,
we may conclude that
%\begin{equation}
%  \label{eq:pinvconj_3}
$   \lambda\left(\hb, X, \hbi\right) = 1$.
%\end{equation}

\end{proof}
We proceed to  the proof of Proposition~\ref{prop:lambda=Lambda}.
\begin{proof}[Proof of Proposition~\ref{prop:lambda=Lambda}]
First, we show that $\lambda(\Ub) = \lambda(\Ubp)$ for $[\Ub] =
[\Ubp]$.  In view of the defining relations (\ref{eq:free_relation})
for $F(X_1,\ldots,X_N)$, it suffices to show that
\begin{equation}
  \label{eq:pdef_0}
  \lambda(({\bf U}, X^{-1}, X, {\bf
  V})) =  \lambda
(({\bf U}, X, X^{-1}, {\bf V})) =  \lambda (({\bf U}, {\bf V})).
\end{equation}
We will just consider $\lambda(({\bf U}, X^{-1}, X, {\bf
  V}))$; the argument for $\lambda
(({\bf U}, X, X^{-1}, {\bf V}))$ is similar.

By cyclicity (Lemma~\ref{lem:cyclic}), it suffices to show that for $\bf W\in {\cal L}_N^k$
\begin{equation}
  \label{eq:pdef_1}
  \lambda
(( X^{-1}, X, {\bf W})) =
  \lambda
({\bf W}).
\end{equation}
From (\ref{eq:formula_for_l}),
\begin{equation}
  \label{eq:pdef_2}
  \lambda
(( X^{-1}, X, {\bf W})) = \min\Big( 1 + \lambda((X,{\bf W})),
\lambda({\bf W}), \min_{W(j) = X} \lambda((X, \Wb_{1:j-1})) +
\lambda(\Wb_{j+1:k})\Big).
\end{equation}
To establish (\ref{eq:pdef_1}), 
we show that the first and third members of the right-hand side
of (\ref{eq:pdef_2}) are not smaller than the second, namely $\lambda({\bf
  W})$.  We start with the first member, namely  $ 1 + \lambda((X,{\bf W}))$.
Applying (\ref{eq:formula_for_l}), 
%to th first member of the right-hand
%side, 
we get that
\begin{equation}
  \label{eq:pdef_3}
  1 + \lambda((X,{\bf W})) =
\min\Big( 2 + \lambda({\bf W}),
 \min_{W(j) = X^{-1}}1 +  \lambda(\Wb_{1:j-1})  + \lambda(\Wb_{j+1:k}) \Big).
\end{equation}
By subadditivity (Lemma~\ref{lem:subadd}),
\begin{equation}
  \label{eq:pdef_4}
  1 + \lambda(\Wb_{1:j-1})  + \lambda(\Wb_{j+1:k})
  \ge \lambda({\bf W}).
\end{equation}
 Therefore, from (\ref{eq:pdef_3}) and (\ref{eq:pdef_4}),
\begin{equation}
  \label{eq:pdef_5}
   1 + \lambda((X,{\bf W})) \ge \lambda({\bf W}).
\end{equation}
Referring to the third member of the right-hand side of
(\ref{eq:pdef_2}), we have, for $W(j) = X$, that $\lambda((X,
\Wb_{1:j-1})) = \lambda(\Wb_{1:j})$ (cyclicity again), so that, by
subadditivity,
\begin{equation}  \label{eq:pdef_6}
 \lambda((X, \Wb_{1:j-1})) +
\lambda(\Wb_{j+1:k})=
\lambda(\Wb_{1:j}) + \lambda(\Wb_{j+1:k}) \ge \lambda({\bf W}),
\end{equation}
as required.

Next, we show that $\lambda(\Ub) = \Lambda(U)$.   We proceed by
induction. For words of length zero, this follows from
Lemma~\ref{lem:zerolen}. Suppose the statement is true for words of
length less than $k$, and let $\bf U$ have length $k$. Let $n =
\Lambda(U)$.  Then $U$ has a spelling of length $n$, ie
\begin{equation}
  \label{eq:plam_1}
  U = h_1 X_{r_1} h_1^{-1} \cdots   h_n X_{r_n} h_n^{-1}
\end{equation}
for some $h_i \in F(X_1,\ldots,X_N)$ and  $X_{r_i} \in {\cal
  A}_N$.  Let ${\hbsi}$ be words corresponding to $h_i$.  Then
  %$U = [({\bf h_1}, X_{r_1}, {\bf h_1^{-1}},
 %\ldots,
%{\bf  h_n},   X_{r_n}, {\bf h_n^{-1}})]$ implies that
  \begin{equation}
    \label{eq:plam_2}
    \lambda({\bf U}) = \lambda(({\bf h_1}, X_{r_1}, {\bf h_1^{-1}},
 \ldots,
{\bf  h_n},   X_{r_n}, {\bf h_n^{-1}})).
  \end{equation}
  From subadditivity (Lemma~\ref{lem:subadd}) and
Lemma~\ref{lem:invconj}, it follows that
\begin{equation}
  \label{eq:plam_3}
   \lambda({\bf U}) \le \lambda(({\bf h_1}, X_{r_1}, {\bf h_1^{-1}})) +
   \cdots + \lambda(({\bf  h_n},   X_{r_n}, {\bf h_n^{-1}})) = n = \Lambda(U).
\end{equation}

It remains to show that
%\begin{equation}
%  \label{eq:plam_4}
   $\lambda({\bf U}) \ge \Lambda(U)$.
%\end{equation}
From (\ref{eq:formula_for_l}), we have either that
\begin{equation}
  \label{eq:plam_5}
   \lambda({\bf U}) = 1 + \lambda(\Ub_{2:k})
\end{equation}
or that
\begin{equation}
  \label{eq:plam_6}
   \lambda({\bf U}) = \lambda(\Ub_{2:j-1}) + \lambda(\Ub_{j+1:k})
\end{equation}
for some $j$ with $U(j) = U(1)^{-1}$.  In case (\ref{eq:plam_5})
holds, use the induction hypothesis to conclude that
\begin{equation}
  \label{eq:plam_7}
   \lambda({\bf U}) = 1 + \Lambda([{\bf U}_{2:k}]) \ge \Lambda\left(U(1)[{\bf U}_{2:k}]\right)
   = \Lambda(U),
\end{equation}
where we have used the fact (easily verified) that $\Lambda$ is
subadditive, ie $\Lambda(UV) \le \Lambda(U) + \Lambda(V)$.  On the
other hand, if (\ref{eq:plam_6}) holds, then the induction
hypothesis implies that
\begin{multline}
  \label{eq:plam_8}
   \lambda({\bf U}) =
 \Lambda([\Ub_{2:j-1}]) + \Lambda([\Ub_{j+1:k}])
= \Lambda\left(U(1)\, [\Ub_{2:j-1}]  U(1)^{-1})\right) +
 \Lambda([\Ub_{j+1:k}])\\
= \Lambda([\Ub_{1:j}]) + \Lambda([\Ub_{j+1:k}]) \ge \Lambda(U),
\end{multline}
where in the second equation we have used the invariance of
$\Lambda$ under conjugation (easily verified) and in the third the
subadditivity of $\Lambda$.
\end{proof}

\subsection{Set products of conjugacy classes of words}
\label{sec: spell length and class product}

Let
\begin{equation}
  \label{eq:What}
  \What = \{U_1, \ldots, U_q \, | \, U_j \in F(X_1,\ldots,X_N)\}
\end{equation}
denote a set of elements of the free group, and let
\begin{equation}
   \label{eq:CP_group}
   \Vcal(\What) = \langle  U_1\rangle \cdots \langle  U_q\rangle
\end{equation}
denote the set product of their conjugacy classes. We wish to
determine the minimum of the spelling length over
$\Vcal(\What)$.
In view of Proposition~\ref{prop:lambda=Lambda}, we can work with
words rather than elements of $F(X_1,\ldots,X_N)$.  Choose words
${\bf U_j}$ so that $[{\bf U_j}] = U_j$, and let
%\begin{equation}
%  \label{eq:Whatbf}
%  {\Whatb}  = \{{\bf W_1}, \ldots, {\bf W_q}\}.
%\end{equation}
%Let
\begin{equation}
  \label{eq:CP}
 \Vcalb({\Whatb})  = \{
({\bf h_1}, {\bf U_1}, {\bf h_1^{-1}}, \ldots, {\bf h_q}, {\bf U_q},
{\bf h_q^{-1}}) \, | \, {\bf h_j} \in {\cal L}_N\}.
\end{equation}
Thus,  every $U \in \Vcal({\What})$ has a representative $\Ub \in
\Vcalb({\Whatb})$ with $[\Ub] = U$, and
\begin{equation}
  \label{eq:lambda_form}
  \min_{U \in \Vcal(\What)} \Lambda(U)  =  \min_{{\bf U} \in  \Vcalb({\Whatb})}
  \lambda({\bf U}).
\end{equation}

Let
\begin{equation}
  \label{eq:U=}
  {\bf U}=
({\bf h_1}, {\bf U_1}, {\bf h_1^{-1}}, \ldots, {\bf h_q}, {\bf U_q},
{\bf h_q^{-1}}) \in \Vcalb({\Whatb}),
\end{equation}
 and suppose that $L({\bf U}) = k$.
Let $\Ical(\Ub) = \{1,\ldots, k\}$ denote the indices of the letters
in $\Ub$.
%We partition $\Ical(\Ub)$  as
%follows:
%We
%partition the indices $1$ through $k$ into two sets, namely
Let ${\cal
  W}({\bf U})$ denote the set of indices of the letters in the
$\bf U_j$'s, and ${\cal C}({\bf U})$ the set of indices of the
letters
  in the
$\bf h_j$'s and $\bf h_j^{-1}$'s, so that
\begin{equation}
  \label{eq:I_decomp}
  \Wcal(\Ub) \cup \Ccal(\Ub)
  = \Ical(\Ub).
\end{equation}
%As each index $c$ in $\Ccal(\Ub)$ belongs to an $\bf f_j$ or an $\bf
%  f_j^{-1}$, it may be naturally paired with an index in $\bf
%  f_j^{-1}$ or $\bf f_j$
Indices in ${\cal C}({\bf U})$
%$\cal C$
naturally fall into pairs associated to conjugate letters in $\bf
h_j$ and $\bf h_j^{-1}$. For example, if ${\bf h_j} = (A,B^{-1},C)$,
then ${\bf
  h_j^{-1}} = (C^{-1}, B, A^{-1})$, and we say that the
indices of $A$ and $A^{-1}$ are conjugate, as are the indices of $B$
and $B^{-1}$  and of $C$ and $C^{-1}$.  In general, for $c \in {\cal
C}({\bf U})$,
%{\cal C}$,
let $\bar c \in {\cal C}({\bf U})$ denote the index to which it is
 conjugate. One can compute an explicit formula for $\bar c$ but we won't be needing explicit formulae for this discussion.
%(We can write down a formula for $\bar c$, although we won't need
%it. If $c$ is an index corresponding to the $m$th letter of ${\bf
%h_j}$, and $L({\bf h_j}) = r$, $L({\bf U_j}) = s$, then $\bar c = c
%+ 2r + s + 1 - 2m$.  If $c$ corresponds to the $m$th letter of ${\bf
%h_j^{-1}}$, then $\bar c = c - s + 1 - 2m$.)

Let $I$ be a valid pairing for $\Ub$.  We say that $i, j \in
\Wcal({\Ub})$ are {\it linked in $I$} if there exists a sequence of
indices
%$c_1, \bar c_1, \ldots, c_d, \bar c_d$ in $\Ccal(\Ub)$
$c_1, \ldots, c_m$ in $\Ccal(\Ub)$ such that
\begin{equation}
  \label{eq:linking_sequence}
  \{i,c_1\}, \{\bar c_1,
c_2\},\ldots, \{\bar c_{m-1},c_m\}, \{c_m,j\} \in I.
\end{equation}
We call $c_1, \ldots, c_m$ a {\it linking sequence}.  If $i$ and $j$
are linked, the linking sequence between them is unique.
It is clear that, if $i$ and $j$ are linked, then $U(j) =
U(i)^{-1}$. %(the converse isn't necessarily true, though).
Moreover, the $U(c_r)$'s are all equal to $U(i)^{-1}$, while the
$U(\bar c_r)$'s are all equal to $U(i)$. We say that a pairing $I$
is {\it reduced}  if every $i \in {\cal W}({\bf U})$
%{\cal P}$
is either unpaired or else is linked to some $j \in {\cal W}({\bf
U})$.

We next describe a procedure for removing letters from a word.  Let
$\Ub \in \Lcal_N$ be a word of length $k$, and let $R$ be a subset
of $\Ical(\Ub)$.  Define the {\it re-indexing map} $\psi_R$ to be
the bijection between $\{1,\ldots, k\} - R$ and $\{1,\ldots, k -
|R|\}$ given by
\begin{equation}
  \label{eq:psi_R}
  \psi_R(i) = i - \left| \{ j \in R \, | \, j < i \} \right|
\end{equation}
(ie, $\psi_R(i)$ is $i$ minus the number of indices in $R$ less than
$i$).  Define $\Vb \in \Lcal_N$ to be the word of length $k - |R|$
given by
\begin{equation}
  \label{eq:V_from_U}
  \Vb = (U(\psi^{-1}_R(1)),\ldots,U(\psi^{-1}_R(k - |R|)))
\end{equation}
(so $\Vb$ is just $\Ub$ without the letters indexed by $R$).  We say
that $\Vb$ is the word obtained by {\it
  removing $R$ from $\Ub$}.  Note that we may regard $\psi_R$ as a
bijection between $\Ical(\Ub) - R$ and $\Ical(\Vb)$. If $\Ub$
belongs to $\Vcalb(\Whatb)$, then, in general, $\Vb$ does not.
However, if $R\subset \Ccal(\Ub)$ and $c \in R$ implies that $\bar c
\in R$, then removing $R$ amounts to replacing one or more the $\bf
h_j$'s by shorter words, and $\Vb$ belongs to $\Vcalb(\Whatb)$ as
well.

Given $I$, a valid pairing for $\Ub$, let us define a pairing $J$
by
\begin{equation}
  \label{eq:J_construct}
  J = \{\{\psi_R(i),\psi_R(j\}\} \,| \, i,j \in \Ical(\Ub) - R\ \text{and}\
  \{i,j\} \in I\}.
\end{equation}
That is, $J$ contains all the re-indexed pairs of indices in $I$
which haven't been removed from $\Ub$.  It is straightforward to
check that $J$ is valid for $\Vb$ (cf
(\ref{eq:pairing_properties_2})).  However, $I$ optimal does not
imply that $J$ is optimal, nor does $I$ reduced imply that $J$ is
reduced. We say that $J$ is the pairing obtained from {\it removing
$R$ from $I$}.

The following proposition shows that the minimum spelling length on
$\Vcal(\What)$ can be realised by a word with an optimal reduced
pairing.
\begin{prop}\label{prop:reduced_exist} Let
%\begin{equation}\label{eq: m = Lambdabar}
$   m = \min_{U \in \Vcal(\What)} \Lambda(U)$.
%\end{equation}
Then there exists  ${\bf U} \in \Vcalb({\Whatb})$ with  optimal reduced
pairing $I$ such that
$\lambda({\bf U}) = m$.
\end{prop}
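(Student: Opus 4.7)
The plan is to argue by minimization of the word length $L(\Ub)$. I would take the set of pairs $(\Ub, I)$ with $\Ub \in \Vcalb(\Whatb)$, $\lambda(\Ub) = m$, and $I$ an optimal pairing for $\Ub$; among these, pick one for which $L(\Ub)$ is minimal (equivalently, $\sum_j L({\bf h}_j)$ is minimal, since the lengths of the $\mathbf{U_j}$'s are fixed). Such a minimum exists because $L$ takes values in the non-negative integers and the set is non-empty by the definition of $m$ together with Proposition~\ref{prop:lambda=Lambda}.

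Suppose for contradiction that the chosen $I$ is not reduced, and fix $i \in \Wcal(\Ub)$ paired in $I$ but not linked to any other index in $\Wcal(\Ub)$. I would then construct a chain of indices $c_1, c_2, \ldots$ inductively by setting $\{i, c_1\} \in I$ and $\{\bar c_{r-1}, c_r\} \in I$ whenever $\bar c_{r-1}$ is paired in $I$. Each $c_r$ must lie in $\Ccal(\Ub)$, because if some $c_r$ belonged to $\Wcal(\Ub)$, then $c_1, \ldots, c_{r-1}$ would be a linking sequence from $i$ to $c_r$, contrary to the hypothesis on $i$. The chain cannot revisit an index since each index lies in at most one pair of $I$, so it must terminate at some $c_m$ with $\bar c_m$ unpaired.

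Now set $R = \{c_1, \bar c_1, \ldots, c_m, \bar c_m\} \subset \Ccal(\Ub)$. Because $c \in R \Leftrightarrow \bar c \in R$, removing $R$ from $\Ub$ amounts to deleting matching pairs of letters from each affected ${\bf h}_j$ and its inverse ${\bf h}_j^{-1}$, so the resulting word $\Vb$ again has the form $({\bf h}'_1, \mathbf{U_1}, ({\bf h}'_1)^{-1}, \ldots, {\bf h}'_q, \mathbf{U_q}, ({\bf h}'_q)^{-1})$ and hence lies in $\Vcalb(\Whatb)$. Define $J$ to be the pairing obtained from $I$ by removing $R$, as in (\ref{eq:J_construct}). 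The pairs of $I$ destroyed are precisely the $m$ chain pairs $\{i, c_1\}, \{\bar c_1, c_2\}, \ldots, \{\bar c_{m-1}, c_m\}$, so $|J| = |I| - m$, while $L(\Vb) = L(\Ub) - 2m$. Therefore
\[ L(\Vb) - 2|J| = L(\Ub) - 2|I| = \lambda(\Ub) = m. \]
By (\ref{eq:pairing_and_lambda}), $\lambda(\Vb) \le L(\Vb) - 2|J| = m$, and by the definition of $m$, $\lambda(\Vb) \ge m$. Hence $\lambda(\Vb) = m$ with $J$ optimal for $\Vb$, yet $L(\Vb) < L(\Ub)$, contradicting the minimality of $L(\Ub)$. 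We conclude that the original $I$ must have been reduced.

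The main technical points are verifying that the chain necessarily terminates at an unpaired conjugate (which relies on $I$ being a matching, so it is acyclic on indices, together with the fact that the chain cannot escape to $\Wcal(\Ub)$ by hypothesis), and verifying that the removal procedure preserves membership in $\Vcalb(\Whatb)$; the latter reduces to observing that $R$ is closed under conjugation, so that $\psi_R$ sends the ${\bf h}_j$-portions of $\Ub$ and the corresponding ${\bf h}_j^{-1}$-portions to mutually inverse subwords of $\Vb$.
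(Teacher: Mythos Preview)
Your argument is correct and uses the same core construction as the paper: follow the chain $i, c_1, \bar c_1, c_2, \ldots$ until it terminates at an unpaired $\bar c_d$, remove the conjugation-closed set $R=\{c_1,\bar c_1,\ldots,c_d,\bar c_d\}\subset\Ccal(\Ub)$, and observe that the resulting word still lies in $\Vcalb(\Whatb)$ with spelling length $m$. The only difference is packaging: the paper iterates, reducing by one at each step the count $t$ of paired-but-unlinked indices in $\Wcal(\Ub)$, whereas you select a minimal-length representative and derive a contradiction. Both are standard ways to organise the same descent.

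One small point: you use the symbol $m$ both for the minimum spelling length (as in the statement) and for the length of the chain. The computation $L(\Vb)-2|J| = L(\Ub)-2|I|$ is unaffected, but you should rename the chain length (e.g.\ $d$, as the paper does) to avoid the clash.
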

\begin{proof}
Choose ${\bf U'} \in \Vcalb({\Whatb})$ such that $\lambda({\bf
U'}) = m$.
Let $I'$ be an optimal pairing for $\bf U'$ (as discussed in
Section~\ref{sec: free group}, such an optimal pairing exists).  Let $t$ denote the number of indices in $\Wcal({\bf
U'})$ which are paired in $I'$ but which are not linked to an
index in $\Wcal({\bf U'})$. If $t=0$, then $I'$ is reduced, and we
are done.  In what follows, we obtain a word $\bf U \in \Vcalb({\Whatb})$ with $\lambda({\bf U}) = m$ that has $(t-1)$ paired but unlinked indices in $\Wcal({\bf U})$.  Let $i \in
\Wcal({\bf U'})$ be an index which is paired in $I'$ but which is
not linked to an index in $\Wcal({\bf U'})$. Then $i$ is paired
with some (unique) $c_1 \in \Ccal({\bf U'})$. Either $\bar c_1$ is
unpaired, or else $\bar c_1$ is paired to some (unique) $c_2 \in
\Ccal({\bf U'})$. Continuing in this way, we produce a sequence of
indices $c_1, \ldots, c_d \in \Ccal$, where $\{i,c_1\}$, $\{\bar
c_1,c_2\}, \ldots, \{\bar c_{d-1},c_d\}$ are paired in $I'$ and $\bar c_d$
is unpaired.

We remove the set of indices $R = \{c_1, \bar c_1, \ldots, c_d, \bar
c_d\}$ from $\bf U'$ and $I'$ to obtain $\bf U$ with valid
pairing $I$.  Since $R \subset \Ccal({\bf U'})$ and the elements
of $R$ occur in conjugate pairs, we have that $\bf U \in
\Vcalb(\Whatb)$. It is clear that $L({\bf U}) = L({\bf U'}) - 2d$
and $|I| = |I'| - d$.
%$ %(every pair in $J$ in the
%index chain is to be removed).
% $L({\bf U'}) = L({\bf U''}) -
%2c$.
From (\ref{eq:pairing_and_lambda}), it follows that
\begin{equation}\label{eq:U and U'}
  \lambda({\bf U}) \le L({\bf U}) - 2| I| =  L({\bf U'}) -
2|I'| =
  \lambda({\bf U'}) = m,
\end{equation}
where the second-to-last equality holds because $I'$ is, by assumption,
optimal.  On the other hand, Proposition~\ref{prop:lambda=Lambda}
%(\ref {eq: m = Lambdabar})
implies that $ \lambda({\bf U}) \ge m$.  Therefore, $ \lambda({\bf U})  = m$.
By construction, $\bf U$ has $t-1$ indices
in $\Wcal({\bf U})$ which are paired in $I$ but not linked to
indices in $\Wcal({\bf U})$ (in particular, while $i \in \Wcal({\bf
  U'})$ is such an index, $\psi_R(i) \in \Wcal({\bf U})$ is, by
construction, unpaired, and therefore is not). One can repeat the construction $t$ times to get a word in $\Whatb$, with optimal reduced pairing, that achieves the minimum spelling length.
\end{proof}

Let $\bf U \in \Vcalb(\Whatb)$ and let $I$ be an optimal reduced
pairing for $\bf U$.  Let $[ I ]_W$ denote the number of indices in
$\Wcal(\Ub)$ which are paired in $I$, and $[ I ]_C$ the number of
indices in $\Ccal(\Ub)$ which are paired in $I$, so that
\begin{equation}
  \label{eq:<I>}
  2| I | = [I ]_W + [I]_C.
\end{equation}
We have the following formula for the spelling length:
\begin{prop}\label{prop:pairing_in_reduced}
Let $m = \min_{U \in \Vcal(\What)} \Lambda(U)$.  Choose ${\bf U} \in
\Vcalb({\Whatb})$ with optimal reduced pairing $I$ such that
$\lambda(\Ub) = m$ (such a $\Ub$ and $I$ exist by
Proposition~\ref{prop:reduced_exist}). Then every index in
$\Ccal(\Ub)$ is paired in $I$, and
\begin{equation}
  \label{eq:formula}
  \lambda(\Ub) = \sum_{r = 1}^q L({\bf U_r}) - [ I]_W.
\end{equation}
%Then every index in
%$\Ccal(\Ub)$ is paired in $I$.
\end{prop}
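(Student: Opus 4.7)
The plan is to first recast the identity (\ref{eq:formula}) as a counting statement and then reduce it to the single claim that every index in $\Ccal(\Ub)$ is paired by $I$. Because each $\hb_r$ and $\hb_r^{-1}$ contribute the same number of letters, $L(\Ub) = \sum_{r=1}^q L(\Ub_r) + |\Ccal(\Ub)|$. Combined with the optimality relation $\lambda(\Ub) = L(\Ub) - 2|I|$ and the decomposition (\ref{eq:<I>}) of $2|I|$, this yields
\begin{equation*}
  \lambda(\Ub) = \sum_{r=1}^q L(\Ub_r) + |\Ccal(\Ub)| - [I]_W - [I]_C,
\end{equation*}
so (\ref{eq:formula}) is equivalent to the identity $[I]_C = |\Ccal(\Ub)|$, i.e., to the assertion that no index in $\Ccal(\Ub)$ is unpaired.

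Suppose for contradiction that some $c_0 \in \Ccal(\Ub)$ is unpaired in $I$. I shall construct a chain $c_0, c_1, \ldots, c_k \in \Ccal(\Ub)$ inductively. At step $i$, examine $\bar c_i$: if it is unpaired, terminate with $k = i$; otherwise let $c_{i+1}$ denote its $I$-partner, and argue that $c_{i+1}$ must also lie in $\Ccal(\Ub)$. Indeed, were $c_{i+1} \in \Wcal(\Ub)$, then since $I$ is reduced it would have to be linked to some other index of $\Wcal(\Ub)$ by a sequence beginning with the pair $\{c_{i+1}, \bar c_i\}$; the uniqueness of partners in $I$ then forces the sequence to continue as $\{c_i, \bar c_{i-1}\}, \{c_{i-1}, \bar c_{i-2}\}, \ldots, \{c_1, \bar c_0\}$, after which the next required step would pair the unpaired index $c_0$ — a contradiction. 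Hence $c_{i+1} \in \Ccal(\Ub)$. The same uniqueness of partners, together with the fact that a pair has two distinct endpoints and that $c_0$ alone in the chain is unpaired, rules out every possible coincidence $c_i = c_j$ or $c_i = \bar c_j$ for $i<j$, so the indices $c_0, \bar c_0, c_1, \bar c_1, \ldots, c_k, \bar c_k$ are pairwise distinct and the chain terminates after finitely many steps.

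Now set $R = \{c_0, \bar c_0, c_1, \bar c_1, \ldots, c_k, \bar c_k\}$, a union of $k+1$ conjugate pairs in $\Ccal(\Ub)$, and let $\Vb$ and $J$ be obtained by removing $R$ from $\Ub$ and $I$ respectively. Because conjugate letters in $\hb_r$ and $\hb_r^{-1}$ are removed in tandem, each $\hb_r$ is merely shortened, so $\Vb$ still belongs to $\Vcalb(\Whatb)$. The only pairs of $I$ destroyed are $\{\bar c_{i-1}, c_i\}$ for $1 \le i \le k$, so $|J| = |I| - k$ and $L(\Vb) = L(\Ub) - 2(k+1)$, and (\ref{eq:pairing_and_lambda}) gives
\begin{equation*}
  \lambda(\Vb) \le L(\Vb) - 2|J| = L(\Ub) - 2|I| - 2 = \lambda(\Ub) - 2 = m - 2.
\end{equation*}
Since $[\Vb] \in \Vcal(\What)$, Proposition~\ref{prop:lambda=Lambda} and the definition of $m$ force $\lambda(\Vb) \ge m$, a contradiction. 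Therefore every $\Ccal(\Ub)$ index is paired in $I$, and (\ref{eq:formula}) follows.

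The main obstacle is the bookkeeping in the second paragraph: certifying both that reducedness of $I$ really does force $c_{i+1} \in \Ccal(\Ub)$ at every step, and that the indices along the chain are genuinely distinct. Both points hinge on propagating the unpaired status of $c_0$ through the single-valuedness of partners in $I$; once these structural claims are established, the counting identity and the removal argument complete the proof with no further computation.
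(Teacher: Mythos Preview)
Your proof is correct and follows essentially the same approach as the paper's: build a chain of conjugate $\Ccal$-indices starting from an unpaired one, remove the chain to obtain a shorter word in $\Vcalb(\Whatb)$ with strictly smaller spelling length, and contradict minimality; then deduce the counting formula from optimality and the decomposition of $L(\Ub)$. You supply more detail than the paper does on why reducedness forces each $c_{i+1}$ to lie in $\Ccal(\Ub)$ and on the distinctness of the chain indices, but the structure of the argument is the same.
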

\begin{proof}
Suppose $c_1 \in \Ccal(\Ub)$ is not paired.  Then either $\bar c_1$
is unpaired, or else it is paired to some $c_2 \in \Ccal(\Ub)$; note
that, since $I$ is reduced, $c_2$ cannot belong to $\Wcal(\Ub)$.
Continue in this way to generate a sequence $c_1, \ldots, c_d$, where
$c_1$ and $\bar c_d$ are unpaired while $\bar c_{r-1}$ is paired to
$c_r$ for  $1 < r \le d$. Remove the set of indices $R = \{c_1, \bar
c_1, \ldots, c_d, \bar c_d\}$ from $\Ub$ and $I$ to obtain $\Vb \in
\Vcalb(\Whatb)$ with valid  pairing $J$.  Then  $L(\Vb) = L({\Ub}) -
2d$ and $| J | = | I | - (d-1)$.
%$ %(every pair in $I$ in the
%index chain is to be removed).
% $L({\bf U'}) = L({\bf U''}) -
%2c$.
From (\ref{eq:pairing_and_lambda}), it follows that
\begin{equation}\label{eq:U and U''}
  \lambda(\Vb) \le L(\Vb) - 2| J | =
L({\bf U}) - 2| I | - 2 =
  \lambda({\bf U}) - 2 < m,
\end{equation}
in contradiction to the fact that $\lambda(\Vb) \ge m$ for all $\Vb
\in \Vcalb(\Whatb)$.
It follows that all indices in $\Ccal(\Ub)$ are paired.  Therefore
\begin{equation}
  \label{eq:pr_eq_eq}
  2|I | = [I ]_W +  [I ]_C =  [I ]_W+  |\Ccal({\Ub})| =
[I]_W +  L(\Ub) - \sum_{r = 1}^q L({\bf U_r}).
\end{equation}
Since $I$ is optimal,
\begin{equation}
  \label{eq:formula_2}
   \lambda(\Ub) = L(\Ub) - 2|I|  =
  \sum_{r = 1}^q L({\bf U_r}) - [I]_W.
\end{equation}
\end{proof}

\subsection{Proof of Propositions~\ref{prop: spelling length} and \ref{prop: spelling length 2}}\label{sec:ABC}

As the proofs of Propositions~\ref{prop: spelling length} and
\ref{prop: spelling length 2} are  similar, we give  details only
for Proposition~\ref{prop: spelling length}.  We then briefly explain the
differences that arise in the proof of Proposition~\ref{prop:
spelling length 2}.

\begin{proof}[Proof of Proposition~\ref{prop: spelling length}]
%Let $\Pcal_{n,p}$ denote the set product in $F(A,B,C)$ given by
%\begin{equation}\label{eq: P_np}
%    \Pcal_{n,p} = \langle A^i B^j C^k\rangle \langle A^{-1}B^{-1}C^{-1}
%\rangle^n \langle CBA\rangle^p.
%\end{equation}
Let $\bf \Pcal_{n,p}$ denote the set of words
\begin{align}
  \label{eq:CPb_ijk}
  {\bf \Pcal_{n,p}} =
\{
\big(&{\bf h_0}, A^i, B^j, C^k , {\bf h_0^{-1}},\nonumber\\
&{\bf h_1}, A^{-1}, B^{-1}, C^{-1}, {\bf h_1^{-1}}, \ldots,
{\bf h_{n}}, A^{-1}, B^{-1}, C^{-1}, {\bf h_{n}^{-1}},\nonumber\\
&{\bf h_{n+1}},C,B,A, {\bf h_{n+1}^{-1}}, \ldots {\bf h_{n+p}},
C,B,A , {\bf h_{n+p}^{-1}}\big),\ \ {\bf h_t} \in \Lcal_3\}.
\end{align}
Then $\Ub \in \bf \Pcal_{n,p}$ implies that $U \in \Pcal_{n,p}$.
From Proposition~\ref{prop:lambda=Lambda} it follows that
\begin{equation}
  \label{eq:lambda_form_2}
  \min_{U \in \Pcal_{n,p}} \Lambda(U) =  \min_{{\bf U} \in  \bf \Pcal_{n,p}}
  \lambda({\bf U}).
\end{equation}
For $\Ub \in \bf \Pcal_{n,p}$, let
\begin{equation}
  \label{eq:D(Ub)}
  D_{n,p}(\Ub) = \lambda(\Ub) - (i + j + k - (n+p)).
\end{equation}
Then Proposition~\ref{prop: spelling length} is equivalent to
showing that
\begin{equation}
  \label{eq:D>=0}
  D_{n,p}(\Ub) \ge 0.
\end{equation}
%
%First, we observe that
%the spelling
%length of a word is necessarily greater than or equal to the sum of the absolute
%values of the degrees of its constituent letters (the degree of a
%letter $X$ in a word $\Wb$ is the sum of the exponents of $X$ which appear in $\Wb$).
%% that obtains when
%%all the letters besides $X$ in $[\Wb]$ are set to be $e$).
%For $\Ub
%\in \Pcal_{n,p}$, this gives the lower bound $\lambda(\Ub) \ge i+j+k +3(p- n)$,
%or
%\begin{equation}\label{eq: degree lower bound}
%     D_{n,p}(\Ub) \ge 2p- 4n.
%\end{equation}
%Note that, for $p \ge 2n$, (\ref{eq:D>=0}) follows immediately.
%

We  refer to the three-tuples $(A^{-1}, B^{-1}, C^{-1})$ in
(\ref{eq:CPb_ijk}) as {\it
  negative triples}, and the three-tuples $(C,B,A)$ as {\it positive triples}.
Let us partition $\Wcal(\Ub)$ into three sets as follows: Let
$\Wcal_0(\Ub)$ denote the set of indices of the
letters of $A^i B^j C^k$, $\Wcal_-(\Ub)$ the set of indices of the
negative triples, and $\Wcal_+(\Ub)$
 the set of indices of the positive triples.  Then
\begin{equation}
  \label{eq:W_decomp}
   \Wcal(\Ub) =\Wcal_0(\Ub) \cup \Wcal_-(\Ub) \cup \Wcal_+(\Ub).
\end{equation}

We proceed by assuming there exists $\Ub \in \bf\Pcal_{n,p}$ with
$D_{n,p}(\Ub) < 0$ and then deriving a contradiction.
  Lemma
\ref{lem:untrue_1} below, whose proof we give at the end of this section,
shows that if there are unpaired indices in
$\Wcal_-(\Ub)$, then we can construct $\Vb \in {\bf \Pcal_{n-1,p}}$ with
$D_{n-1,p}(\Vb) < 0$; similarly, if there are unpaired indices in
$\Wcal_+(\Ub)$, we can construct  $\Vb \in {\bf \Pcal_{n,p-1}}$ with
$D_{n,p-1}(\Vb) < 0$.
\begin{lem}\label{lem:untrue_1}
  Suppose there exists $\Ub \in \bf \Pcal_{n,p}$ with
  $D_{n,p}(\Ub) < 0$.  Let $I$ be an optimal reduced pairing
  for $\Ub$ (which we may assume to exist by virtue of Proposition~\ref{prop:reduced_exist}).
  If $\Wcal_-(\Ub)$ contains indices which are unpaired in $I$, then
  there exists $\Vb \in \bf{\Pcal_{n-1,p}}$ such that
  $D_{n-1,p}(\Vb) < 0$. Similarly,
if $\Wcal_+(\Ub)$ contains indices which are unpaired in $I$, then
  there exists $\Vb \in \bf{\Pcal_{n,p-1}}$ such that
  $D_{n,p-1}(\Vb) < 0$.
\end{lem}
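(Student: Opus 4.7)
The plan is to perform a ``surgical excision'' of a single negative triple from $\Ub$. Starting from $\Ub \in \bf \Pcal_{n,p}$ with $D_{n,p}(\Ub) < 0$ and an optimal reduced pairing $I$, pick the index $r$ of a negative triple in $\Ub$ containing at least one letter that is unpaired in $I$, and let $u \in \{1,2,3\}$ denote the number of its unpaired letters. The goal is to produce $\Vb \in \bf \Pcal_{n-1,p}$ with $\lambda(\Vb) \le \lambda(\Ub) + 1$; the desired conclusion $D_{n-1,p}(\Vb) < 0$ then follows immediately from the hypothesis $D_{n,p}(\Ub) < 0$ together with the definition of $D_{n-1,p}$, since $i+j+k-((n-1)+p) = i+j+k-(n+p)+1$.

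The key idea is to remove only the three letters $A^{-1}, B^{-1}, C^{-1}$ of triple $r$, leaving the surrounding conjugators ${\bf h_r}$ and ${\bf h_r^{-1}}$ in place; call the resulting word $\Vb''$. The pairing $J''$ obtained from $I$ by restriction and re-indexing (cf.~(\ref{eq:J_construct})) is valid for $\Vb''$. The crucial combinatorial observation is that the three letters of a negative triple are pairwise distinct and no two are mutual inverses, so no pair of $I$ has both endpoints inside a single triple. Consequently, every pair of $I$ that meets triple $r$ has its other endpoint outside the triple and is broken by the excision; the number of such broken pairs is exactly $s := 3 - u$.

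A routine count then gives $|J''| = |I| - s$ and $L(\Vb'') = L(\Ub) - 3$, so optimality of $I$ yields
\[
\lambda(\Vb'') \le L(\Vb'') - 2|J''| = \lambda(\Ub) + 2s - 3 \le \lambda(\Ub) + 1,
\]
using $s \le 2$, which holds because $u \ge 1$. To finish, note that although $\Vb''$ is not literally in $\bf \Pcal_{n-1,p}$, its class $[\Vb'']$ does lie in $\Pcal_{n-1,p}$: removing the three letters of triple $r$ replaces the factor $h_r A^{-1} B^{-1} C^{-1} h_r^{-1}$ in the product representing $[\Ub]$ by $h_r h_r^{-1} = e$, producing the required form with $n-1$ negative triples and $p$ positive triples. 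Choosing any $\Vb \in \bf \Pcal_{n-1,p}$ with $[\Vb] = [\Vb'']$ and applying Proposition~\ref{prop:lambda=Lambda} gives $\lambda(\Vb) = \lambda(\Vb'') \le \lambda(\Ub) + 1$, as required. The case of unpaired indices in $\Wcal_+(\Ub)$ is handled by the same argument applied to a positive triple $(C, B, A)$, whose letters are likewise pairwise non-inverse, with $p$ decreasing to $p-1$ and $n$ fixed.

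The principal obstacle is identifying the correct surgery. A more natural attempt, namely excising the entire block $({\bf h_r}, A^{-1}, B^{-1}, C^{-1}, {\bf h_r^{-1}})$, would destroy too many pairings of $I$ whenever the conjugator ${\bf h_r}$ is long, and the resulting bound is of the form $\lambda(\Vb) \le \lambda(\Ub) + O(L({\bf h_r}))$, which is too weak. Restricting to the three triple letters sidesteps this entirely: the pairwise distinctness of $A^{-1}, B^{-1}, C^{-1}$ caps the number of broken pairs at $3$, independently of $L({\bf h_r})$, while the surrounding conjugators cancel for free at the group-theoretic level via Proposition~\ref{prop:lambda=Lambda}.
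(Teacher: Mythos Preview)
Your argument is correct and is in fact simpler than the paper's. Both proofs excise a negative triple $(A^{-1},B^{-1},C^{-1})$ with at least one unpaired letter, but they handle the surrounding conjugators ${\bf h_r},{\bf h_r^{-1}}$ and the resulting broken pairings very differently.

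The paper works entirely at the word level. It removes not only the three triple letters but also, for each paired letter $t+\alpha_u$ in the triple, the entire linking sequence $c_{u,1},\bar c_{u,1},\ldots,c_{u,m_u},\bar c_{u,m_u}\subset\Ccal(\Ub)$ back to its partner $j_u\in\Wcal(\Ub)$. Because conjugate pairs in $\Ccal$ are removed together, the resulting word $\Vb$ lies literally in $\bf \Pcal_{n-1,p}$. The paper then has to verify that every index of $\Ccal(\Vb)$ remains paired in the induced pairing $J$ (this is where reducedness of $I$ is used, via the fact that $j_u\in\Wcal$), and concludes via the formula $D_{n,p}(\Ub)=4(n+p)-[I]_W$ from Proposition~\ref{prop:pairing_in_reduced} together with $[J]_W=[I]_W-2v$.

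You instead remove only the three triple letters, observe that since $A^{-1},B^{-1},C^{-1}$ are pairwise non-inverse no pair of $I$ lives entirely inside the triple, and obtain $\lambda(\Vb'')\le\lambda(\Ub)+2s-3\le\lambda(\Ub)+1$ directly from $\lambda(\Ub)=L(\Ub)-2|I|$. The resulting $\Vb''$ is not in $\bf\Pcal_{n-1,p}$ as a word, but its class is in $\Pcal_{n-1,p}$ since ${\bf h_r}{\bf h_r^{-1}}$ cancels; Proposition~\ref{prop:lambda=Lambda} then transfers the bound to a genuine $\Vb\in\bf\Pcal_{n-1,p}$. What this buys you is that you never touch the linking sequences or $\Ccal$-structure at all, and in particular you use only that $I$ is optimal, not that it is reduced. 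The paper's route stays closer to the combinatorics of pairings and yields a $\Vb$ with an explicit valid pairing $J$ inside $\bf\Pcal_{n-1,p}$, which is conceptually self-contained but requires more bookkeeping.
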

We apply Lemma~\ref{lem:untrue_1} repeatedly, reducing $n$ or $p$ by
one each time, as many times as possible.  We obtain
thereby a word $\Vbp \in \bf \Pcal_{m,q}$, where $0 \le m \le n$ and  $0 \le q
\le p$, such that $D_{m,q}(\Vbp) < 0$.  We may assume that
\begin{equation}\label{eq: m>q}
m > 2q,
\end{equation}
since (\ref{eq: crude lower bound}) implies that $D_{m,q}(\Vbp) \ge
4q-2m$.
By Proposition~\ref{prop:reduced_exist}, there exists a
word $\Vb \in \bf \Pcal_{m,q}$ with optimal reduced pairing $J$ such that $\lambda(\Vb) = \min_{\Ubp \in \Pcalb_{m,q}}
\lambda(\Ubp)$.   Then $\lambda(\Vb) \le \lambda(\Vbp)$, so that  $D_{m,q}(\Vb) < 0$.
We may assume that every index in
$\Wcal_{\pm}(\Vb)$ is paired in $J$, as otherwise we could apply
Lemma~\ref{lem:untrue_1} again.  By Proposition~\ref{prop:pairing_in_reduced},
every index in $\Ccal(\Vb)$ is paired.
Therefore, the only unpaired indices in $\Ical(\Vb)$
belong to $\Wcal_0(\Vb)$.

Let us
remove the set of  unpaired indices from $\Vb$ and $J$ to obtain $\Wb$ and
$K$.  By construction,
every index of $\Wb$ is then paired in $K$.  But if a word has a
pairing in which every index is paired, it follows from
(\ref{eq:pairing_and_lambda}) that $\lambda(\Wb ) = 0$, so that, by
Lemma~\ref{lem:zerolen},
\begin{equation}
  \label{eq:V=e}
  W = [\Wb] = e.
\end{equation}
On the other hand, since $J$ is reduced, every paired index in
$\Wcal_-(\Vb)$ is linked to a distinct index in $\Wcal_+(\Vb) \cup
\Wcal_0(\Vb)$ (recall that linked indices necessarily belong to
inverse letters).  This implies that $\Wcal_0(\Vb)$ has
$3(m - q)$ paired indices, with $(m-q)$ indices corresponding to $A$, $B$ and $C$ each.  Since,
in passing to $\Wb$, no indices in
$\Ccal(\Vb)$ were removed, it follows that $W$ is of the form
\begin{equation}
  \label{eq:Vb}
W= f_0 A^{m-q} B^{m-q} C^{m -q} f_0^{-1}\prod_{j=1}^m
 f_j (CBA)^{-1}  f_j^{-1}
\prod_{k=1}^q
 g_{k} CBA
g_{k}^{-1} = e,
\end{equation}
where $f_j, g_k \in F(A,B,C)$, and we have incorporated (\ref{eq:V=e}).

Let $\Phi$ be the homomorphism from $F(A,B,C)$ to $F(A,B)$ given by
\begin{equation}
  \label{eq:Phii}
  \Phi(A) = A, \quad \Phi(B) = B, \quad \Phi(C) = A^{-1} B^{-1},
\end{equation}
Then  $\Phi(CBA) = \Phi((CBA)^{-1})= e$.  Applying $\Phi$ to
(\ref{eq:Vb}) and conjugating by $\Phi(f_0)^{-1}$, we obtain
\begin{equation}
  \label{eq:V2}
  A^{m-q} B^{m-q} (A^{-1} B^{-1})^{m-q}  = e.
\end{equation}
This implies $m = q$, in contradiction to (\ref{eq: m>q}).
\end{proof}

%\begin{rem}
%  The essence of the argument is the fact that $ABC\,hA^{-1}B^{-1}
%  C^{-1} h^{-1}$ is not equal to the identity for any choice of $h$;
%  equivalently, $ABC$ and $CBA$ are not conjugate.  The machinery
%  introduced in this and preceding sections is designed to reduce the
%  problem to this fact.  Note that introducing $\Phi$ allows us to
%  reduce all the $f_j$'s and $f_j^{-1}$'s to the identity.  It is likely
%  that the arguments in this particular case can be simplified, and
%  also that they can be extended to apply to more general problems.
%\end{rem}

\begin{proof}[Proof of Proposition~\ref{prop: spelling length 2}]
%Let $\Qcal_{n,p}$ denote the set product in $F(A,B,C)$ given by
%\begin{equation}\label{eq: P_np}
%    \Pcal_{n,p} = \langle A^i B^j C^k\rangle \langle C^{-1}B^{-1}A^{-1}
%\rangle^n \langle ABC\rangle^p.
%\end{equation}
Let $\bf \Qcal_{n,p}$ denote the set of  words
\begin{align}
  \label{eq:Q_cal}
  {\bf \Qcal_{n,p}} =
\{
\big(&{\bf h_0}, A^i, B^j, C^k , {\bf h_0^{-1}},\nonumber\\
&{\bf h_1}, C^{-1}, B^{-1}, A^{-1}, {\bf h_1^{-1}}, \ldots,
{\bf h_{n}}, C^{-1}, B^{-1}, A^{-1}, {\bf h_{n}^{-1}},\nonumber\\
&{\bf h_{n+1}},A,B,C, {\bf h_{n+1}^{-1}}, \ldots {\bf h_{n+p}},
A,B,C, {\bf h_{n+p}^{-1}}\big),\ \ {\bf h_t} \in \Lcal_3\}.
\end{align}
Then $\Ub \in \bf \Qcal_{n,p}$ implies that $U \in \Qcal_{n,p}$. For
$\Ub \in \bf \Qcal_{n,p}$, let
\begin{equation}
  \label{eq:E(Ub)}
  E_{n,p}(\Ub) = \lambda(\Ub) - (i + j + k - (n+p) - 2).
\end{equation}
Then Proposition~\ref{prop: spelling length 2} is equivalent to
showing that
\begin{equation}
  \label{eq:E>=0}
  E_{n,p}(\Ub) \ge 0.
\end{equation}
Following the proof of Proposition~\ref{prop: spelling length}, we
construct $X \in \Qcal_{m,q}$ with $m > 2q + 1$ such that
\begin{equation}
  \label{eq:X}
X= f_0 A^{m-q} B^{m-q} C^{m -q} f_0^{-1}\prod_{j=1}^m
 f_j (ABC)^{-1}  f_j^{-1}
\prod_{k=1}^q
 g_{k} ABC
g_{k}^{-1} = e.
\end{equation}
Apply the homomorphism $\Psi: F(A,B,C) \rightarrow F(A,B)$ defined
by $\Psi(A) = A$, $\Psi(B) = B$, $\Psi(ABC) = e$ to get that
\begin{equation}
  \label{eq:X2}
 A^{m-q} B^{m-q} ( B^{-1}A^{-1})^{m-q}  = e.
\end{equation}
This implies either $m = q$ or $m = q + 1$,  both of which
contradict $m > 2q + 1$ (as $q \ge 0$).
\end{proof}
%with
%
%[Here are the changes: We must show that D(U) >= -2, rather than
%D(U) >= 0, and we can argue by contradiction, as before.  Using
%Lemmas 1 and 2, we can construct a word W with D(W) < -2 of the form
%
%W = A^p B^p C^p (.ABC.)^(-(p+s)) (.ABC.)^s
%
%and an optimal reduced spelling in which every index in the negative
%triples is paired.  This implies, as before, that
%
%W = e
%
%(and therefore s = 0).  Then use the map Phi (modified so that
%Phi(ABC) = id, rather than Phi(BAC) = id), to conclude that
%
%Phi(W) = A^p B^p (AB)^-p = id,
%
%so that p = 0,1 (in fact, p = 0 is ruled out at an earlier stage).
%p = 0,1 implies D(W) >= -2, and we get the contradiction.]

It remains to give the proof of Lemma~\ref{lem:untrue_1}.
\begin{proof}[Proof of Lemma~\ref{lem:untrue_1}]
From Proposition~\ref{prop:pairing_in_reduced}, we have that
\begin{equation}
  \label{eq:formulab}
D_{n,p}(\Ub) = 4(n+p) - [I]_W.
\end{equation}
Suppose that $\Wcal_-(\Ub)$ contains indices which are unpaired in
$I$ (the argument for the case where $\Wcal_+(\Ub)$ has unpaired
indices is similar). Then at least one of the negative triples in
$\Ub$ contains letters with unpaired indices. Let $t, t+1, t+2$
denote the (consecutive) indices of one such negative triple.  Let
$v$ denote the number of these indices that are paired in $I$, so
that, by assumption, $0 \le v \le 2$. Denote these paired indices by
$t + \alpha_u$, where $1 \le u \le v$ and $\alpha_u = 0, 1\
\text{or} \ 2$.  Let $j_u$ denote the index to which $t + \alpha_u$
is linked. Since $I$ is reduced, we have that
\begin{equation}
  \label{eq:where_ju_is}
  j_u \in \Wcal(\Ub).
\end{equation}
Let $c_{u,1}, \ldots, c_{u,m_u}$ denote the linking sequence from
$j_u$ to $t+ \alpha_u$, so that
\begin{equation}
  \label{eq:which_are_paired}
  \{j_u, c_{u,1}\}, \{\bar c_{u,1}, c_{u,2}\}, \ldots,  \{\bar
  c_{u,m_{u-1}}, c_{u,m_u}\},  \{\bar c_{u,m_u}, t + \alpha_u\} \in
  I%, \ 1 \le u \le v
\end{equation}
(of course, if $v=0$, there are no such  linking sequences).
%% ,  Let $t + \alpha$, where $0 \le
%% \alpha \le 2$, denote an index which is paired (by assumption, there
%% are at most two such indices, hence at most two possible values of
%% $\alpha$).
%% Since $I$ is reduced, $t + \alpha$ is linked to an index
%% %-- let's call it $j_\alpha$ --
%% in  $\Wcal(\Ub)$, and there exists a linking sequence $c_{\alpha,1}, \ldots,
%% c_{\alpha,d_\alpha} \in \Ccal(\Ub)$.
%% % which links $i+ \alpha$ and $j+\alpha$ as in
%% (\ref{eq:linking_sequence}).
Let
\begin{equation}
  \label{eq:R_def_1}
  R = \{t, t+1, t+2,\} \cup \left\{\cup_{u=1}^v \{c_{u,1}, \bar c_{u,1}, \ldots, c_{u,m_u}, \bar
c_{u,m_u}\}\right\}.
\end{equation}
%
%Let $R$ denote the set of indices containing  $t$, $t+1$,
%$t+2$ and the sequences (one for each paired index) $c_{\alpha,1},
%\bar c_{\alpha,1}, \ldots, c_{\alpha,d_\alpha}, \bar
%c_{\alpha,d_\alpha}$.
Let us remove $R$ from $\Ub$ and $I$ to get $\Vb$ with pairing $J$.
Since the indices in $R\cap \Ccal(\Ub)$ occur in conjugate pairs and
$\Vb$ has one less negative triple than does $\Ub$, it follows that
$\Vb \in \bf \Pcal_{n-1,p}$.

We argue that every index in $\Ccal(\Vb)$ is paired in $J$, ie
\begin{equation}
  \label{eq:JC_paired}
  [J]_C = |\Ccal(\Vb)|
\end{equation}
(note that $J$ need not be optimal and we have not established that
$J$ is reduced, so this assertion does not follow from
Proposition~\ref{prop:pairing_in_reduced}). Take $d \in \Ccal(\Vb)$,
and let $c = \psi_R^{-1}(d)$.  Then
\begin{equation}
  \label{eq:where_c_is}
  c \in  \Ccal(\Ub) - R.
\end{equation}
Since $I$ is optimal and reduced, it follows from
Proposition~\ref{prop:pairing_in_reduced} that $c$ is paired to some
index $b$ in $I$.  We claim that $b \notin R$.  Given that this is
so, it follows from (\ref{eq:J_construct}) that $d$ is indeed paired
in $J$, to $\psi_R(b)$ in fact.

To show that $b \notin R$, let us assume the
contrary; % and derive a contradiction.
then we have that $b \in R$ is paired to $c \notin R$.  Examination
of (\ref{eq:R_def_1}) and (\ref{eq:which_are_paired}) shows that the
only indices in $R$ that are paired to indices not in $R$ are the
$c_{u,1}$'s, which are paired to the $j_u$'s.  It follows that $b =
c_{u,1}$ and $c = j_u$ for some $1 \le u \le v$.  But this would
imply that $j_u = c \in \Ccal(\Ub)$,
%that $j_u \in \Ccal(\Ub)$, in
in contradiction to (\ref{eq:where_ju_is}).

Since $J$ need not be an optimal pairing for $\Vb$, we have the
inequality (rather than equality)
\begin{multline}
  \label{eq:lambda(V)}
  \lambda(\Vb) \le L(\Vb) - 2|J| = |\Wcal(\Vb)| +
  |\Ccal(\Vb)| -  [J]_W -  [J]_C \\
=   |\Wcal(\Vb)| - [J]_W  = i + j + k + 3(n-1 + p) - [J]_W,
\end{multline}
where we have used
  (\ref{eq:JC_paired}) in the second equality. It follows that
\begin{equation}
  \label{eq:D(Vb)}
  D_{n-1,p}(\Vb) \le
  4(n-1 + p) -   [ J ]_W.
\end{equation}

From (\ref{eq:J_construct}), the only paired indices in $\Wcal(\Ub)$
which are not mapped by $\psi_R$ into paired indices in $\Wcal(\Vb)$
are those which belong to $R$ itself and those which are paired with
indices in $R$.  There are $v$ of the former -- namely the $t+
\alpha_u$'s -- and $v$ of the latter -- namely the $j_u$'s, where $1
\le u \le v$.  Therefore,
\begin{equation}
  \label{eq:<J>_W}
  [J]_W = [I]_W - 2v.
\end{equation}
From (\ref{eq:formulab}), (\ref{eq:D(Vb)}) and (\ref{eq:<J>_W}), we
conclude that
\begin{equation}
  \label{eq:D(Vb_b)}
  D_{n-1,p}(\Vb) \le D_{n,p}(\Ub) - 2(2 - v)  < 0,
\end{equation}
as $v \le 2$ and $ D_{n,p}(\Ub) < 0$, by assumption.
\end{proof}


\begin{thebibliography}{99}
\bibitem{sobolevspaces} \textsc{R.~Adams}, 1975 Sobolev Spaces
\textit{Academic Press.}
\bibitem{bcl} \textsc{H.~Brezis, J.M.~Coron, E.H.~Lieb}, 1986 Harmonic
maps with
  defects. \textit{Communications in  Mathematical Physics} \textbf{107},
649-705.
\bibitem{hopf}\textsc{R.~Brown, H.~Schirmer}, Nielsen root theory and Hopf degree theory, Pacific
J. Math. 198 (2001) 4980.
\bibitem{dg}
\textsc{P.~G.~De Gennes}, 1974 The physics of liquid crystals.
\textit{Oxford, Clarendon Press.}
\bibitem{Duzaar}\textsc{F. Duzaar \& K. Steffen}, 1989  A partial
regularity theorem for harmonic maps at a free boundary.\textit{Asymptotic Analysis} \textbf{2}, 299--343.
\bibitem{eells} \textsc{J.Eells \& B.Fuglede}, 2001 Harmonic Maps Between
Riemanninan Polyhedra. \textit{Cambridge University Press.}
%\bibitem{spivak}\textsc{M,~Spivak}, 1990 A Comprehensive Introduction to Differential Geometry, Vol. 2, 2nd ed. \textit{Berkeley, CA: Publish or Perish Press.}
\bibitem{kg}
\textsc{S.~Kitson \& A.~Geisow}, 2002 Controllable alignment of
nematic liquid crystals around microscopic posts: Stabilization of
multiple states. \textit{Applied Physics Letters} \textbf{80}, 3635
-- 3637.

\bibitem{Lin}\textsc{F.~H.~Lin \& C.~Liu}, 2001 Static and Dynamic
Theories of Liquid Crystals. \textit{Journal of Partial Differential
Equations} \textbf{14} no. 4, 289--330.

\bibitem{magnus} \textsc{W.~Magnus, A.~Karras \& D.~Solitar}, 1976
Combinatorial group theory. \textit{Dover.}
\bibitem{mrz1}\textsc{A.~Majumdar, J.~M.~Robbins \&
M.~Zyskin}, 2004 Lower Bound for Energies of Harmonic Tangent
Unit-Vector Fields on Convex Polyhedra. Lett. Math. Phys.
\textbf{70}, 169 -- 183.
\bibitem{mrz2}\textsc{A.~Majumdar, J.~M.~Robbins \&
M.~Zyskin}, 2004 Elastic energy of liquid crystals in convex
polyhedra. Journal of Physics A - Mathematics and General
\textbf{37}, L573--L580.
\bibitem{mrz3} \textsc{A.~Majumdar, J.~M.~Robbins \&
M.~Zyskin}, 2006 Elastic energy for reflection-symmetric topologies.
\textit{Journal of Physics A} \textbf{39}, 2673 -- 2687.
\bibitem{thesis} \textsc{A.~Majumdar}, 2006 Liquid crystals and
tangent unit-vector fields on polyhedral geometries, Ph.D. thesis.

\bibitem{mrz4} \textsc{A.~Majumdar, J.~M.~Robbins \&
M.~Zyskin}, 2006 Energies of S2-valued harmonic maps on polyhedra
with tangent boundary conditions. \textit{Annales de l'Institut
Henri Poincare (C) Non Linear Analysis}. \textbf{In press.}
\bibitem{mrz5}
\textsc{A.~Majumdar, C.~J.~P.~Newton, J.~M.~Robbins \& M.~Zyskin},
2007 Topology and bistability in liquid crystal devices.
\textit{Phys. Rev. E} \textbf{75}, 051703--051714.

\bibitem{mrz6}
\textsc{A.~Majumdar, J.~M.~Robbins \& M.~Zyskin}, 2008 Liquid
crystals and harmonic maps in polyhedral domains. in Analysis and
Stochastics of Growth Processes and Interface Models, eds Peter
Morters, Roger Moser, Mathew Penrose, Hartmut Schwetlick, and
Johannes Zimmer. Oxford University Press, pp 306 - 326
\bibitem{private communication} \textsc{Tim Riley}; private communication.
\bibitem{rz} \textsc{J.~M.~Robbins \& M.~Zyskin}, 2004 Classification of
unit-vector fields in convex polyhedra with tangent boundary
conditions. \textit{Journal of Physics A} \textbf{37}, 10609-10623.
\bibitem{spivak}\textsc{M,~Spivak}, 1990 A Comprehensive Introduction to Differential Geometry, Vol. 2, 2nd ed. \textit{Berkeley, CA: Publish or Perish Press.}
\bibitem{stewart}
\textsc{Iain W. Stewart}, 2004  The Static and Dynamic Continuum Theory of Liquid
Crystals. \textit {London, Taylor and Francis.}

\bibitem{virga}
\textsc{E.G. Virga}, 1994 Variational Theories for Liquid Crystals.
\textit {London, Chapman and Hall.}

%\bibitem{private communication} \textsc{Tim Riley}; private communication.


% \setcounter{temp}{\value{enumiv}}
\end{thebibliography}
\end{document}